\documentclass[12pt]{article}
\usepackage{amsmath,amssymb,epsf,epsfig,amsthm,times,ifthen,epic,psfrag,bbm, mathabx}
\usepackage{pgfplots}
\usepackage{enumerate}
\usepackage{fancyhdr}
\usepackage{setspace} 
\usepackage{lastpage}
\usepackage{tikz}
\usepackage[displaymath, mathlines]{lineno}
\usetikzlibrary{patterns}

\def\submitteddate{March 20, 2026}

\renewcommand{\baselinestretch}{1}
\usepackage{hyperref}
\hypersetup{
    colorlinks=true,
    citecolor=black,
    filecolor=black,
    linktoc=all,     %set to all if you want both sections and subsections linked
    linkcolor=blue,
    pdfpagemode=UseNone, % This prevents the bookmarks sidebar from popping up.
    urlcolor=black
}

\begin{document}

\newcommand{\creationtime}{\today \ \ @ \theampmtime}

\pagestyle{fancy}
\renewcommand{\headrulewidth}{0cm}
%\chead{\footnotesize{Super-regular matrices and MDS codes}}
%\rhead{\footnotesize{\reviseddate}}
%\rhead{\footnotesize{\creationtime}}
\rhead{\footnotesize{\submitteddate}}
%\lhead{\footnotesize{Appuswamy, Bazzani, Congero, Connelly, Ekaireb, Zeger}}
\lhead{\footnotesize{\textit{Probability of super-regular matrices and MDS codes over finite fields}}}
\cfoot{Page \arabic{page} of \pageref{LastPage}} % Need to include package ``lastpage''

\lfoot{\footnotesize{
Sections: 
\ref{sec:introduction}
%\ref{sec:background}
\ref{sec:MDS-codes}
\ref{sec:Contiguous-Super-Regular-Square-Matrices}
\ref{sec:small-matrices}
%\ref{sec:3x3}
%\ref{sec:4x4}
\ref{sec:Conjecture}
\ref{sec:Appendix}
\nameref{references}
}}

\renewcommand{\qedsymbol}{$\blacksquare$} % Used to control qed generated by \end{proof}

%\numberwithin{equation}{section}    % Make sure this line comes before the next line redefining \theequation
%\renewcommand{\theequation}{\arabic{section}.\arabic{equation}}

\newtheorem{theorem}              {Theorem}     [section]
\newtheorem{lemma}      [theorem] {Lemma}
\newtheorem{corollary}  [theorem] {Corollary}
\newtheorem{proposition}[theorem] {Proposition}
\newtheorem{conjecture} [theorem] {Conjecture}

\theoremstyle{remark}
\newtheorem{remark}     [theorem] {Remark}
\newtheorem{algorithm}  [theorem] {Algorithm}

\theoremstyle{definition}         
\newtheorem{definition} [theorem] {Definition}
\newtheorem{example}    [theorem] {Example}
\newtheorem*{claim}  {Claim}
\newtheorem*{notation}  {Notation}
%%%%%%%%%%%%%%%%%%%%%%%%%%%%%%%%%%%%%%%%%%%%%%%%%%
\renewcommand{\complement}{c}
\newcommand{\transpose}{t}
\newcommand{\E}{\mathbb{E}}
\newcommand{\indicator}[1]{\mathbbm{1}_{#1}}

\newcommand{\minij}{\text{min}(i,j)}

\newcommand{\field}[1]{\mathbb{F}_{#1}}
\newcommand{\fq}{\field{q}}
\newcommand{\R}{\mathbb{R}}
\newcommand{\Z}{\mathbb{Z}}
\newcommand{\Q}{\mathbb{Q}}
\newcommand{\N}{\mathbb{N}}
\newcommand{\floor}[1]{\left\lfloor #1 \right\rfloor}
\newcommand{\ceil}[1]{\left\lceil #1 \right\rceil}

\newcommand{\rank}[1]{\mbox{\textit{rank}}\!\left(#1\right)}

\newcommand{\Var}{\operatorname{Var}}
\newcommand{\Cov}{\operatorname{Cov}}

\newcommand{\Hide}[1]{}

\DeclareRobustCommand{\stirling}{\genfrac\{\}{0pt}{}}

\newcommand{\dsum}{\displaystyle\sum}

% Change the symbol \phi in the next 6 definitions...
\newcommand{\CountMDS} [3]{C^{\text{MDS}}_{#2,#3}(#1)} % q,n,k Number of [n,k] MDS codes over GF(q). \gamma
\newcommand{\CountSR}  [2]{C^{\text{SR}}_{#2}(#1)}  % q,k Number of kxk super-regular GF(q) matrices 
\newcommand{\CountCSR} [2]{C^{\text{CSR}}_{#2}(#1)} % q,k Number of contiguous kxk super-regular GF(q) matrices
\newcommand{\ProbMDS}  [3]{P^{\text{MDS}}_{#2,#3}(#1)} % q,n,k Prob [n,k] code over GF(q) is MDS 
\newcommand{\ProbSR}   [2]{P^{\text{SR}}_{#2}(#1)}  % q,k Prob kxk GF(q) matrix is super-regular 
\newcommand{\ProbCSR}  [2]{P^{\text{CSR}}_{#2}(#1)} % q,k Prob kxk GF(q) matrix is contiguous super-regular 

\newcommand{\UltraProb}[1]{p_{\text{SR}}(#1)}
\newcommand{\StrongProb}[1]{p_{\text{CSR}}(#1)}
\newcommand{\Inflate}{1.0}
\newcommand{\To}{\to}

%%%%%%%%%%%%%%%%%%%%%%%%%%%%%%%%%%%%%

\begin{titlepage}

\setcounter{page}{1}

\title{Probability of super-regular matrices\\
        and MDS codes over finite fields%
\thanks{
\indent 
R. Appuswamy is with IBM Research, San Jose, CA;
M. Bazzani is with the Department of Mathematics,
University of California, San Diego, La Jolla, CA;
S. Congero is in San Diego, CA;
J. Connelly is with Seagate Technology, Minneapolis, MN;
M. Ekaireb is with Canyon Crest Academy,
San Diego, CA;
K. Zeger is with the Department of Electrical and Computer Engineering, 
University of California, San Diego, La Jolla, CA. \ \ 
This work used 
% [resource-name] 
CPU time on the Expanse supercomputer at 
% [resource provider] 
the San Diego Supercomputer Center
through allocation 
%[allocation number] 
ELE250044
from the Advanced Cyberinfrastructure
Coordination Ecosystem: Services \& Support (ACCESS) program, which is
supported by U.S. National Science Foundation grants 
\#2138259, \#2138286, \#2138307, \#2137603, and \#2138296.
% (rathnam@ucsd.edu, \ zeger@ucsd.edu)
}}

\author{
Rathinakumar Appuswamy \and 
Marco Bazzani \and 
Spencer Congero \and 
Joseph Connelly \and 
Matthew Ekaireb \and 
Kenneth Zeger}

% \date{
% \textit{IEEE Transactions on Information Theory\\
% %Version: \version\\
% %Created: \creationtime \\
% Submitted: \submitteddate
% %Revised:  \reviseddate 
% %Revised Draft:   \creationtime 
% % \today \\
% % \large{Draft :  *** Do Not Distribute ***} 
% }}

\maketitle

\begin{abstract}
Let $C$ be an $[n,k]$ linear code chosen uniformly at random 
over a finite field $\fq$ of size $q$.
The following asymptotic probability
of $C$ being maximum distance separable (MDS)
as $q,n,k\To\infty$ is known:
If $\frac{1}{q}\binom{n}{k} \To 0$, 
then $P(C\ \text{is MDS}) \To 1$.
We demonstrate that this growth rate is in fact a threshold by proving:
If $\frac{1}{q}\binom{n}{k} \To \infty$, 
then $P(C\ \text{is MDS}) \To 0$.
A matrix is (\textit{contiguous}) \textit{super-regular} 
if all of its (contiguous) square submatrices are nonsingular.
The above results imply that 
for any
$k \times k$ matrix $A$ chosen uniformly at random over $\fq$,
the following hold:
If $\frac{4^k/\sqrt{k}}{q} \To 0$, 
then $P(A \text{ is super-regular}) \To 1$.
If $\frac{4^k/\sqrt{k}}{q}\To \infty$, 
then $P(A \text{ is super-regular}) \To 0$.
We also obtain the following asymptotic probabilities
for two variations of the above questions:
If $\frac{1}{q}\binom{n}{k} \To \lambda \in (0,\infty)$
and $k/n\To 0$,
then $P(C\ \text{is MDS}) \To e^{-\lambda}$.
If $\frac{k^3/3}{q} \To \lambda \in [0,\infty]$,
then $P(A \text{ is contiguous super-regular}) \To e^{-\lambda}$.
The number of super-regular  $3\times 3$ matrices is known to be a polynomial in $q$.
We show that the number of contiguous super-regular $3\times 3$ matrices 
is also a polynomial.
Finally, for $4\times 4$ matrices,
we show that
the number of super-regular matrices is not a polynomial,
nor even a quasi-polynomial of period less than $7$,
whereas our experimental evidence suggests that
the number of contiguous super-regular matrices is a polynomial.
\end{abstract}

\thispagestyle{empty}
\end{titlepage}

\tableofcontents
%\linenumbers

\clearpage

\section{Introduction}
\label{sec:introduction}

Random matrices are a widely studied topic of 
linear algebra and probability theory and have numerous 
applications in engineering and physics.
There are two main types of random matrices 
that have been studied,
namely those over infinite fields and those over finite fields.

For matrices with 
continuously distributed elements over an infinite field such as
the real or complex numbers,
prior studies have typically focused on the eigenvalues
as the matrices become asymptotically large
(e.g., see the books
\cite{Akemann-Baik-DiFrancesco-book-2011},
\cite{Anderson-Guionnet-Zeitouni-book-2010},
\cite{Mehta-book-2004},
\cite{Potters-Bouchaud-book-2020}).
If a matrix's elements are iid continuous random variables,
then its determinant is almost surely nonzero,
since the determinant is a polynomial function of the elements.
That is, singular matrices are unlikely in such cases.
Other studies 
dating back to the 1960s
have considered the invertibility of matrices whose
elements are real or complex discrete random variables.
(e.g., 
\cite{Komlos-1968},
\cite{Kozlov-1966}
).

Our work focuses on
matrices whose elements are chosen randomly 
from a finite field.

\subsection{Random matrices over finite fields}
\label{sec:matrices-finite-fields}

Let $\fq$ denote a finite field of size $q$.
A matrix is said to be \textit{random over $\fq$}
if its entries are chosen independently and uniformly 
at random from the elements of $\fq$.

Random matrices over finite fields have also been
an area of extensive study
(e.g.,
\cite{Cooper},
\cite{Fine-Herstein-1958},
\cite{Fulman-2000},
\cite{Fulman-2002},
\cite{Hansen-Schmutz-1993},
\cite{Kahn-Komlos-2001},
\cite{Maples-preprint},
\cite{Salmond-Grant-Grivell-Chan-2016},
\cite{Sanna-2024},
\cite{Schmutz-1995}).
The probability that a random $k \times k$ matrix over $\fq$
is nonsingular is 
\begin{align*}
\pi (q,k) 
&= \displaystyle\prod_{i=0}^{k-1} \frac{q^k - q^i}{q^k}
= \prod_{i=1}^{k} \left( 1 - q^{-i} \right)
\end{align*}
which is well known
(e.g., \cite{Cooper}). % pp. 1-2
This is because if the first $i$ rows of a $k\times k$ matrix over $\fq$ 
are linearly independent, 
then their span has size $q^i$,
so the number of choices for the $(i+1)$th row 
to retain independence is $q^k - q^i$.
It is easy to see that $\pi (q,k)$
is monotone increasing in $q$ and decreasing in $k$.
It is known that 
$\displaystyle\lim_{k \To \infty} \pi (q,k) \in (0,1)$
for any fixed $q$.
For example, 
$\displaystyle\lim_{k \To \infty} \pi (2,k) \approx 0.288$.
It is also known 
(e.g., Lemma~\ref{lem:nonsingular-prob}) that
$\displaystyle\lim_{q \To \infty} \pi (q,k) = 1$
for any fixed $k$,
and
$\displaystyle\lim_{q,k \To \infty} \pi (q,k) = 1$,
so the probability that a random $k\times k$ matrix over $\fq$
is nonsingular is asymptotically one,
no matter how
the field size $q$ and the matrix dimension $k$ both tend to infinity.

A \textit{submatrix} of a given matrix is a matrix formed by
deleting possibly some but not all of its rows and
possibly some but not all of its columns.
A \textit{contiguous submatrix} 
further requires that the non-deleted rows and columns each be consecutive%
\footnote{We will consistently use the terminology ``non-contiguous'' to mean 
not necessarily contiguous.}.

An alternative to determining if a random matrix is nonsingular
is to ask if certain collections of submatrices are nonsingular.
As an example,
the probability that all $1\times 1$ 
submatrices of a $k\times k$ random matrix over $\fq$ are nonsingular
(i.e., each entry is non-zero) is 
$(1 - q^{-1})^{k^2}$,
which tends to $e^{-\lambda}$ 
if 
$k^2/q\To \lambda\in [0,\infty]$
as $q$ and $k$ tend to infinity.
By assumption,
the $1\times 1$ submatrices are independent,
making the calculation of their collective nonsingularity easy.

In contrast,
our study considers collections containing overlapping submatrices,
which are not independent,
making analysis more difficult.
A matrix 
is called \textit{super-regular} if all of its square submatrices are nonsingular,
and 
is called \textit{contiguous super-regular} if all of its contiguous square submatrices are nonsingular.
Super-regular matrices have been studied in various contexts
(e.g.,
\cite{Almeida-Napp-2020}, 
\cite{Almeida-Napp-Pinto-2013}, 
\cite{Gluesing-Rosenthal-Smarandache-2006}, % Def 3.3 on pg. 588
\cite{Lovett-MDS-conjecture},
\cite{Maturana-Rashmi-2012},
\cite{Roth-book}, % p. 335
\cite{Roth-Lempel-1989}, % 1st paragraph of pg. 1314.
\cite{Roth-Seroussi-1985},
\cite{Yildiz-Hassibi-MDS-conjecture}).
One known class of $k\times k$ super-regular matrices 
are the \textit{Cauchy matrices} 
\cite[p. 168]{Roth-book},
whose $(i,j)^{th}$ elements are $(x_i - y_j)^{-1}$,
where the $2k$ quantities
$x_1, \dots, x_k, y_1, \dots, y_k$ are
distinct elements of a field $\fq$.
Thus, super-regular matrices 
exist for all $k$,
whenever $q\ge 2k \ge 4$.

% --------------------------------------------------------------------------
One goal of our work is
to analyze the asymptotic probability 
that a random $k\times k$ matrix over $\fq$
is super-regular
(or alternatively, contiguous super-regular).
In particular, 
in this paper we examine conditions
on how fast the field size $q$ should grow toward infinity 
relative to $k$
in order to guarantee 
(with high probability) 
whether or not
all submatrices of a $k\times k$ matrix are nonsingular.
We also consider the less restrictive case 
where nonsingularity is only required for contiguous square submatrices.

Lemma~\ref{lem:fullColRankProb}
notes that the probability that a random square matrix of any size over $\fq$ is singular
is close to $1/q$ as $q$ grows.
If $S$ is a set of independent square submatrices of a given random matrix over $\fq$,
then the probability of all such submatrices being nonsingular 
is approximately
$\left( 1 - \frac{1}{q}\right)^{|S|} \sim e^{-\lambda}$
as $\frac{|S|}{q} \To \lambda$.
It turns out
(Theorem~\ref{thm:main-contiguous})
for contiguous super-regular square matrices,
even with dependent contiguous submatrices,
this asymptotic probability is in fact correct.
In the non-contiguous case,
we show 
(Corollary~\ref{cor:main-superregular})
this behavior holds for the cases of $\lambda=0,\infty$
and give experimental evidence 
(Section~\ref{sec:Conjecture})
that it is plausible for more general $\lambda$.

Overall,
for both the non-contiguous and contiguous submatrix cases,
we obtain limiting probabilities of random square matrices 
over a finite field being super-regular of the form
\begin{align*}
P(\text{super-regular}) \To
e^{- \frac{\text{number of nonsingular submatrix constraints}}
          {\text{field size}}} 
\end{align*}
as the field size and matrix size tend to infinity.
In the non-contiguous case,
the number of submatrices constrained to be nonsingular is 
$\displaystyle\sum_{i=1}^k\binom{k}{i}^2 = \binom{2k}{k} - 1 
 \sim \frac{4^k}{\sqrt{k}}$
and in the contiguous case
the number of submatrices constrained to be nonsingular is 
$\displaystyle\sum_{i=1}^k (k-i+1)^2 = \sum_{i=1}^k i^2
 \sim \frac{1}{3}k^3$.
The results reveal a 
probability thresholding relationship in growth rates between
the matrix dimensions and the field size.

\subsection{MDS Codes}

The probability that a random square matrix is super-regular
is closely related to a more general question from coding theory.

%-----------------------

A $q$-ary $[n,k]$ \textit{linear code} 
is a subspace of $\fq^n$ of size $q^k$.
A \textit{generator matrix} for a $q$-ary $[n,k]$ linear code
is a $k\times n$ matrix over $\fq$ 
whose row span is the code.
Thus, a generator matrix must have full row rank $k$
in order for
its span to have size $q^k$.
A \textit{parity check matrix} for a $q$-ary $[n,k]$ linear code
is an $(n-k)\times n$ matrix over $\fq$ whose
null space is the code.
A generator matrix is \textit{systematic} if it can be written with a $k\times k$
identity matrix on the left as $[I_k | A]$ 
and the corresponding systematic parity check matrix 
is $[-A^t | I_{n-k}]$,
with an $(n-k)\times (n-k)$ identity matrix on the right.
In practice, a length-$k$ input vector of $q$-ary 
information symbols left-multiplies
the generator matrix to produce a length-$n$ output vector of $q$-ary 
channel symbols for transmission
across a noisy channel,
and a parity check matrix multiplies a received length-$n$ vector of $q$-ary symbols
from a channel in order to estimate the $k$ originally transmitted information symbols.

The \textit{minimum distance} $d$ of a code is the 
smallest number of corresponding components 
that differ between any two distinct elements of a code.
Not all values of $n,k,q,d$ are possible for linear codes.
The existence (or not) of linear codes for various instances of these values has been
a topic of interest for many decades.
The ``Singleton bound'' for linear codes
is due to 
Komamiya~\cite{Komamiya-1953} in 1953, 
Joshi~\cite{Joshi-1958} in 1958, and 
Singleton~\cite{Singleton-1964} in 1964,
and upper bounds the minimum distance $d$ as
$d \le n-k+1$
for all linear $[n,k]$ codes over a field of size $q$.

This bound is of particular interest when it is met with equality.
A $q$-ary $[n,k]$ linear code is \textit{maximum distance separable} (MDS)
if its minimum distance is $d=n-k+1$.
We say that a generator matrix (or a parity check matrix) \textit{is MDS}
if its corresponding linear code is MDS.
Some connections between an MDS code
and the nonsingularity of various submatrices 
of its generator or parity check matrix are known,
and are summarized in the next lemma,
which is a combination of
Corollary 3 % p. 319
and
Theorem 8   % p. 321
in Chapter 11
of the MacWilliams-Sloane textbook~\cite{MacWilliams-Sloane-book}.

\begin{lemma}
Let $C$ be an $[n,k]$ linear code over $\fq$ 
with generator matrix $G$, 
systematic generator matrix $[I|A]$,
and
parity check matrix $H$.
Then the following are equivalent:
\begin{itemize}
\setlength\itemsep{-0.15cm}
\item [(a)] $C$ is MDS.
%           G has dimensions k x n
\item [(b)] Any $k$ columns of $G$ are linearly independent.
\item [(c)] $A$ is super-regular.
%           H has dimensions (n-k) x n
\item [(d)] Any $n-k$ columns of $H$ are linearly independent.
\end{itemize}

\label{lem:MDS-conditions}
\end{lemma}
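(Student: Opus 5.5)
I will prove the chain of equivalences (a)$\Leftrightarrow$(b), (a)$\Leftrightarrow$(d), and (b)$\Leftrightarrow$(c). Throughout I assume $1\le k\le n-1$, so that both $A$ and $H$ are nonempty. All arguments use only the definitions of minimum distance, generator matrix and parity check matrix. Since the code is linear, $d$ equals the minimum Hamming weight of a nonzero codeword, and I will work with weights.

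For (a)$\Leftrightarrow$(d), note that a nonzero vector $x$ is a codeword iff $Hx^t=0$. Such an $x$ of weight $w$ is a nontrivial linear dependency among the $w$ columns of $H$ indexed by its support. Hence $d\ge n-k+1$ iff no $n-k$ (or fewer) columns of $H$ are dependent. Together with the Singleton bound $d\le n-k+1$, this gives $C$ MDS iff any $n-k$ columns of $H$ are linearly independent.

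For (a)$\Leftrightarrow$(b), I argue directly. Suppose some $k$ columns of $G$, indexed by a set $S$, are dependent. Then there is a nonzero $u\in\fq^k$ with $uG$ vanishing on $S$. Since $G$ has full row rank, $uG\neq 0$, and its weight is at most $n-k$, so $C$ is not MDS. Conversely, suppose a nonzero codeword $uG$ has weight at most $n-k$. Then it vanishes on at least $k$ coordinates, and choosing any $k$ of these gives $k$ columns with $u\,G_S=0$, that is, dependent columns. Since $u\ne0$, this contradicts (b).

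For (b)$\Leftrightarrow$(c), which is the step needing the most care, I apply (b) to the particular generator matrix $G=[I_k|A]$. This is legitimate because (b) does not depend on the choice of generator matrix: any two generator matrices differ by left multiplication by an invertible $k\times k$ matrix, which preserves the independence of column subsets. Choose $k$ columns, of which $k-i$ come from $I_k$ (unit vectors $e_r$ for $r\in R$) and $i$ come from $A$ (column set $J$). Expanding the determinant along the unit-vector columns, the determinant of the $k\times k$ matrix $[\,e_R \mid A_{\cdot,J}\,]$ equals $\pm\det A_{\bar R,J}$, where $\bar R$ is the complement of $R$, of size $i$. As $R$ ranges over all $(k-i)$-subsets and $J$ over all $i$-subsets of the $n-k$ columns of $A$, with $1\le i\le\min(k,n-k)$, the pair $(\bar R,J)$ ranges over every $i\times i$ square submatrix of $A$. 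The case $i=0$ gives $I_k$, which is always nonsingular. Therefore every $k$-subset of columns of $G$ is independent iff every square submatrix of $A$ is nonsingular, which is exactly (c).

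The main obstacle is the sign and index bookkeeping in the Laplace expansion. I will handle it by permuting the columns so that the unit vectors come first. The resulting block matrix has the form $\begin{pmatrix} I & * \\ 0 & A_{\bar R,J}\end{pmatrix}$ after the corresponding row permutation, so its determinant is $\pm\det A_{\bar R,J}$.
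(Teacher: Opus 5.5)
Your proof is correct. The paper gives no proof of its own and simply cites Corollary~3 and Theorem~8 in Chapter~11 of MacWilliams--Sloane; your argument is essentially that standard textbook proof. It matches the cited route in both parts: dependent columns of $G$ or $H$ correspond to nonzero codewords of weight at most $n-k$, combined with the Singleton bound, and each $k\times k$ minor of $[I|A]$ reduces by a block-triangular argument to $\pm$ an $i\times i$ minor of $A$.
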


% ---------------------------------------------------------------------------
% MDS Conjecture
The \textit{MDS Conjecture} 
(e.g., see Hirschfeld~\cite[Conjecture 5.1]{Hirschfeld-2003}) % p. 8
originated in
1955 by Segre~\cite{Segre-1955}
and states that any $[n,k]$ MDS code over $\fq$ satisfying $k\le q$
must also have $n\le q+1$,
unless $q$ is even and $k\in\{3,q-1\}$,
in which case $n\le q+2$.
As an example,
$[n,k]$ Reed-Solomon codes over $\fq$ are MDS
whenever $n=q-1$
(e.g., \cite{MacWilliams-Sloane-book}). % p. 317

Instead of focusing on the existence of MDS codes for various values of
the parameters $n,k,q$,
we examine the prevalence of such codes,
namely their probability of occurring when linear codes are chosen randomly.
Our results 
(e.g., Theorem~\ref{thm:main-MDS}(a))
demonstrate that the frequency of occurrence of MDS codes tends to zero when the
field size $q$ grows slowly compared to $\binom{n}{k}$.

We note that another conjecture known as the 
\textit{GM-MDS Conjecture}
upper bounds $n$ for a given $q$ and $k$ 
and a prescribed set of zero entries for MDS generator matrices
\cite{Dau-Song-Dong-Yuen-2013},
\cite{Dau-Song-Dong-Yuen-2014}.
This conjecture does not directly relate to our results.

% ---------------------------------------------------------------------------

%
By a \textit{random $q$-ary linear $[n,k]$ code} 
we mean an $[n,k]$ linear code chosen uniformly at random from
among all $q$-ary linear $[n,k]$ codes.

We will examine the asymptotic probability that a 
randomly chosen linear code is MDS 
as $n$, $k$, and $q$ all tend to infinity,
and show that there is an asymptotic relationship 
between these three parameters
that acts as a threshold for determining whether 
such probability goes to zero or one.

For our asymptotic results,
Lemma~\ref{lem:multiple-defs-of-code} 
allows the use of alternative conditions on MDS codes.
For example,
one can use a
\textit{random $k\times n$ generator matrix from $\fq$ in systematic form}, 
that is,
choose the elements of
the $k\times (n-k)$ non-identity portion of 
the generator matrix iid uniformly from $\fq$.
Alternatively, one can choose a 
\textit{random $k\times n$ matrix from $\fq$}, 
that is, 
choose all $kn$ elements iid uniformly from $\fq$.
In this case, however,
not all choices yield a generator matrix since the matrix may not have full row rank.
However, the probability of obtaining such a matrix with row rank at most $k-1$
diminishes to zero as $q\To\infty$ 
so it becomes a negligible portion of the chosen  matrices
(also by Lemma~\ref{lem:multiple-defs-of-code}).

We demonstrate bounds on the probability that a random linear $[n,k]$
code is MDS and then specialize the results to the case where
$n=2k$ to obtain asymptotic probabilities for 
random $k\times k$ matrices to be super-regular.

%-----------------------

Throughout this paper,
let $\CountSR{q}{k}$ 
(respectively,  $\CountCSR{q}{k}$)
be the number of $k\times k$ matrices over $\fq$ that are
super-regular 
(respectively, contiguous super-regular),
and let $\ProbSR{q}{k} = \CountSR{q}{k} / q^{k^2}$ 
(respectively,  $\ProbCSR{q}{k}  = \CountCSR{q}{k} / q^{k^2}$)
be the probability that a random $k\times k$ matrix over $\fq$ is
super-regular 
(respectively, contiguous super-regular).
Also, let $\CountMDS{q}{n}{k}$ be the number of linear $[n,k]$ 
MDS generator matrices over $\fq$,
and let $\ProbMDS{q}{n}{k} = \CountMDS{q}{n}{k} / q^{k(n-k)}$ 
be the probability that a random linear $[n,k]$ generator matrix over $\fq$ is MDS.

Some asymptotic properties of the probability $\ProbMDS{q}{n}{k}$ 
are implied from the literature.
Ghorpade and Lachaud~\cite[Lemma 5.1]{Ghorpade-Lachaud-2001}
gave a family of upper and lower bounds 
$\mathcal{B}_{2r+1} \le \CountMDS{q}{n}{k} \le \mathcal{B}_{2r}$
for $r=0,1,2,\dots$.
An asymptotic growth rate follows from this for fixed $k$ and $n$,
as $q$ grows,
namely
\begin{align*}
\ProbMDS{q}{n}{k} 
&= 1 - \left( \binom{n}{k} - 1 \right) q^{-1} + O(q^{-2}).
\end{align*}

Kaipa~\cite{Kaipa-2013} improved the accuracy of these bounds,
again in the case of fixed $k$ and $n$ while $q$ grows.
There do not appear to be similar asymptotic MDS enumeration estimates 
as all three quantities $n,k,q$ grow,
which is the focus of our work.

For some values of $n$ and $k$ the function $\CountMDS{q}{n}{k}$
is known exactly.
For example, 
$\CountMDS{q}{n}{k}$ is known
for all $n$ if $k=1,2$,
and for all $n\le 9$ if $k=3$
(e.g., see the references in \cite{Kaipa-2013}).
One currently unknown case is $\CountMDS{q}{8}{4}$. 
This value is the same as $\CountSR{q}{4}$,
that is, the number of super-regular $4\times 4$ matrices over $\fq$.
In fact, it has not been previously reported in the literature if $\CountMDS{q}{8}{4}$
is a polynomial or even a quasi-polynomial function of $q$.

Each linear $[n,k]$ MDS code over $\fq$ corresponds to exactly one unique
systematic $k\times n$ generator matrix over $\fq$.
The total number of $k\times (n-k)$ matrices over $\fq$ whose left-most $k\times k$
submatrix is the identity is
$q^{k(n-k)}$.
Thus the probability that a random $k\times n$ matrix in systematic form
is an MDS generator matrix is upper and lower bounded 
via~\cite{Ghorpade-Lachaud-2001}  as 
$\mathcal{B}_{2r+1} q^{-k(n-k)} \le 
 \ProbMDS{q}{n}{k} \le 
 \mathcal{B}_{2r} q^{-k(n-k)}$.

If we take $n=2k$,
then $\CountMDS{q}{2k}{k} = \CountSR{q}{k}$
by Lemma~\ref{lem:MDS-conditions},
so the bounds become
$\mathcal{B}_{2r+1} q^{-k^2} \le \ProbSR{q}{k} \le \mathcal{B}_{2r} q^{-k^2}$.
It can be shown in this case that 
$\mathcal{B}_1q^{-k^2}\To 1$ 
(and thus $\ProbSR{q}{k}\To 1$)
whenever 
both $k,q\To\infty$ and
$\frac{4^k/\sqrt{k}}{q} \To 0$.
On the other hand,
it is not known if this rate is tight.
We investigate what happens to the probability $\ProbSR{q}{k}$
when $\frac{4^k/\sqrt{k}}{q} \To \lambda \in (0,\infty]$.
It does not appear straightforward 
to answer this question
using \cite{Ghorpade-Lachaud-2001}.
In fact, direct calculation shows that neither of the upper bounds
$\mathcal{B}_0q^{-k^2}$ 
nor $\mathcal{B}_2 q^{-k^2}$ 
generally tends to $0$ 
when $\frac{4^k/\sqrt{k}}{q}\To \infty$.
For each fixed $k$ and $q$,
the upper and lower bounds in \cite{Ghorpade-Lachaud-2001}
eventually coincide when $r>\binom{2k}{k}$,
but their complexity of calculation appears unwieldy,
even for $r=2$.

\subsection{Our Contributions}

Our results on super-regular matrices are of two types: 
asymptotic probabilities with
thresholds or exact limits
(Sections~\ref{sec:MDS-codes} 
and \ref{sec:Contiguous-Super-Regular-Square-Matrices}),
and enumerations for small cases,
i.e., $3\times 3$ and $4\times 4$ matrices
(Section~\ref{sec:small-matrices}).

In Sections~\ref{sec:MDS-codes}
and~\ref{sec:Contiguous-Super-Regular-Square-Matrices}
we focus on the asymptotic behavior
of the probability that a random $k \times k$ matrix over $\fq$
is super-regular (or contiguous super-regular)
as both the matrix and field sizes grow.
We show
in Theorem~\ref{thm:main-MDS}
that if $C$ is a linear $[n,k]$ code chosen uniformly at random
over a field of size $q$,
where $n,q \To\infty$,
then the following three results hold:
if $\frac{1}{q}\binom{n}{k}\To 0$, then $P(C \text{ is MDS}) \To 1$;
if $\frac{1}{q}\binom{n}{k}\To \infty$, then $P(C \text{ is MDS}) \To 0$;
if $\frac{1}{q}\binom{n}{k}\To \lambda\in (0,\infty)$ and
      $k/n \To 0$, then $P(C \text{ is MDS}) \To e^{-\lambda}$.

Theorem~\ref{thm:main-MDS} shows that there is an asymptotic threshold at
$q \sim \binom{n}{k}$.
If $q$ asymptotically dominates this threshold
then almost all such matrices are MDS,
and
if $q$ is asymptotically dominated by the threshold
then vanishingly few $k \times n$ matrices over $\fq$ are MDS.
With the extra restriction that
$k=o(n)$, we can also characterize the intermediate cases where
$\binom{n}{k}/q \To \lambda \in (0, \infty)$.

These results then imply
in Corollary~\ref{cor:main-superregular}
that if $A$ is a random $k \times k$ matrix over a field of size $q$,
where $q, k \To \infty$,
then the following two results hold:
if $\frac{4^k/\sqrt{k}}{q} \To 0$,
 then $P(A \text{\ is super-regular}) \To 1$;
if $\frac{4^k/\sqrt{k}}{q} \To \infty$,
 then $P(A \text{\ is super-regular}) \To 0$.
We provide an elementary proof of the first of these two results
as an alternative to that deducible
from \cite{Ghorpade-Lachaud-2001}.

We also consider the asymptotics when only the contiguous submatrices
of a random matrix are required to be nonsingular.
Our main result
in Theorem~\ref{thm:main-contiguous}
is that
$\ProbCSR{q}{k}\To e^{-\lambda}$
for any $\lambda\in [0,\infty]$
whenever
$k,q\To\infty$ and
$\frac{\frac{1}{3}k^3}{q} \To \lambda$.
Similar to the (non-contiguous) super-regular case,
this result shows that
an asymptotic probability threshold holds
for contiguous super-regular matrices as well,
but with the growth rate $q \sim \frac{1}{3} k^3$.

The asymptotic threshold required in Theorem~\ref{thm:main-MDS}
(and Corollary~\ref{cor:main-superregular})
is $\frac{1}{q}\binom{n}{k}$
and the threshold required in
Theorem~\ref{thm:main-contiguous}
is $\frac{k^3/3}{q}$.
These thresholds have the common feature that both count the number of
submatrices of the original random matrix that must be nonsingular.
That is, in both cases,
the asymptotic probability threshold for a matrix to be super-regular
requires the field size to grow proportionally
to the number of submatrices that are constrained.

% ---------------------------------------------------------------------------

In Sections~\ref{sec:3x3} and~\ref{sec:4x4}
we consider nonasymptotic results for small matrices.
The numbers of super-regular
and contiguous super-regular
$1\times 1$ and $2\times 2$ matrices over $\fq$ are polynomial functions of $q$,
and the number of $3 \times 3$ super-regular matrices is known
(e.g., \cite[Proposition 3.2]{Skorobogatov-1992})
to be $\CountSR{q}{3} = (q-1)^5 (q-2)(q-3)(q^2-9q+21)$.
We show
in Theorem~\ref{thm:3x3}
that the number of $3 \times 3$ contiguous super-regular matrices over $\fq$
is $\CountCSR{q}{3} = (q-1)^5 (q-2)(q-3)(q^2-4q+5)$.
Therefore, $\CountCSR{q}{3} > \CountSR{q}{3}$
whenever $q \ge 4$,
since $q^2 - 4q + 5 > q^2 - 9 q + 21$ whenever $q \ge 4$.
Asymptotically,
both $\CountSR{q}{3}$ and $\CountCSR{q}{3}$
grow as $q^9$ as $q\To\infty$,
so the probability that a $3\times 3$ random matrix over $\fq$
is super-regular
(and therefore also contiguous super-regular)
tends to one as the field size grows.

For $4\times 4$ matrices,
the analysis appears to be significantly more complicated.
An open question in coding theory is to determine how many $[8,4]$ MDS
codes exist over $\fq$,
which is equivalent to determining the number
$\CountSR{q}{4}$
of $4\times 4$ super-regular matrices over $\fq$.
More broadly,
it is not known even if the number of $4\times 4$ (contiguous or not) super-regular
matrices is monotone increasing in the field size.

In Section~\ref{sec:computer-simulation},
we computationally determine the quantity $\CountSR{q}{4}$
for all fields of size at most $71$
(and also for sizes $83$ and $97$).
Then, using the numerically obtained results,
we show in Theorem~\ref{thm:not-quasi} that
the number of $4\times 4$ super-regular matrices over $\fq$
cannot be a quasi-polynomial of period less than $7$,
which in turn implies it cannot be a polynomial function of $q$.

For the contiguous case,
we computationally determine the quantity $\CountCSR{q}{4}$
for all fields of size at most $67$.
We discovered a polynomial of degree $16$
that explains all of our observed computational data.
Our Conjecture~\ref{conj:poly}
asserts that the polynomial is the true count $\CountCSR{q}{4}$.
Theorem~\ref{thm:4x4-poly} points out
that if $\CountCSR{q}{4}$ is indeed a polynomial,
then it must be the polynomial we found.

In Section~\ref{sec:Conjecture},
we conjecture that the $k=o(n)$ requirement in Theorem~\ref{thm:main-MDS}
can be dropped,
and then show that the convergences rates of 
$\ProbCSR{q}{k} \To e^{-\lambda}$ in
Theorem~\ref{thm:main-contiguous} 
and the rate of the conjectured convergence of
$\ProbSR{q}{k} \To e^{-\lambda}$ in
Conjecture~\ref{conj:converge}
are experimentally rapid.

Various technical lemmas are relegated to the Appendix in Section~\ref{sec:Appendix}.

\clearpage
%================================================
\section{Asymptotic probabilities of MDS codes and super-regular matrices}
\label{sec:MDS-codes}

This section analyzes the probabilities 
$\ProbMDS{q}{n}{k}$ and $\ProbSR{q}{k}$
as $q,n,k\To\infty$.

The following widely-used lemma is due to 
Arratia, Goldstein, and Gordon~\cite[Theorem 1]{Arratia-Goldstein-Gordon-1989}
(see also~\cite{Arratia-Goldstein-Gordon-1990})
and relies on the Chen-Stein Poisson approximation method.
The lemma shows that the probability that 
the sum of indicator random variables equals zero
is approximately the same as if the indicators were independent,
even when there are slight dependencies among them.
The lemma is applied to submatrices in the proof of Theorem~\ref{thm:main-MDS}(b),
where it helps control the effect of dependencies among overlapping submatrices.

\begin{lemma}
\label{lem:stein}
Let $I$ be a set and 
for each $\alpha \in I$ let $X_\alpha$ be a
Bernoulli random variable with 
$P(X_\alpha {=} 1) > 0$,
and let $D_\alpha \subseteq I$.
If
\begin{align*}
b_1 & = \sum_{\alpha \in I} \sum_{\beta \in D_\alpha} P(X_\alpha=1)P(X_\beta=1) \\
b_2 & = \sum_{\alpha \in I} \sum_{\substack{\beta \in D_\alpha \\ \beta \ne
\alpha}} P(X_\alpha {=} X_\beta {=} 1) \\
b_3 & = \sum_{\alpha \in I} 
      \E \left[ \left| 
        \E [X_\alpha - P(X_\alpha=1) \mid X_\beta : \beta \in I \setminus D_\alpha]
         \right| \right]\\
\mu &= \E\left[\sum_{\alpha \in I} X_\alpha\right],
\end{align*}
then
\begin{align}
	\left|P\left(\sum_{\alpha \in I} X_\alpha = 0\right)- e^{-\mu} \right| \le
	\min(1, \mu^{-1})(b_1 + b_2 + b_3). \label{eq:000}
\end{align}
\end{lemma}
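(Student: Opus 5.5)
This is the Arratia--Goldstein--Gordon theorem, and I would prove it by the Chen--Stein method. Write $W=\sum_{\alpha\in I}X_\alpha$, $p_\alpha=P(X_\alpha=1)$, and $\mu=\sum_\alpha p_\alpha$. Set $A=\{0\}$ and let $f=f_A$ solve the Stein equation
\begin{align*}
\mu f(j+1)-jf(j)=\indicator{\{j=0\}}-e^{-\mu},\qquad j\ge 0,
\end{align*}
with $f(0)=0$. This solution is explicit: $f(j)=(j-1)!\,\mu^{-j}\,e^{\mu}\bigl(P(Z\in A\cap\{0,\dots,j-1\})-P(Z\in A)P(Z\le j-1)\bigr)$, where $Z$ is Poisson with mean $\mu$. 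Substituting $j=W$ and taking expectations gives
\begin{align*}
P(W=0)-e^{-\mu}=\E\bigl[\mu f(W+1)-Wf(W)\bigr]=\sum_{\alpha}\E\bigl[p_\alpha f(W+1)-X_\alpha f(W)\bigr].
\end{align*}
So the task reduces to bounding each summand by the $b_i$ terms, weighted by the standard Stein factors.

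For each $\alpha$, split $W=X_\alpha+V_\alpha+U_\alpha$, where $V_\alpha=\sum_{\beta\in D_\alpha\setminus\{\alpha\}}X_\beta$ and $U_\alpha=\sum_{\beta\notin D_\alpha}X_\beta$. (If $\alpha\notin D_\alpha$, first enlarge $D_\alpha$ to include $\alpha$; this only increases $b_1$ and keeps the bound valid, so I will assume $\alpha\in D_\alpha$ throughout.) Then $X_\alpha f(W)=X_\alpha f(1+V_\alpha+U_\alpha)$, and we write
\begin{align*}
p_\alpha f(W+1)-X_\alpha f(W)
&= p_\alpha\bigl[f(W+1)-f(U_\alpha+1)\bigr]
 -X_\alpha\bigl[f(1+V_\alpha+U_\alpha)-f(1+U_\alpha)\bigr]\\
&\quad +(p_\alpha-X_\alpha)f(U_\alpha+1).
\end{align*}
Each difference of $f$ values telescopes, so with $\Delta f=\sup_j|f(j+1)-f(j)|$ we get:
\begin{itemize}
\item the first term is at most $\Delta f\,p_\alpha\,\E[X_\alpha+V_\alpha]=\Delta f\sum_{\beta\in D_\alpha}p_\alpha p_\beta$;
\item the second is at most $\Delta f\,\E[X_\alpha V_\alpha]=\Delta f\sum_{\beta\in D_\alpha,\beta\ne\alpha}P(X_\alpha=X_\beta=1)$;
\item for the third, $U_\alpha$ is measurable with respect to $\sigma(X_\beta:\beta\notin D_\alpha)$, so conditioning on that $\sigma$-field gives at most $\|f\|_\infty\,\E\bigl|\E[X_\alpha-p_\alpha\mid X_\beta:\beta\notin D_\alpha]\bigr|$.
\end{itemize}
Summing over $\alpha$ yields $|P(W=0)-e^{-\mu}|\le \Delta f\,(b_1+b_2)+\|f\|_\infty\, b_3$.

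The main step, and the expected obstacle, is the Stein-factor estimate: I need $\|f\|_\infty\le\min(1,\mu^{-1/2})$, which is at most $\min(1,\mu^{-1})\cdot$const in the relevant regime, and $\Delta f\le \mu^{-1}(1-e^{-\mu})\le\min(1,\mu^{-1})$. For $A=\{0\}$, $f$ is simple enough to handle directly. One has $f(j)=(j-1)!\mu^{-j}e^{\mu}\,e^{-\mu}\bigl(1-P(Z\le j-1)\bigr)=\frac{(j-1)!}{\mu^{j}}P(Z\ge j)$ for $j\ge1$, which equals $\sum_{i\ge j}\frac{(j-1)!\,\mu^{i-j}}{i!}e^{-\mu}$. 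This is positive and decreasing in $j$, so $\Delta f\le f(1)=(1-e^{-\mu})/\mu\le\min(1,\mu^{-1})$. For the same reason $\|f\|_\infty=f(1)\le\min(1,\mu^{-1})$. The exact $\min(1,\mu^{-1})$ constant in \eqref{eq:000} therefore drops out for both factors, which also disposes of the $b_3$ term. I would verify the monotonicity of $f$ carefully by comparing consecutive terms of this series, since that is the only place where care is needed.
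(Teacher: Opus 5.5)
Your Chen--Stein argument is correct for the case $\alpha\in D_\alpha$. It is essentially the Arratia--Goldstein--Gordon proof of their Theorem~1, which the paper cites without giving a proof of its own. The decomposition into the $b_1,b_2,b_3$ terms is the standard one. Your explicit handling of $f=f_{\{0\}}$ is also right: $f(j)=e^{-\mu}\sum_{m\ge 0}\mu^m\prod_{r=0}^{m}(j+r)^{-1}$ is termwise positive and decreasing for $j\ge 1$. Hence both $\Delta f$ and $\|f\|_\infty$ are at most $f(1)=(1-e^{-\mu})/\mu\le\min(1,\mu^{-1})$, which is slightly sharper than needed. Drop the sentence claiming $\min(1,\mu^{-1/2})$ is at most a constant times $\min(1,\mu^{-1})$. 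That is false for large $\mu$, and the factor $\min(1,\mu^{-1})$ on $b_3$ comes from your direct computation of $\|f\|_\infty$, not from it.

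The step that fails is the parenthetical reduction to $\alpha\in D_\alpha$. Enlarging $D_\alpha$ to $D_\alpha\cup\{\alpha\}$ adds $p_\alpha^2$ to $b_1$. By conditional Jensen it can only decrease $b_3$, but the net change need not be nonpositive. So the bound you get is in terms of a possibly larger quantity and does not imply the stated one.

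No argument can repair this, because the statement is false without that hypothesis. Take $I=\{\alpha\}$, $D_\alpha=\emptyset$, $P(X_\alpha=1)=p$. Then $b_1=b_2=0$, and since $X_\alpha$ is measurable with respect to the conditioning, $b_3=\E|X_\alpha-p|=2p(1-p)$. The left side is $e^{-p}-(1-p)$, and at $p=0.9$ this is about $0.307>0.18$.

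So you must assume $\alpha\in D_\alpha$ for every $\alpha$ rather than reduce to it. That is the hypothesis in Arratia--Goldstein--Gordon. It also holds in the paper's application, since each $B$ contains its own left-most column and so $B\in D_B$. With it added, your proof is complete.
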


Each set $D_\alpha$ in Lemma~\ref{lem:stein} is intended
to identify those elements $\beta$ of $I$ for which
$X_\alpha$ statistically depends on $X_\beta$.

We will exploit Lemma~\ref{lem:stein} in order to prove Theorem~\ref{thm:main-MDS}(c).
In Lemma~\ref{lem:stein}, 
we will set $I$ to be the set of
all $k \times k$ submatrices of a random $k\times n$ matrix $A$,
and for each $B \in I$ 
we will let $X_B$ be the indicator that $B$ is singular, i.e.,
\begin{align*}
X_B &= \begin{cases}
         1 & \text{if}\ B\ \text{is singular}\\
         0 & \text{if}\ B\ \text{is nonsingular}.
       \end{cases}
\end{align*}
Note that $|I| = \binom{n}{k}$ 
and the probability that any particular square submatrix of $A$ is singular
is asymptotically equal to $q^{-1}$ 
(by Lemma~\ref{lem:fullColRankProb}).
Then the event that $A$ is MDS is the event that 
all of its $k\times k$ submatrices are nonsingular, namely that
$\displaystyle\sum_{B \in I} X_B = 0$. 
The mean of this sum of indicators is
\begin{align*}
\E\left[\displaystyle\sum_{B \in I} X_B\right] 
&= |I| \cdot P(\text{A random}\ k\times k\ 
               \text{matrix over}\ \fq \ \text{is singular})\\
&\sim \binom{n}{k}\cdot q^{-1} 
\To \lambda 
\end{align*}
so Lemma~\ref{lem:stein} will give 
$P(A \text{ is MDS}) \approx e^{-\lambda}$ 
if we can show the upper bound in $\eqref{eq:000}$ tends to $0$.

In Lemma~\ref{lem:stein} we try to choose
$D_B$ so that \eqref{eq:000} is as small as possible. 
If $D_B$ excludes events in $I$ that strongly influence $X_B$, then
$b_3$ will end up being large. 
However, if $D_B$ is made too big, then 
$b_1, b_2$ will be big as well. 

For our case, we will choose
\begin{align*}
D_B = \{ B' \in I : B' \text{ contains the left-most column of }B\}.
\end{align*}
When one uses Lemma~\ref{lem:stein}, 
the typical choice of $D_B$ would contain only
events that are ``significantly'' dependent on $B$,
but there are two motivating reasons for our choice of $D_B$:
\begin{enumerate}
\item $|D_B| = \binom{n-1}{k-1} = \frac{k}{n} \binom{n}{k} = \frac{k}{n} |I|$. 
        Since we assume $|I|/q \To \lambda$ and $k/n \To 0$, we have
\begin{align*}
b_1 \approx b_2 \approx |D_B|\cdot |I| q^{-2} \To \lambda^2 \frac{k}{n} \To 0
\end{align*}
for $\lambda \ne \infty$. 
The fact that $b_1 \approx |D_B| \cdot |I|q^{-2}$
follows from Lemma~\ref{lem:fullColRankProb}. 
The fact that $b_2 \approx |D_B| \cdot |I|q^{-2}$ 
follows from Lemma~\ref{lem:pairwise}.

\item The left-most column of $B$ remains uniformly distributed on $\fq^k$
after conditioning on $\{X_{B'} : B' \in I \setminus D_B\}$.
This is crucial for controlling $b_3$. Lemma~\ref{lem:b3} bounds $b_3$.
\end{enumerate}

Removing the condition that $k/n \To 0$ in 
Theorem~\ref{thm:main-MDS}(c) requires choosing a smaller $D_B$ so that
$|D_B|/|I| \To 0$ even when 
$k/n$ does not tend to zero.
However, with such a choice of
$D_B$, no column (or entry) of $B$ would be independent of 
$\{X_{B'} : B' \in I \setminus D_B\}$. 
This would make it much harder to prove that $b_3 \To 0$.

Lemma~\ref{lem:fullColRankProb} 
below shows that the probability $m$ iid uniform vectors on
$\fq^d$ are linearly dependent is close to $q^{d-m-1}$. 
In particular, 
it shows that any square matrix whose entries are iid uniform over $\fq$ will be
singular with probability nearly $1/q$. 
Lemma~\ref{lem:fullColRankProb} will be used
throughout the proof of Theorem~\ref{thm:main-MDS}(c) and its supporting lemmas.

\begin{lemma} 
\label{lem:fullColRankProb}
Let $m \le d$ and let
$v_1, \dots, v_m$ be $d$-dimensional vectors whose elements
are iid uniform on the field $\fq$. 
Then for any $j< m$,
\begin{align*}
&\frac{q^{-(d-m)}}{q} \\
&\le
P(v_1, \dots, v_m \text{ are linearly dependent}\ 
 \big| \ v_1, \dots, v_j \text{ are linearly independent} )\\
&\le \frac{q^{-(d-m)}}{q-1}  \le 2\cdot \frac{q^{-(d-m)}}{q} .
\end{align*}
\end{lemma}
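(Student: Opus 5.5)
The plan is to compute the probability exactly by adding the vectors one at a time. Let $E_i$ be the event that $v_1,\dots,v_i$ are linearly independent. Since the $v_i$ are iid uniform on $\fq^d$ and $E_i$ means their span has size $q^i$, we get
\begin{align*}
P(E_{i+1}\mid E_i) = 1 - q^{i-d}.
\end{align*}
Hence for $j<m$,
\begin{align*}
P(\text{dependent}\mid E_j) = 1 - \prod_{i=j}^{m-1}\left(1-q^{i-d}\right).
\end{align*}
(If $E_j$ fails the conditioning is vacuous, but every event $E_j$ with $j\le m\le d$ has positive probability.)

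\emph{Lower bound.} The product is at most its last factor $1-q^{m-1-d}$, since all the other factors lie in $(0,1]$. Therefore the conditional probability is at least $q^{m-1-d} = q^{-(d-m)}/q$.

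\emph{Upper bound.} Use the union-type inequality $1-\prod(1-x_i)\le \sum x_i$, valid for $x_i\in[0,1]$. The sum is a geometric series:
\begin{align*}
\sum_{i=j}^{m-1} q^{i-d} \le q^{-d}\sum_{i=-\infty}^{m-1} q^{i} = q^{m-1-d}\cdot\frac{1}{1-q^{-1}} = \frac{q^{-(d-m)}}{q-1}.
\end{align*}
The final inequality, $\frac{1}{q-1}\le \frac{2}{q}$, holds because $q\ge 2$.

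\emph{Main obstacle.} This argument has essentially no hard step. The only point needing care is justifying the product formula for $P(E_{i+1}\mid E_i)$. This rests on the conditional independence of $v_{i+1}$ from $v_1,\dots,v_i$: given any independent $v_1,\dots,v_i$, the vector $v_{i+1}$ lands in their $q^i$-element span with probability exactly $q^{i-d}$. Averaging over such configurations yields the same value, so the factors multiply as claimed.
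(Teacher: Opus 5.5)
Your proposal is correct and follows the paper's proof essentially step for step. It uses the same exact formula $1-\prod_{i=j}^{m-1}(1-q^{i-d})$, the same last-factor lower bound, and the same inequality $1-\prod(1-x_i)\le\sum x_i$ (the paper's Lemma~\ref{lem:inequality}) followed by a geometric-series bound. The only cosmetic difference is in the upper bound: the paper first extends the product down to $i=0$ and bounds $\sum_{i=1}^{m}q^{-i}$ by the infinite series, whereas you apply the inequality directly to the terms $i=j,\dots,m-1$ and extend that sum down to $-\infty$.
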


\begin{proof}
For any $j< m$ (including $j=0$), we have
\begin{align}
&P(v_1, \dots, v_m \text{ are linearly dependent}\ 
 | \ v_1, \dots, v_j \text{ are linearly independent} )\notag\\
&= 1 - \prod_{i=j}^{m-1} (1-q^{i-d}) \label{eq:022}\\
&\le 1 - \prod_{i=0}^{m-1} (1-q^{i-d}) \\
&\le 1 - \left(1 - \sum_{i=0}^{m-1} q^{i-d} \right) \label{eq:016}\\
&= q^{-(d-m)} \sum_{i=1}^{m} q^{-i} \notag\\
&\le q^{-(d-m)} \sum_{i=1}^\infty q^{-i} \notag\\
&= \frac{q^{-(d-m)}}{q-1} \notag\\
&\le \frac{q^{-(d-m)}}{q/2} \label{eq:020}
\end{align}
where
\eqref{eq:016} follows from Lemma~\ref{lem:inequality};
and
\eqref{eq:020} follows since $q \ge 2$.
Furthermore,
\begin{align*}
1 - \prod_{i=j}^{m-1} (1-q^{i-d})
& \ge 1 - (1-q^{m-1-d})
= \frac{q^{-(d-m)}}{q}.
\end{align*}
\end{proof}

Lemma~\ref{lem:fullColRankProb}
implies for the case $d=m$ and $j=0$ 
that the probability a random square matrix over $\fq$ is singular lies between
$\frac{1}{q}$ and $\frac{1}{q-1}$
regardless of the size of the matrix.

The next lemma is used in the proofs of
Lemma~\ref{lem:pairwise} and Lemma~\ref{lem:b3}.

\begin{lemma} 
Let $d, m, t$ be positive integers such that $1\le t < d$
and let 
$r_1, \dots, r_m \in \{1, \dots, d-t\}$.
Let $R_1, \dots, R_m, T \subseteq \fq^d$ 
each of whose components is chosen iid uniformly at random from $\fq$,
and such that $|T|=t$ 
and $|R_i| = r_i$ for each $i=1,\dots m$.
Then the events $\{R_i \cup T\ \text{is linearly dependent}\}$ for $1\le i \le m$
are conditionally independent given $T$ is linearly independent.
\label{lem:ConditionalIndependence}
\end{lemma}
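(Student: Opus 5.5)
The plan is to condition on $T$ being a fixed linearly independent set and show that the conditional probability of the joint event factors. Since $T$ is independent of $R_1,\dots,R_m$, which are themselves mutually independent, it suffices to show the following. For every fixed linearly independent $t$-set $\tau\subseteq\fq^d$, the conditional probability $P(R_i\cup\tau \text{ is linearly dependent})$ does not depend on $\tau$. Once this holds, for any subset $S\subseteq\{1,\dots,m\}$ we have
\begin{align*}
P\Big(\bigcap_{i\in S}\{R_i\cup T \text{ dep.}\}\ \Big|\ T=\tau\Big)
=\prod_{i\in S}P(R_i\cup \tau \text{ dep.}),
\end{align*}
by independence of the $R_i$. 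Averaging over $\tau$ (with the common value $p_i$ independent of $\tau$) then gives that the conditional joint probability given $\{T\text{ lin.\ indep.}\}$ equals $\prod_{i\in S}p_i$. In addition, $P(R_i\cup T\text{ dep.}\mid T\text{ indep.})=p_i$, which is exactly conditional independence. The same argument works with complements or with arbitrary combinations of the events, so full mutual conditional independence follows.

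The key step is therefore the invariance in $\tau$. Here I would use transitivity of $GL_d(\fq)$ on ordered linearly independent $t$-tuples. Given two such tuples $\tau,\tau'$, pick $M\in GL_d(\fq)$ with $M\tau=\tau'$. The uniform distribution on $\fq^d$ is preserved by $M$, so $MR_i$ has the same law as $R_i$. Linear dependence of $R_i\cup\tau$ is equivalent to linear dependence of $MR_i\cup M\tau = MR_i\cup\tau'$. Hence $P(R_i\cup\tau\text{ dep.})=P(R_i\cup\tau'\text{ dep.})$.

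Alternatively, one can compute $p_i$ explicitly. Applying the argument of Lemma~\ref{lem:fullColRankProb} and its equation~\eqref{eq:022} with $j=t$ and $m=t+r_i$ gives $p_i=1-\prod_{\ell=t}^{t+r_i-1}(1-q^{\ell-d})$, which visibly depends only on $t,r_i,d,q$. The hypotheses $r_i\le d-t$ and $t<d$ just ensure that this is a genuine nontrivial probability and that independent sets of the needed size exist.

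I expect no real obstacle. The only care needed is interpretational: the sets should be treated as ordered tuples of iid uniform vectors (so repeated vectors count as dependence), and I must make precise that conditioning is on the event $\{T\text{ independent}\}$ rather than on $T$ itself. The decomposition into conditioning on each value $\tau$ and then averaging handles this cleanly.
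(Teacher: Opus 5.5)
Your proof is correct and takes essentially the same route as the paper: condition on $T=\tau$ for each linearly independent $\tau\in\mathcal{T}$, factor the joint probability using the mutual independence of $T,R_1,\dots,R_m$, and average over the uniformly weighted $\tau$. The one difference is that you explicitly prove $P(R_i\cup\tau\ \text{is linearly dependent})$ does not depend on $\tau$, via $GL_d(\fq)$-transitivity or the sequential-counting formula behind \eqref{eq:022}; the paper's final equality uses this fact silently, and it is genuinely needed, since an average of products over $\tau$ equals the product of the averages only when each factor is constant in $\tau$.
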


\begin{proof}
Let $\mathcal{T}$ be the collection of all sets containing $t$
linearly independent vectors in $\fq^d$.
Then,
\begin{align}
&P(R_i\cup T \text{ is linearly independent } \forall i = 1, \dots, m\ 
    \Big|\ T \text{ is linearly independent})\notag\\
&= P\left(\bigcap_{i=1}^m 
  \{ R_i\cup T \text{ is linearly independent} \}
      \Big|\ T\in\mathcal{T} \right)\notag\\
&= \frac{1}{P(T\in\mathcal{T})}
   \sum_{\tau\in\mathcal{T}} 
   P\left( \bigcap_{i=1}^m 
         \{ R_i\cup T \text{ is linearly independent} \}
           \Big|\ T=\tau \right) P(T=\tau)\notag\\
&= \frac{1}{|\mathcal{T}|}
   \sum_{\tau\in\mathcal{T}} 
   P\left( \bigcap_{i=1}^m 
         \{ R_i\cup \tau \text{ is linearly independent} \} \right) \label{eq:98}\\
&= \frac{1}{|\mathcal{T}|}
   \sum_{\tau\in\mathcal{T}} 
   \prod_{i=1}^m 
   P\left( R_i\cup \tau \text{ is linearly independent} \right) \label{eq:99}\\
&= \frac{1}{P(T\in\mathcal{T})}
   \sum_{\tau\in\mathcal{T}} 
   \prod_{i=1}^m 
   P\left( R_i\cup T \text{ is linearly independent} \Big|\ T=\tau \right) 
    P(T=\tau)\notag\\
&=\prod_{i=1}^m
  P\left( R_i \cup T \text{ is linearly independent }  \ 
    \Big|\ T \text{ is linearly independent}\right) 
\end{align}
where 
\eqref{eq:98} follows since $P(T=\tau)$ is the same for all $\tau\in\mathcal{T}$;
and
\eqref{eq:99} follows since 
$\tau$ is a fixed linearly independent set
and the $R_i$'s are independent random variables.
The same argument works for any subcollection of $R_1, \dots, R_m$.
\end{proof}

The next lemma is used in the proof of Lemma~\ref{lem:b3}.

\begin{lemma} 
If $v_1, \dots, v_{d-1} \in \fq^d$ are linearly independent vectors
and $v_d$ is chosen uniformly at random from $\fq^d$,
then the probability that $v_1, \dots, v_d$ are
linearly dependent is $1/q$.
\label{lem:AllButOneVectorIndependent}
\end{lemma}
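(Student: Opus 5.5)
Since $v_1,\dots,v_{d-1}$ are linearly independent, adding $v_d$ makes the set dependent exactly when $v_d$ lies in the span $W=\mathrm{span}(v_1,\dots,v_{d-1})$. The plan is therefore to compute the probability that a uniform vector of $\fq^d$ falls in $W$ by counting elements of $W$.

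The steps are as follows.

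First, I verify the equivalence: $v_1,\dots,v_d$ are linearly dependent if and only if $v_d\in W$. If $v_d\in W$, then $v_d$ is a combination of the others, so the set is dependent. Conversely, take a nontrivial relation $\sum_i c_i v_i=0$. The coefficient $c_d$ must be nonzero, since otherwise the relation would involve only $v_1,\dots,v_{d-1}$, contradicting their independence. Solving for $v_d$ then shows $v_d\in W$.

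Second, I count $W$. Because $v_1,\dots,v_{d-1}$ are independent, the map $\fq^{d-1}\to W$ sending $(c_1,\dots,c_{d-1})$ to $\sum c_i v_i$ is a bijection. Hence $|W|=q^{d-1}$.

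Third, I use that $v_d$ is uniform on $\fq^d$. This gives
\begin{align*}
P(v_d\in W)=\frac{q^{d-1}}{q^d}=\frac{1}{q}.
\end{align*}
If the $v_i$ are themselves random, I condition on any fixed independent tuple $(v_1,\dots,v_{d-1})$, with $v_d$ independent of them. The conditional probability is $1/q$ for every such tuple, so averaging over tuples gives $1/q$ as well.

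The only real obstacle is the first step, specifically the argument that $c_d\neq 0$. The rest is direct counting.
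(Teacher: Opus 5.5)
Your proof is correct and is essentially the paper's argument. The paper obtains the lemma by setting $d=m=j+1$ in its product formula \eqref{eq:022}, whose single remaining factor $1-q^{(d-1)-d}$ is exactly your count of the $q^{d-1}$-element span $W$ inside $\fq^d$. Your direct treatment of a fixed independent tuple, followed by averaging, is if anything slightly closer to how the lemma is actually used in Lemma~\ref{lem:b3}, where the right-most $k-1$ columns are conditioned on an event $U$ rather than merely on being independent.
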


\begin{proof}
It follows immediately 
by setting $d=m=j+1$ in \eqref{eq:022} 
of the proof of Lemma~\ref{lem:fullColRankProb}.
\end{proof}

In what follows submatrices with some common columns will be analyzed.

We say that two submatrices $B$ and $B'$ of a matrix $A$ 
\textit{share a column}
if that column was derived from the same column in $A$ for both $B$ and $B'$
(see Figure~\ref{fig:two-submatrices}).

\begin{figure}[hht]
\begin{center}
\begin{tikzpicture}[scale=0.7]

\draw[-, ultra thick] (1,1)--(17,1)--(17,6)--(1,6)--cycle;

% Cols
\draw[fill=red] (14,1)--(15,1)--(15,6)--(14,6)--cycle;
\draw[fill=red] (5,1)--(6,1)--(6,6)--(5,6)--cycle;
\draw[fill=red] (1,1)--(2,1)--(2,6)--(1,6)--cycle;

\draw[fill=blue] (15,1)--(16,1)--(16,6)--(15,6)--cycle;
\draw[fill=blue] (4,1)--(5,1)--(5,6)--(4,6)--cycle;
\draw[fill=blue] (2,1)--(3,1)--(3,6)--(2,6)--cycle;

% ---------------------------------------------------------------------------
% Row-Col intersections
\draw[fill=red] (12,1)--(13,1)--(13,6)--cycle;
\draw[fill=red] (8,1)--(9,1)--(9,6)--cycle;
\draw[fill=blue] (12,6)--(13,6)--(12,1)--cycle;
\draw[fill=blue] (8,6)--(9,6)--(8,1)--cycle;

% labels

\draw[dashed, ultra thin] (1,2)--(17,2);
\draw[dashed, ultra thin] (1,3)--(17,3);
\draw[dashed, ultra thin] (1,4)--(17,4);
\draw[dashed, ultra thin] (1,5)--(17,5);
\draw[dashed, ultra thin] (1,6)--(17,6);

\draw[dashed, ultra thin] (2,1)--(2,6);
\draw[dashed, ultra thin] (3,1)--(3,6);
\draw[dashed, ultra thin] (4,1)--(4,6);
\draw[dashed, ultra thin] (5,1)--(5,6);
\draw[dashed, ultra thin] (6,1)--(6,6);
\draw[dashed, ultra thin] (7,1)--(7,6);
\draw[dashed, ultra thin] (8,1)--(8,6);
\draw[dashed, ultra thin] (9,1)--(9,6);
\draw[dashed, ultra thin] (10,1)--(10,6);
\draw[dashed, ultra thin] (11,1)--(11,6);
\draw[dashed, ultra thin] (12,1)--(12,6);
\draw[dashed, ultra thin] (13,1)--(13,6);
\draw[dashed, ultra thin] (14,1)--(14,6);
\draw[dashed, ultra thin] (15,1)--(15,6);

\node[black, above, scale=1] at (1.5,6) {$1$};
\node[black, above, scale=1] at (2.5,6) {$2$};
\node[black, above, scale=1] at (3.5,6) {$3$};
\node[black, above, scale=1] at (4.5,6) {$4$};
\node[black, above, scale=1] at (5.5,6) {$5$};
\node[black, above, scale=1] at (6.5,6) {$6$};
\node[black, above, scale=1] at (7.5,6) {$7$};
\node[black, above, scale=1] at (8.5,6) {$8$};
\node[black, above, scale=1] at (9.5,6) {$9$};
\node[black, above, scale=1] at (10.5,6) {$10$};
\node[black, above, scale=1] at (11.5,6) {$11$};
\node[black, above, scale=1] at (12.5,6) {$12$};
\node[black, above, scale=1] at (13.5,6) {$13$};
\node[black, above, scale=1] at (14.5,6) {$14$};
\node[black, above, scale=1] at (15.5,6) {$15$};
\node[black, above, scale=1] at (16.5,6) {$16$};

\node[black, left, scale=1] at (1,5.5) {$1$};
\node[black, left, scale=1] at (1,4.5) {$2$};
\node[black, left, scale=1] at (1,3.5) {$3$};
\node[black, left, scale=1] at (1,2.5) {$4$};
\node[black, left, scale=1] at (1,1.5) {$5$};

\end{tikzpicture}
\end{center}
\caption{
An example of a $5\times 16$ matrix $A$ with two different
$5\times 5$ submatrices (blue and red) that share two columns
(split blue/red).
}
\label{fig:two-submatrices}
\end{figure}
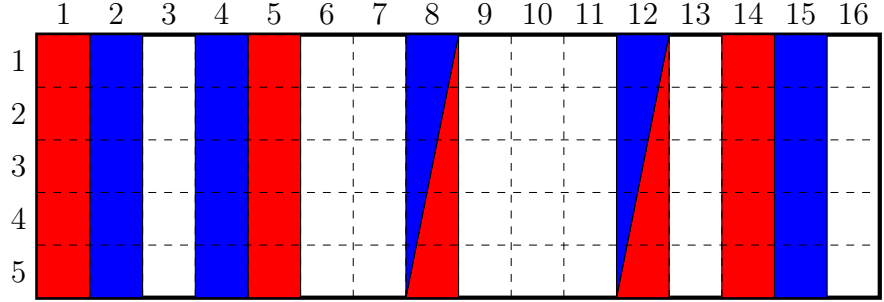

If $B$ and $B'$ are two non-overlapping $k\times k$ submatrices of a
random $k\times n$ matrix $A$ over $\fq$, then
the events that $B$ and $B'$ are singular are independent. 
This implies
$P(X_B=X_{B'}=1) = P(X_B=1)P(X_{B'}=1) \le 4q^{-2}$,
where the upper bound follows from Lemma~\ref{lem:fullColRankProb}.

The following lemma makes a similar statement even when the distinct 
submatrices do overlap.
Specifically, the lemma tells us that any distinct 
$k\times k$ submatrices $B$ and $B'$ of $A$ are ``not too'' dependent,
in the sense that $P(X_B=X_{B'}=1) \le 6q^{-2}$.
This fact is used to upper bound 
$b_2$ (from Lemma~\ref{lem:stein}) 
in the proof of Theorem~\ref{thm:main-MDS}(c).

The next lemma is used in the proofs of Theorem~\ref{thm:main-MDS}(b)(c).

\begin{lemma} \label{lem:pairwise}
Let $A$ be a random $k \times n$ matrix over the field $\fq$ and let
$B$ and $B'$ be distinct $k \times k$ submatrices of $A$
that share exactly $j$ columns.
Then,
\begin{align*}
%P(B\ \text{and}\ B'\ \text{are singular}) \le 2q^{j-k-1} + (q-1)^{-2}.
P(B\ \text{and}\ B'\ \text{are singular}) 
&\le \frac{2}{q^{k+1-j}} + \frac{1}{(q-1)^2}.
\end{align*}
\end{lemma}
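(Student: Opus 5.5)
Let $T$ be the $j$ shared columns, $R$ the $k-j$ columns of $B$ not in $B'$, and $R'$ the $k-j$ columns of $B'$ not in $B$. Since $B\ne B'$, we have $j<k$. The sets $T$, $R$, $R'$ are disjoint collections of columns of $A$, so they are mutually independent with iid uniform entries. The plan is to condition on whether $T$ is linearly independent and split
\begin{align*}
P(B,B'\ \text{singular}) \le P(T\ \text{dependent}) + P(B,B'\ \text{singular},\ T\ \text{independent}).
\end{align*}

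\emph{First term.} If $j=0$ we take $T=\emptyset$, which is trivially independent, so this term vanishes. For $j\ge 1$, Lemma~\ref{lem:fullColRankProb} with $d=k$, $m=j$, and conditioning index $0$ gives
\begin{align*}
P(T\ \text{dependent}) \le \frac{q^{-(k-j)}}{q-1} \le \frac{2}{q^{k+1-j}}.
\end{align*}
(The degenerate case $j=0$ in the lemma's hypothesis is avoided by handling it separately.)

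\emph{Second term.} Since $1\le |R|=|R'|=k-j\le k-t$ with $t=j$, Lemma~\ref{lem:ConditionalIndependence} applies with $d=k$ and $m=2$, taking $R_1=R$ and $R_2=R'$. It says the events $\{R\cup T\ \text{dependent}\}$ and $\{R'\cup T\ \text{dependent}\}$ are conditionally independent given that $T$ is independent. Hence
\begin{align*}
P(B,B'\ \text{singular} \mid T\ \text{indep.}) = P(B\ \text{singular}\mid T\ \text{indep.})\,P(B'\ \text{singular}\mid T\ \text{indep.}).
\end{align*}
Each factor is at most $\frac{1}{q-1}$ by Lemma~\ref{lem:fullColRankProb} with $d=m=k$, ordering the columns so that $T$ comes first and conditioning on the first $j<k$ being independent. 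Multiplying by $P(T\ \text{indep.})\le 1$ bounds the second term by $(q-1)^{-2}$. For $j=0$, the two matrices $B$ and $B'$ are simply independent and the same bound holds directly.

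Adding the two terms gives the stated bound. The only delicate points are the bookkeeping: the edge case $j=0$, the hypothesis $t<d$ (which holds because $j<k$), and verifying that the lemma's conditioning matches the column ordering used. No step appears to be a real obstacle.
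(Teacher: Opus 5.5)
Your proposal is correct and follows the paper's proof essentially step for step. You split on whether the shared columns are linearly dependent, bound that event by $2q^{j-k-1}$ via Lemma~\ref{lem:fullColRankProb}, and on its complement combine the conditional independence of Lemma~\ref{lem:ConditionalIndependence} with the $(q-1)^{-1}$ bound on each factor; your separate handling of $j=0$, where Lemma~\ref{lem:ConditionalIndependence} would need $t\ge 1$, is a small bookkeeping point the paper leaves implicit.
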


\begin{proof}
Let $J$ be the event that the columns of $A$ shared by both $B$ and $B'$ are
linearly dependent. 
Then
\begin{align}
&P(X_B = X_{B'} = 1)\notag\\
&= P(X_B = X_{B'} = 1 | J) P(J) + P(X_B = X_{B'} = 1 | J^c) P(J^c)\notag\\
&= P(J) + P(X_B = X_{B'} = 1 | J^c) P(J^c)\notag\\
&= P(J) + P(X_B  = 1 | J^c)P(X_{B'} = 1 | J^c) P(J^c)\label{eq:0005}\\
&\le P(J) + P(X_B  = 1 | J^c)P(X_{B'} = 1 | J^c)\notag\\
&\le 2q^{j-k-1}+ P(X_B  = 1 | J^c)P(X_{B'} = 1 | J^c)\label{eq:0008}\\
&\le 2q^{j-k-1} + (q-1)^{-2}\label{eq:0007}
\end{align}
where
\eqref{eq:0005} follows since the events that 
   $B$ and $B'$ are singular are conditionally independent given $J^c$ 
   by Lemma~\ref{lem:ConditionalIndependence};
and
\eqref{eq:0008} and \eqref{eq:0007} follow from Lemma~\ref{lem:fullColRankProb}.
\end{proof}

The following lemma,
used in the proof of Theorem~\ref{thm:main-MDS}(c),
upper bounds the quantity $b_3$ in
Lemma~\ref{lem:stein}
in the context of nonsingular submatrices of a random matrix.

\begin{lemma} \label{lem:b3}
Let $A$ be a random $k \times n$ matrix over the field $\fq$ 
where $k\le n-2$
and let $I$ be the set of all $k \times k$ submatrices of $A$.
For any $B\in I$, 
let $X_B$ be the indicator that $B$ is singular,
and define
\begin{align*}
D_B &= \{ B' \in I : B' \text{ contains the left-most column of }B\}.
\end{align*}
Then
\begin{align*}
\sum_{B \in I} 
      \E \left[ \, \Big| 
        \, \E [X_B - P(X_B=1) \mid X_{B'} : B' \in I \setminus D_B] \,
         \Big| \, \right]
%	b_3 
&\le 5\binom{n}{k}q^{-2}.
\end{align*}
\end{lemma}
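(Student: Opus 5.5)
Fix $B\in I$ and let $c$ be its left-most column and $B_0$ the $k\times(k-1)$ matrix of its remaining columns. Every $B'\in I\setminus D_B$ avoids column $c$, so the $\sigma$-field $\mathcal{G}_B=\sigma(X_{B'}:B'\in I\setminus D_B)$ is generated by functions of the columns of $A$ other than $c$. Hence $c$ is independent of $\mathcal{G}_B$ and of all the other columns. Write $\mathcal{F}$ for the $\sigma$-field generated by all columns except $c$; then $\mathcal{G}_B\subseteq\mathcal{F}$.

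By the tower property,
\[
\E[X_B\mid\mathcal{G}_B]=\E\bigl[\E[X_B\mid\mathcal{F}]\mid\mathcal{G}_B\bigr].
\]
We compute the inner conditional expectation. On the event $E$ that the columns of $B_0$ are linearly independent, $c$ is uniform on $\fq^k$ and independent of $\mathcal{F}$, so Lemma~\ref{lem:AllButOneVectorIndependent} gives $\E[X_B\mid\mathcal{F}]=1/q$. On $E^c$, $B$ is singular regardless of $c$, so $\E[X_B\mid\mathcal{F}]=1$. Therefore
\[
\E[X_B\mid\mathcal{F}]=\tfrac1q+\bigl(1-\tfrac1q\bigr)\indicator{E^c},
\]
and consequently
\[
\E[X_B-P(X_B=1)\mid\mathcal{G}_B]=\tfrac1q-P(X_B=1)+\bigl(1-\tfrac1q\bigr)P(E^c\mid\mathcal{G}_B).
\]

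Taking absolute values and expectations with the triangle inequality gives
\[
\E\,\bigl|\E[X_B-P(X_B=1)\mid\mathcal{G}_B]\bigr|\le\bigl|P(X_B=1)-\tfrac1q\bigr|+P(E^c),
\]
using $\E[P(E^c\mid\mathcal{G}_B)]=P(E^c)$. We bound the two terms with Lemma~\ref{lem:fullColRankProb}:
\begin{itemize}
\item With $d=k$, $m=k-1$, $j=0$, we get $P(E^c)\le q^{-1}/(q-1)\le 2q^{-2}$.
\item With $d=m=k$, we get $1/q\le P(X_B=1)\le 1/(q-1)$, so $|P(X_B=1)-1/q|\le \frac{1}{q(q-1)}\le 2q^{-2}$.
\end{itemize}
Each summand is therefore at most $4q^{-2}\le 5q^{-2}$. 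Summing over the $\binom nk$ elements of $I$ yields the claimed bound. The hypothesis $k\le n-2$ only ensures that $I\setminus D_B$ is nonempty and the setup is nondegenerate; it does not otherwise enter the argument.

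The main obstacle is justifying the measurability step cleanly: $\mathcal{G}_B$ does not contain the event $E$, so we cannot simply condition on $E$ inside $\mathcal{G}_B$. The fix is to pass through the finer $\sigma$-field $\mathcal{F}$ with the tower property, as above. After that, the triangle inequality replaces the unknown $P(E^c\mid\mathcal{G}_B)$ by its mean $P(E^c)$, and the remaining bounds are routine.
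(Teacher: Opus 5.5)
Your proof is correct, and it takes a different and somewhat sharper route than the paper's. The paper works atom by atom over $\mathcal{G}_B$ and introduces an auxiliary $\mathcal{G}_B$-measurable event $F$: every $B'\ne B$ containing the right-most $k-1$ columns of $B$ is singular. This $F$ contains your $E^c$. On atoms inside $F^c$ the paper gets $P(X_B=1\mid U)=1/q$ exactly, but on atoms inside $F$ it only uses the crude bound $1$. It must therefore show $P(F)\le 3q^{-2}$, which needs the conditional independence Lemma~\ref{lem:ConditionalIndependence} and the hypothesis $k\le n-2$ (through $q^{-(n-k)}\le q^{-2}$), and this is where the constant $5$ comes from. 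You instead compute $\E[X_B\mid\mathcal{F}]$ exactly as an affine function of $\indicator{E^c}$, then use the tower property (valid since $\mathcal{G}_B\subseteq\mathcal{F}$) and the $L^1$-contraction of conditional expectation. The only small quantity left to control is $P(E^c)$, so neither $F$ nor the conditional-independence lemma is needed. The paper's version is more hands-on, computing atom-level probabilities directly, but its crude treatment of the $F$-atoms is exactly what forces the extra hypothesis. Your bound can also be tightened: taking expectations gives $P(X_B=1)-\frac1q=(1-\frac1q)P(E^c)$, so the summand is $(1-\frac1q)\,\E\bigl|P(E^c\mid\mathcal{G}_B)-P(E^c)\bigr|\le 2(1-\frac1q)P(E^c)\le 2q^{-2}$. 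One correction to your closing remark: the hypothesis $k\le n-2$ is not used in your argument at all. Nonemptiness of $I\setminus D_B$ already holds for $k\le n-1$, and when $k=n$ the conditioning $\sigma$-field is trivial and the summand is $0$. Your version therefore proves the lemma for all $k\le n$, which also removes the need for the ``for large enough $n$ we have $k\le n-2$'' step in the proof of Theorem~\ref{thm:main-MDS}(c).
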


\begin{proof}
Fix $B \in I$.
Let $K$ denote the 
set of all events of the form 
\begin{align*}
\displaystyle\bigcap_{B' \in I \setminus D_B} \{X_{B'}=i_{B'}\}
\end{align*}
where $i_{B'}\in \{0,1\}$.
Note that 
$|I|=\binom{n}{k}$ and 
$|D_B| = \binom{n-1}{k-1}$
and exactly one of the 
$2^{|I\setminus D_B|}$
events in $K$ occurs.

By the definition of conditional expectation, 
\begin{align}
&\E[X_B - P(X_B=1) \mid X_{B'} : B' \in I \setminus D_B] \notag\\
& = \sum_{U \in K}\E[X_B - P(X_B=1)\mid U] \cdot \indicator{U} \notag\\
& = \sum_{U \in K}(P(X_B=1 \mid U) - P(X_B=1))\cdot\indicator{U}. \label{eq:00011}
\end{align}
Therefore since only one term in the summation in \eqref{eq:00011}
is nonzero, we get
\begin{align}
&\E\left[\Big|\, \E[X_B - P(X_B=1) \mid X_{B'} : 
                  B' \in I \setminus D_B]\Big|\, \right] 
    \notag\\
&= \sum_{U \in K}
\left|P(X_B=1 \mid U) - P(X_B=1)\right|\cdot P(U).
\label{eq:001}
\end{align}
We introduce two events that will be used to upper bound \eqref{eq:001}:
\begin{itemize}
\item Let $G$ be the event that the right-most $k-1$ columns of $B$ 
are linearly dependent.

\item Let $F$ be the event that $B'$ is singular 
for \emph{all} $B' \in I\setminus \{B\}$
containing the right-most $k-1$ columns of $B$.
\end{itemize}
Note that 
$G\subseteq F$ and
$F$ is a union of elements (i.e. of events) in $K$,
so every element of $K$ is a subset of either $F$ or $F^c$.
We now split the sum in \eqref{eq:001} based on whether 
$U$ is a subset of $F$ or $F^c$.
If $U\subseteq F^c$, then
\begin{align}
|P(X_B=1 \mid U) - P(X_B=1)| 
&= \left|\frac{1}{q} - P(X_B=1) \right| \label{eq:008}\\
&\le \frac{1}{q(q-1)}\label{eq:006}\\
&\le 2q^{-2} \label{eq:005}
\end{align}
where
\eqref{eq:008} follows since
$U\in K$ and the way $D_B$ is defined
together imply that the left-most column of $B$ is independent of $U$,
and since $F^c \subseteq G^c$, we have $U\subseteq G^c$, so
$P(X_B=1 \mid U) = 1/q$, 
by Lemma~\ref{lem:AllButOneVectorIndependent};
\eqref{eq:006} follows since $P(X_B=1)\le (q-1)^{-1}$ 
              from Lemma~\ref{lem:fullColRankProb};
and
\eqref{eq:005} follows since $q\ge 2$.

Thus,
\begin{align}
& \sum_{U \in K}\left|P(X_B=1 \mid U) - P(X_B=1)\right|P(U) \notag\\
& = \sum_{\substack{U \in K\\ U \subseteq F}} 
    \left|P(X_B=1 \mid U) - P(X_B=1)\right|P(U) \notag\\
&\ \ \   + \sum_{\substack{U \in K\\ U \subseteq F^c}}
    \left|P(X_B=1 \mid U) - P(X_B=1)\right|P(U) \notag\\
& \le \sum_{\substack{U \in K\\ U \subseteq F}} 
   P(U) + 2q^{-2} \sum_{\substack{U \in K\\ U \subseteq F^c}}  P(U) 
   \label{eq:017}\\
& \le P(F) + 2q^{-2} \label{eq:004}
\end{align}
where 
\eqref{eq:017} follows from
   $|P(X_B=1 \mid U) - P(X_B=1)| \le 1$ for $U\subseteq F$,
   and 
   from \eqref{eq:005} for $U\subseteq F^c$;
and \eqref{eq:004} follows since the events in $K$ are disjoint.

If $H_1, \dots, H_{n-k}$ denote the submatrices of $A$ 
containing the right-most $k-1$ columns of $B$, then
\begin{align}
P(F \mid G^\complement) 
&= P(H_1, \dots, H_{n-k}\ \text{are singular} \mid G^c) \notag\\
&= (P(H_1 \ \text{is singular} \mid G^c))^{n-k}  \label{eq:021}\\
&= q^{-(n-k)} \label{eq:023}\\
&\le q^{-2}. \label{eq:003}
\end{align}
where 
\eqref{eq:021} follows since Lemma~\ref{lem:ConditionalIndependence}
    implies the events that $H_i$ is singular, 
    for $i=1, \dots, n-k$,
    are conditionally independent given $G^c$;
\eqref{eq:023} follows since $P(H_i\ \text{is singular} \mid G^c) = 1/q$
    for all $i$;
and
\eqref{eq:003} follows from $k \le n-2$.
We have
\begin{align}
P(F) &= P(F \mid G)P(G) + P(F \mid G^\complement)P(G^\complement) \notag\\
&= P(G) + P(F \mid G^\complement)P(G^c) \label{eq:015}\\
&\le 2q^{-2} + P(F \mid G^\complement) \label{eq:019}\\
&\le 2q^{-2} + q^{-2} = 3q^{-2} \label{eq:007}
\end{align}
where 
\eqref{eq:015} follows from $G\subseteq F$;
\eqref{eq:019} follows from Lemma~\ref{lem:fullColRankProb}
               and $P(G^c)\le 1$;
and
\eqref{eq:007} follows from \eqref{eq:003}.

The proof up to this point has assumed a fixed submatrix $B$.
Now we sum over such $B$ to obtain the claimed upper bound.
Combining 
\eqref{eq:001}, 
\eqref{eq:004}, and 
\eqref{eq:007} gives
\begin{align*}
\E \left[\,\Big|\,\E\left[X_B - P(X_B=1) \mid X_{B'} : B' \in I
\setminus D_B\right]\,\Big|\,\right] \le P(F) + 2q^{-2} \le 5q^{-2}
\end{align*}
which implies
\begin{align*}
\sum_{B \in I} 
      \E \left[ \left| 
        \E [X_B - P(X_B=1) \mid X_{B'} : B' \in I \setminus D_B]
         \right| \right]
&\le \sum_{B \in I} 5q^{-2}
= 5q^{-2}\cdot |I|
= 5\binom{n}{k}q^{-2}.
\end{align*}
\end{proof}

% ---------------------------------------------------------------------------
\begin{theorem} \label{thm:main-MDS}
Let $C$ be a linear $[n,k]$ code chosen uniformly at random
over a field of size $q$,
where $n,q,k \To\infty$.
\begin{itemize}
\item[(a)] If $\frac{1}{q}\binom{n}{k}\To 0$, then $P(C \text{ is MDS}) \To 1$.
\item[(b)] If $\frac{1}{q}\binom{n}{k}\To \infty$, then $P(C \text{ is MDS}) \To 0$.
\item[(c)] If $\frac{1}{q}\binom{n}{k}\To \lambda\in (0,\infty)$ and
      $k/n \To 0$, then $P(C \text{ is MDS}) \To e^{-\lambda}$.
\end{itemize}
\end{theorem}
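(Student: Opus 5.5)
We work throughout with a random $k\times n$ matrix $A$ over $\fq$ instead of a random code; by Lemma~\ref{lem:multiple-defs-of-code} this changes $P(C\text{ is MDS})$ only by $o(1)$. By Lemma~\ref{lem:MDS-conditions}(b), $A$ is MDS exactly when $X:=\sum_{B\in I}X_B=0$. Here $I$ is the set of $\binom{n}{k}$ $k\times k$ submatrices and $X_B$ indicates that $B$ is singular. By Lemma~\ref{lem:fullColRankProb}, $q^{-1}\le P(X_B=1)\le (q-1)^{-1}$, so $\mu=\E X$ satisfies $\mu\sim\binom{n}{k}/q$.

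\textbf{Part (a).} This is a union bound: $P(X\ge 1)\le \mu\le \binom{n}{k}/(q-1)\To 0$.

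\textbf{Part (c).} Apply Lemma~\ref{lem:stein} with $D_B$ the set of submatrices containing the left-most column of $B$, so $|D_B|=\frac{k}{n}\binom{n}{k}$. Then $b_1\le \binom{n}{k}\cdot\frac{k}{n}\binom{n}{k}(q-1)^{-2}\To \lambda^2\cdot 0=0$. For $b_2$, Lemma~\ref{lem:pairwise} gives
\begin{align*}
b_2\le \binom{n}{k}\sum_{B'\in D_B\setminus\{B\}}\Big(2q^{j(B')-k-1}+(q-1)^{-2}\Big),
\end{align*}
where $j(B')$ is the number of columns shared with $B$. The $(q-1)^{-2}$ part is again $O(\lambda^2 k/n)\To 0$. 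For the $2q^{j-k-1}$ part, count the $B'$ sharing exactly $j$ columns: there are at most $\binom{k}{j}\binom{n-k}{k-j}$ of them. Writing $i=k-j\ge1$, the contribution is at most $\binom{n}{k}q^{-1}\sum_{i\ge1}\binom{k}{i}\binom{n-k}{i}q^{-i}$. The $i=1$ term is $\lambda\cdot k(n-k)/q$, and $k(n-k)/q\sim\lambda k n/\binom{n}{k}\To0$ since $k\To\infty$. I expect the higher-$i$ terms to be dominated similarly, because $\binom{k}{i}\binom{n-k}{i}\le\binom{n}{k}\approx\lambda q$ once $i$ is large and $q^{-i}$ decays. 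Finally, Lemma~\ref{lem:b3} gives $b_3\le 5\binom{n}{k}q^{-2}\To 0$ (note $k\le n-2$ eventually). Hence $P(X=0)\To e^{-\lambda}$.

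\textbf{Part (b).} Use the second moment method, $P(X=0)\le \Var(X)/\mu^2$. The diagonal terms contribute $\mu/\mu^2\To0$. First, a reduction: deleting columns of an MDS generator matrix keeps it MDS (every $k$ columns stay independent), so $P(\text{MDS})$ is nonincreasing in $n$. I may therefore replace $n$ by the smallest $n'\le n$ with $\binom{n'}{k}/q\ge\omega$, where $\omega\To\infty$ slowly. This keeps $\mu\To\infty$ but only slowly, i.e.\ $\binom{n'}{k}\le \omega' q$ with $\omega'\To\infty$ slowly (for $k\ge2$ consecutive binomials differ by a factor $n'/(n'-k)$, which needs a little care when $n'$ is close to $k$; there I would instead also shorten, i.e.\ decrease $k$ and $n$ together). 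With this normalization, for off-diagonal pairs sharing $j$ columns I use Lemma~\ref{lem:pairwise}:
\begin{align*}
\Cov(X_B,X_{B'})\le 2q^{j-k-1}+(q-1)^{-2}-q^{-2}.
\end{align*}
The last two terms sum to at most $\binom{n}{k}^2\cdot 3q^{-3}=O(\mu^2/q)$. The first sums, exactly as in (c), to $\binom{n}{k}q^{-1}\sum_{i\ge1}2\binom{k}{i}\binom{n-k}{i}q^{-i}$, which must be shown to be $o(\mu^2)$.

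The main obstacle is this overlap sum, namely showing $\sum_{i\ge1}\binom{k}{i}\binom{n-k}{i}q^{-i}=o(\mu)$. My plan is to bound it term by term. Using $\binom{k}{i}\binom{n-k}{i}\le\binom{n}{k}\le\omega' q$, each term with $i\ge2$ is at most $\omega' q^{1-i}$. The $i=1$ term $k(n-k)/q$ is small because $\binom{n}{k}\gg kn$ when $k\To\infty$ and $n-k\ge2$. Summing the geometric tail gives $O(\omega'/q)+o(1)$, which is $o(\mu)$. If the $i=1$ estimate fails in the regime $n-k$ bounded, I would handle that regime separately by the column-duality of Lemma~\ref{lem:MDS-conditions}(d) (swapping the roles of $k$ and $n-k$).
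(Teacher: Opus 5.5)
Your proposal is correct and follows essentially the same route as the paper: a union bound for (a); Chebyshev with Lemma~\ref{lem:pairwise} for (b), splitting off the pairs that share $k-1$ columns and using duality when $n-k$ is tiny; and Chen--Stein for (c), with the same left-most-column choice of $D_B$ and Lemma~\ref{lem:b3}. The one thing to drop is the reduction in (b) that makes $\mu$ grow slowly; its ``$n'$ close to $k$'' case is your only hand-wavy step, and it is unnecessary, since your own Vandermonde bound already gives $\sum_{i\ge 2}\binom{k}{i}\binom{n-k}{i}q^{-i}\le 2\binom{n}{k}q^{-2}\le 2\mu/q=o(\mu)$ for every $n$, which is exactly the paper's cruder estimate $q^{j-k-1}\le q^{-3}$ for $j\le k-2$.
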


\begin{proof}
\ \\

Let $G$ be a $k\times n$ random matrix over $\fq$
and
let $I$ be the set of $k \times k$ submatrices of $G$.
For each $B \in I$ let $X_B$ be the indicator random variable of $B$ being singular.
Lemma~\ref{lem:multiple-defs-of-code} implies that 
the probabilities of the code $C$ being MDS 
and the matrix $G$ being an MDS generator matrix are asymptotically the same.

\noindent\textbullet\ Part (a):\\
The proof follows from a result in 
\cite{Ghorpade-Lachaud-2001}.
Here is an elementary proof.
\begin{align*}
P(G\ \text{is MDS})
&= 1 - P\left(\bigcup_{B\in I} \{X_B = 1\} \right)
\ge 1 - \sum_{B\in I} P(X_B = 1)
\ge 1 - \frac{1}{q-1}\binom{n}{k}
\To 1
\end{align*}
where the first equality follows from
Lemma~\ref{lem:MDS-conditions}(b),
and the last inequality follows from
Lemma~\ref{lem:fullColRankProb}
and $|I| = \binom{n}{k}$.

% ---------------------------------------------------------------------------

\noindent\textbullet\ Part (b):\\
Let $\mu = \E \left[ \displaystyle\sum_{B\in I} X_B \right]$.
If $j(B, B')$ is the number of columns shared by $B$ and $B'$,
then
\begin{align}
\Var\left(\sum_{B \in I} X_B\right)
 &= \sum_{B\in I} \Var(X_B) + \sum_{B\in I} \sum_{B'\in I\setminus\{B\}} 
     \Cov(X_B, X_{B'}) \notag\\
 &= \sum_{B\in I} P(X_B=1)P(X_B=0) + \sum_{B\in I} \sum_{B'\in I\setminus\{B\}} 
     \Cov(X_B, X_{B'}) \notag\\
 &\le \sum_{B\in I} P(X_B=1) + \sum_{B\in I} \sum_{B'\in I\setminus\{B\}} 
     \Cov(X_B, X_{B'}) \notag\\
 & = \mu + \sum_{B\in I} \sum_{B'\in I\setminus\{B\}} \Cov(X_B, X_{B'}) \notag\\
 & = \mu + \sum_{B\in I} \sum_{B'\in I\setminus\{B\}} 
      \left( P(X_B {=} X_{B'} {=}1) - P(X_B{=}1)P(X_{B'}{=}1) \right) \notag\\
 & \le \mu + \sum_{B\in I} \sum_{B'\in I\setminus\{B\}} 
       \left( 2q^{j(B, B')-k-1} + (q-1)^{-2}  - q^{-2}\right)  \label{eq:9492}\\
 & = \mu + \sum_{B\in I} \sum_{B'\in I\setminus\{B\}} 
      \left( 2q^{j(B, B')-k-1} + \frac{2q-1}{(q-1)^2q^2} \right) \notag\\
 & \le \mu + \sum_{B\in I} \sum_{B'\in I\setminus\{B\}} 
     \left( 2q^{j(B, B')-k-1} + 8q^{-3} \right) \label{eq:9493}
\end{align}
where 
\eqref{eq:9492} follows from 
         Lemma~\ref{lem:pairwise} and Lemma~\ref{lem:fullColRankProb};
and
\eqref{eq:9493} follows since 
$8q^{-3} - \frac{2q-1}{(q-1)^2q^2} = 
 \frac{6q^2 -15q+8}{(q-1)^2q^3}\ge 0$ for all $q\ge 2$.

Let $N_j$ be the number of ordered pairs of 
distinct submatrices in $I$ sharing $j$ columns. 
Then, from \eqref{eq:9493},
\begin{align}
\Var\left(\sum_{B \in I} X_B\right)
&\le \mu + \sum_{j = 0}^{k-1}N_j(2q^{j-k-1} + 8q^{-3}) \notag\\
&\le \mu + N_{k-1}(2q^{-2} + 8q^{-3}) + 10q^{-3} \sum_{j=0}^{k-2} N_j 
      \label{eq:9494}\\
&\le \mu + N_{k-1}(2q^{-2} + 8q^{-3}) + 10q^{-3} \binom{n}{k}^2 
     \label{eq:9495}  \\
&= \mu + nk\binom{n}{k}(2q^{-2} + 8q^{-3}) + 10q^{-3}\binom{n}{k}^2 
     \label{eq:9496} \\
&\le \mu + nk\binom{n}{k}10q^{-2} + 10q^{-3} \binom{n}{k}^2  \notag\\
&\le \mu + 10 \mu^2 \frac{nk}{\binom{n}{k}} + 10 \frac{\mu^2}{q} 
          \label{eq:9497} 
\end{align}
where
\eqref{eq:9494} follows since $q^{j-k-1} \le q^{-3}$ for all $j\le k-2$;
\eqref{eq:9495} follows from 
   $\displaystyle\sum_{j=0}^k N_j = |I|^2 = \binom{n}{k}^2$;
\eqref{eq:9496} follows from 
    $N_{k-1} = \binom{n}{k} \binom{k}{k-1} \binom{n-k}{1} \le nk \binom{n}{k}$;
and
\eqref{eq:9497} follows from $\mu \ge \frac{1}{q}\binom{n}{k}$.

\begin{align}
P(G\ \text{is MDS})
&= P\left(\sum_{B \in I} X_B = 0\right)\notag\\
&\le  P\left(\left|\mu - \sum_{B \in I} X_B \right| \ge \mu\right).
      \label{eq:029}\\
& \le \frac{1}{\mu^2} \cdot\Var\left(\displaystyle\sum_{B \in I} X_B\right)  
      \label{eq:9498}\\
& \le 10 \frac{nk}{\binom{n}{k}} + 10\frac{1}{q} +  \frac{1}{\mu} 
      \label{eq:2987}
\end{align}
where
\eqref{eq:029} follows since
   $\displaystyle\left|\mu - \sum_{B \in I} X_B \right| \ge \mu$
   whenever
   $\displaystyle\sum_{B \in I} X_B = 0$;
\eqref{eq:9498} follows from Chebyshev's inequality;
and
\eqref{eq:2987} follows from \eqref{eq:9497}.

Next, we use the assumption that $\lambda = \infty$.
Since $\mu \ge \frac{1}{q}\binom{n}{k}\To\lambda=\infty$
and $q\To\infty$ by assumption,
the terms $10/q$ and $1/\mu$ in \eqref{eq:2987} each tend to zero,
so to show $P(G\ \text{is MDS}) \To 0$
it suffices to show $nk/\binom{n}{k} \To 0$.

Also, we must have $\binom{n}{k}\To \infty$ and therefore $n\To\infty$.
If $3 \le k\le n-3$, then $\binom{n}{k}$ is at least cubic in $n$, in which case
$nk/\binom{n}{k} \To 0$.
If $k=1$, then 
$P(G\ \text{is MDS})$
is the probability all $n$ elements of $G$ are nonzero, namely 
$(1-q^{-1})^n \sim e^{-\frac{n}{q}} = e^{-\frac{1}{q}\binom{n}{k}} \To 0$.
If $k=2$, then
$nk/\binom{n}{k} = 2n/\binom{n}{2} = \frac{4}{n-1}\To 0$.

If $k=n-1$,
then by Lemma~\ref{lem:multiple-defs-of-code}
the probability that $G$ is MDS is asymptotically the same 
as the probability that an $(n-1)\times 1$ random matrix over $\fq$ is super-regular,
which equals the probability that a $1 \times (n-1)$ 
random matrix over $\fq$ is super-regular.
Using Lemma~\ref{lem:multiple-defs-of-code} again,
this probability is asymptotically 
equal to the probability that a $1\times n$ matrix is MDS.
Therefore this case reduces to the $k=1$ case in the previous paragraph.
The same argument also works for $k=n-2$.
% ---------------------------------------------------------------------------

\noindent\textbullet\ Part (c):\\
For each $B\in I$,
define
\begin{align*}
D_B = \{ B' \in I : B' \text{ contains the left-most column of }B\}.
\end{align*}
Let $b_1, b_2, b_3, \mu$ be the corresponding quantities 
defined in Lemma~\ref{lem:stein}. 
Then 
\begin{align}
\left|P(G \text{ is MDS}) - e^{-\lambda}\right| 
&=\left|P\left(\sum_{B \in I} X_B = 0\right) - e^{-\lambda}\right| \notag\\
&\le \left|P\left(\sum_{B \in I} X_B = 0\right) - e^{-\mu}\right| 
           + |e^{-\mu} - e^{-\lambda}|\notag\\
&\le \min(1, \mu^{-1})(b_1 + b_2 + b_3) + |e^{-\mu} - e^{-\lambda}|
\label{eq:012}
\end{align}
where 
\eqref{eq:012} follows from Lemma~\ref{lem:stein}.
Using Lemma~\ref{lem:fullColRankProb} and
$|I|=\binom{n}{k}$ we have
\begin{align*}
\mu 
&= \displaystyle E \left[\sum_{B\in I} X_B \right]
 = |I|\cdot P(X_B = 1)\\
\therefore \frac{1}{q}\binom{n}{k} 
&\le 
\mu
\le \frac{1}{q-1}\binom{n}{k}.
\end{align*}
By assumption, 
$\frac{1}{q}\binom{n}{k} \To \lambda$,
so $\mu\To\lambda$ 
and therefore $|e^{-\mu} - e^{-\lambda}|\To 0$.

We next show that $\min(1, \mu^{-1})(b_1 + b_2 + b_3) \To 0$.
\begin{align}
b_1 
&= \sum_{B'\in I} \sum_{B\in D_B} P(X_B=1)P(X_{B'}=1)\notag\\
&\le |I|\cdot \max_{B \in I} |D_B| \cdot 4q^{-2} \label{eq:025}\\
&= 4 \binom{n}{k}^2 q^{-2}\cdot \frac{k}{n} \label{eq:010}
\end{align}
where
\eqref{eq:025} follows since Lemma~\ref{lem:fullColRankProb} implies
              $P(X_B=1)P(X_{B'}=1) \le 2q^{-1} \cdot 2q^{-1} = 4q^{-2}$;
and
\eqref{eq:010} follows from $|I| = \binom{n}{k}$ and 
              $|D_B| = \binom{n-1}{k-1} = \binom{n}{k} \frac{k}{n}$.
\begin{align}
b_2 
& = \sum_{B' \in I} \sum_{\substack{B \in D_{B'} \\ B \ne B'}} 
    P(X_B = X_{B'} = 1) \notag\\
&\le |I| \cdot \max_{B \in I}|D_B \setminus \{B\}| \cdot 6q^{-2} \label{eq:026}\\
&\le 6 \binom{n}{k}^2q^{-2} \cdot\frac{k}{n} \label{eq:011}
\end{align}
where
\eqref{eq:026} follows since 
     $B\ne B'$ implies that $B$ and $B'$ share at most $k-1$ columns,
%     $D_B\setminus \{B\} = \emptyset$ when $k=1$, and otherwise
     $q\ge 2$ implies $(q-1)^{-2} \le 4q^{-2}$,
     and Lemma~\ref{lem:pairwise} therefore implies 
\begin{align*}
P(X_B = X_{B'} = 1) 
&\le \frac{2}{q^{k+1-(k-1)}} + \frac{1}{(q-1)^2} 
 \le 6q^{-2} ;
\end{align*}
and
\eqref{eq:011} follows since $|D_B\setminus \{B\}|=\binom{n}{k}\frac{k}{n}-1$
               for all $B\in I$.

Since $k/n\To 0$,
for large enough $n$ we have $k\le n-2$ 
and therefore
Lemma~\ref{lem:b3} tells us that
\begin{align}
b_3 \le 5\binom{n}{k} q^{-2}.
\label{eq:024}
\end{align}
Furthermore, Lemma~\ref{lem:fullColRankProb} gives
\begin{align}
\mu^{-1} &= \left(\sum_{B \in I} P(X_B=1)\right)^{-1} \le q/|I|
 = q\binom{n}{k}^{-1}. 
\label{eq:013}
\end{align}

Then
\begin{align}
|P(G \text{ is MDS}) - e^{-\mu}| 
&\le \min(1, \mu^{-1})\left(4\binom{n}{k}^2 q^{-2} \cdot
\frac{k}{n} + 6 \binom{n}{k}^2 q^{-2} \cdot\frac{k}{n} + 5 \binom{n}{k}q^{-2}\right) 
  \label{eq:027}\\
&= \min(1, \mu^{-1})
    \left(10\binom{n}{k}^2 q^{-2} \cdot\frac{k}{n} + 5\binom{n}{k}q^{-2}\right)\notag \\
&\le 10\binom{n}{k} q^{-1} \cdot\frac{k}{n} + 5q^{-1}
\label{eq:014}
\end{align}
where
\eqref{eq:027} follows from
              \eqref{eq:012}, \eqref{eq:010}, \eqref{eq:011}, and \eqref{eq:024};
and
\eqref{eq:014} follows using \eqref{eq:013}.

If $\lambda<\infty$, 
then $\binom{n}{k}/q$ is bounded,
so the assumptions $k/n \To 0$ and $q \To \infty$ imply from \eqref{eq:014}
that $|P(G \text{ is MDS}) - e^{-\mu}| \To 0$,
and therefore,
by \eqref{eq:012}
we have $P(G \text{ is MDS}) \To e^{-\lambda}$.

If $\lambda = \infty$,
then the result is a special case of
Theorem~\ref{thm:main-MDS}(b).
\end{proof}

Theorem~\ref{thm:main-MDS} requires $n,q,k\To\infty$,
but some generalizations hold.
Parts (a),(b),(c) all hold if $k$ is bounded.
For part (a), $n$ need not tend to infinity
(the existing proof covers this).
For part (b), $q$ can be bounded.
To see this, suppose $q$ is bounded and let $\hat{q}$ be the maximum value of $q$.
Using Lemma~\ref{lem:AllButOneVectorIndependent},
for each $i\ge k$,
if the first $k-1$ columns of a $k\times n$ generator matrix are linearly independent, 
then the probability that the $i$th column is not in the span the first $k-1$
columns is at most $1-\hat{q}^{-1}$,
so the probability of getting $k$ linearly dependent columns tends to $1$.

Also, note that the premise of part (c) can be strengthened to allow
$\lambda\in [0,\infty]$,
but then the conclusion for the cases 
$\lambda=0$ and $\lambda=\infty$ would be weaker than 
parts (a) and (b) due to the extra $k/n\To 0$ condition.
Furthermore, the condition $k/n\To 0$ in part (c) can be changed to $k/n\To 1$
and the result still holds, by a similar argument.

The results in Theorem~\ref{thm:main-MDS} 
also hold for the probability that 
a random $k\times (n-k)$ matrix $A$ is super-regular
by Lemma~\ref{lem:MDS-conditions}.
The following result is a special case for square matrices.

\begin{corollary} 
Let $A$ be a random $k \times k$ matrix over a field of size $q$,
where $q, k \To \infty$.
\begin{enumerate}
\item [(a)] If\ \ $\frac{4^k/\sqrt{k}}{q} \To 0$, 
 then $P(A \text{\ is super-regular}) \To 1$.

\item [(b)] If\ \ $\frac{4^k/\sqrt{k}}{q} \To \infty$, 
 then $P(A \text{\ is super-regular}) \To 0$.
\end{enumerate}
\label{cor:main-superregular}
\end{corollary}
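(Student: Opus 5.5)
The plan is to reduce the corollary to Theorem~\ref{thm:main-MDS} with $n=2k$. By Lemma~\ref{lem:MDS-conditions}, a random $k\times k$ matrix $A$ over $\fq$ is super-regular exactly when the systematic matrix $[I_k\,|\,A]$ generates a $[2k,k]$ MDS code. So $\ProbSR{q}{k}=\ProbMDS{q}{2k}{k}$, and under Lemma~\ref{lem:multiple-defs-of-code} this is asymptotically the probability that a random $[2k,k]$ code is MDS. As $k\To\infty$ we also have $n=2k\To\infty$, and $q\To\infty$ by hypothesis, so the asymptotic regime of Theorem~\ref{thm:main-MDS} applies.

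The next step is to relate the quantity $\frac{1}{q}\binom{2k}{k}$ to $\frac{4^k/\sqrt{k}}{q}$. Stirling's formula gives $\binom{2k}{k}\sim \frac{4^k}{\sqrt{\pi k}}$, so
\begin{align*}
\frac{1}{q}\binom{2k}{k} = \frac{1}{\sqrt{\pi}}\cdot\frac{4^k/\sqrt{k}}{q}\,(1+o(1)).
\end{align*}
The two ratios therefore differ by a factor tending to the positive constant $1/\sqrt{\pi}$. Hence one tends to $0$ (respectively $\infty$) if and only if the other does. If one wants to avoid Stirling, the elementary bounds $\frac{4^k}{2\sqrt{k}}\le\binom{2k}{k}\le\frac{4^k}{\sqrt{k}}$ suffice, since only the limits $0$ and $\infty$ matter.

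For part (a), the hypothesis $\frac{4^k/\sqrt{k}}{q}\To 0$ gives $\frac1q\binom{2k}{k}\To 0$. Theorem~\ref{thm:main-MDS}(a) then yields $P(\text{MDS})\To1$, and hence $\ProbSR{q}{k}\To 1$. For part (b), the hypothesis gives $\frac1q\binom{2k}{k}\To\infty$, and Theorem~\ref{thm:main-MDS}(b) yields $P(\text{MDS})\To 0$. In both parts the case $k=n-1$ or $k=n-2$ never arises for $n=2k$ once $k\ge 3$, so the general argument in Theorem~\ref{thm:main-MDS}(b) applies directly.

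The only delicate point is the transfer between the systematic-matrix model and the uniform-random-code model, since the theorem is phrased for uniformly random codes. I expect to handle this using the fact that each MDS code has exactly one systematic generator matrix. Alternatively, I would rerun the proofs of parts (a) and (b) directly on a $k\times 2k$ matrix $[I_k\,|\,A]$, so that no change of model is needed. Whichever route is taken, the rest of the argument is routine.
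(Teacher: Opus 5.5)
Your proposal is correct and matches the paper's proof: set $n=2k$ in Theorem~\ref{thm:main-MDS}, use $\binom{2k}{k}\sim 4^k/\sqrt{\pi k}$ to translate the hypothesis into $\frac{1}{q}\binom{2k}{k}\To 0$ or $\infty$, and use Lemma~\ref{lem:MDS-conditions} to identify super-regularity of $A$ with $[I_k\,|\,A]$ being MDS. Your additional care over the transfer between the systematic-matrix and uniform-code models, and over the $k\in\{n-1,n-2\}$ cases, makes explicit what the paper leaves implicit; the transfer is covered by Lemma~\ref{lem:multiple-defs-of-code}, since $p_2=p_4\,\pi(q,k)$ with $\pi(q,k)\To 1$.
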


\begin{proof}
Let $I$ be the $k\times k$ identity matrix.
Since $\binom{2k}{k}\sim 4^k/\sqrt{\pi k}$ as $k\To\infty$,
setting $n=2k$ in Theorem~\ref{thm:main-MDS} implies that the probability 
the code generated by $k\times 2k$ matrix $[I|A]$ is MDS tends to $1$.
Also, Lemma~\ref{lem:MDS-conditions} % (a)(c)
implies that $A$ is super-regular if and only if $[I|A]$ is an MDS generator matrix.
\end{proof}

Thus, the rate of growth $q \sim 4^k/\sqrt{\pi k}$
is an asymptotic threshold on
the frequency of super-regular random matrices.

% ---------------------------------------------------------------------------

\clearpage

\section{Asymptotic probability of contiguous super-regular matrices}
\label{sec:Contiguous-Super-Regular-Square-Matrices}

This section analyzes the probability $\ProbCSR{q}{k}$
as $q,k\To\infty$.

The following definitions and lemmas will be used to prove
Lemma~\ref{lem:strong-inf} and Lemma~\ref{lem:Cauchy2},
which together prove the main result of this section,
Theorem~\ref{thm:main-contiguous}.

A submatrix of matrix $A$ is said to be \textit{anchored at $(i,j)$} 
if its bottom-most row and
right-most column are formed from the $i$th row and $j$th column of $A$,
respectively.
For each $k \le \minij$,
let $A_{i,j}(k)$ denote the $k\times k$ contiguous submatrix of $A$ anchored at $(i,j)$.

Suppose $B$ is a $k\times m$ matrix with $k,m \ge 2$.
Let $\bar{B}$ denote the $(k-1)\times (m-1)$ submatrix of $B$
obtained by removing the right-most column
and bottom-most row of $B$.
If $u^\transpose$ is the bottom-most row of $B$ without the right-most entry,
and $v$ is the right-most column of $B$ without the bottom-most entry,
then the triple $(\bar{B}, u, v)$ is called the \textit{corner decomposition}%
\footnote{
This is also referred to as a \textit{bordered matrix}
in~\cite{HornJohnson-book}. %pp. 24-26 
}
of $B$.
In algebraic expressions,
$\bar{B}$ will be treated as a matrix,
and $u$ and $v$ as column vectors.

\begin{figure}[hht]
\begin{center}
\begin{tikzpicture}[scale=0.7]

\draw[-, ultra thick] (1,1)--(10,1)--(10,10)--(1,10)--cycle;

% Rows
\draw[fill=blue] (1,5)--(1,9)--(10,9)--(10,5)--cycle;

% Cols
\draw[fill=red] (4,1)--(8,1)--(8,10)--(4,10)--cycle;

% Row-Col intersections
\draw[fill=red!70!blue] (4,5)--(4,9)--(8,9)--(8,5)--cycle;

% labels
\node[black,  right, scale=1.5] at (10,2.5) {$(5,7)$};
\draw[->, black, ultra thick] (11,3)--(7.5,5.5);

\draw[dashed, ultra thin] (1,1)--(10,1);
\draw[dashed, ultra thin] (1,2)--(10,2);
\draw[dashed, ultra thin] (1,3)--(10,3);
\draw[dashed, ultra thin] (1,4)--(10,4);
\draw[dashed, ultra thin] (1,5)--(10,5);
\draw[dashed, ultra thin] (1,6)--(10,6);
\draw[dashed, ultra thin] (1,7)--(10,7);
\draw[dashed, ultra thin] (1,8)--(10,8);
\draw[dashed, ultra thin] (1,9)--(10,9);

\draw[dashed, ultra thin] (1,1)--(1,10);
\draw[dashed, ultra thin] (2,1)--(2,10);
\draw[dashed, ultra thin] (3,1)--(3,10);
\draw[dashed, ultra thin] (4,1)--(4,10);
\draw[dashed, ultra thin] (5,1)--(5,10);
\draw[dashed, ultra thin] (6,1)--(6,10);
\draw[dashed, ultra thin] (7,1)--(7,10);
\draw[dashed, ultra thin] (8,1)--(8,10);
\draw[dashed, ultra thin] (9,1)--(9,10);

\node[black, above, scale=1] at (1.5,10) {$1$};
\node[black, above, scale=1] at (2.5,10) {$2$};
\node[black, above, scale=1] at (3.5,10) {$3$};
\node[black, above, scale=1] at (4.5,10) {$4$};
\node[black, above, scale=1] at (5.5,10) {$5$};
\node[black, above, scale=1] at (6.5,10) {$6$};
\node[black, above, scale=1] at (7.5,10) {$7$};
\node[black, above, scale=1] at (8.5,10) {$8$};
\node[black, above, scale=1] at (9.5,10) {$9$};

\node[black, left, scale=1] at (1,9.5) {$1$};
\node[black, left, scale=1] at (1,8.5) {$2$};
\node[black, left, scale=1] at (1,7.5) {$3$};
\node[black, left, scale=1] at (1,6.5) {$4$};
\node[black, left, scale=1] at (1,5.5) {$5$};
\node[black, left, scale=1] at (1,4.5) {$6$};
\node[black, left, scale=1] at (1,3.5) {$7$};
\node[black, left, scale=1] at (1,2.5) {$8$};
\node[black, left, scale=1] at (1,1.5) {$9$};

\node[black, above, scale=1] at (15.5,8) {$4$};
\node[black, above, scale=1] at (16.5,8) {$5$};
\node[black, above, scale=1] at (17.5,8) {$6$};
\node[black, above, scale=1] at (18.5,8) {$7$};

\node[black, left, scale=1] at (15,7.5) {$2$};
\node[black, left, scale=1] at (15,6.5) {$3$};
\node[black, left, scale=1] at (15,5.5) {$4$};
\node[black, left, scale=1] at (15,4.5) {$5$};

\node[black,  scale=1.5] at (16.5,6.5) {$\bar{B}$};
\node[black,  scale=1.5] at (16.5,4.5) {$u^\transpose$};
\node[black,  scale=1.5] at (18.5,6.5) {$v$};
\node[black,  scale=1.5] at (18.5,1.5) {$(5,7)$};
\draw[->, black, ultra thick] (18.5,2)--(18.5,4.5);
\draw[-, ultra thick, green] (19,4)--(19,8)--(15,8)--(15,4)--cycle;
\draw[thin] (18,4)--(18,8);
\draw[thin] (15,5)--(19,5);

\end{tikzpicture}
\end{center}
\caption{
An example of a $9\times 9$ square matrix (black) $A$ with
a $4\times 4$ contiguous submatrix $B$ anchored at $(5,7)$
formed by deleting from $A$ all but its red columns and blue rows.
The corner decomposition
$(\bar{B}, u, v)$ of $B$
is shown to the right in green.
}
\label{Fig:squares}
\end{figure}
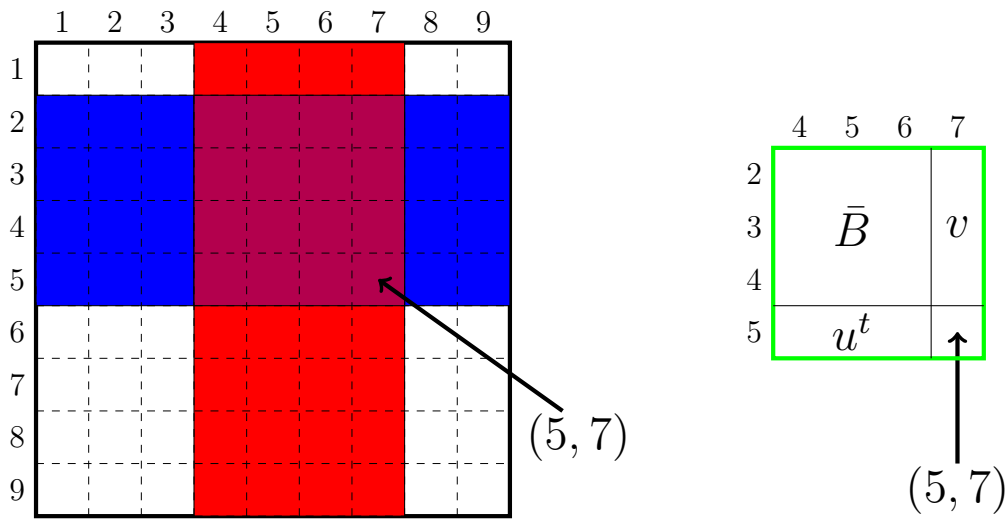

The following lemma is used in the proof of Lemma~\ref{lem:bad_value_condition}.

\begin{lemma}(e.g., \cite[pp. 24-26]{HornJohnson-book})
\label{lem:blockDeterminant}
If $U$ is a nonsingular $m \times m$ matrix,
$V$ is an $m \times n$ matrix,
$W$ is an $n \times m$ matrix,
and $Z$ is an $n \times n$ matrix, then
\begin{align*}
\det 
\begin{bmatrix} U & V \\ W & Z \end{bmatrix}
= \det(U) \det(Z - WU^{-1}V).
\end{align*}
\end{lemma}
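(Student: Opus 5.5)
The plan is the standard Gaussian elimination argument: use $U$ to clear the block $W$ by multiplying on the left with a block lower-triangular matrix of determinant one. Since $U$ is nonsingular, $U^{-1}$ exists, and I would write the factorization
\begin{align*}
\begin{bmatrix} U & V \\ W & Z \end{bmatrix}
=
\begin{bmatrix} I_m & 0 \\ WU^{-1} & I_n \end{bmatrix}
\begin{bmatrix} U & V \\ 0 & Z - WU^{-1}V \end{bmatrix}.
\end{align*}
Verifying this is a direct block multiplication. The top row of the product is $[U \;\; V]$. The bottom row is $[WU^{-1}U \;\; WU^{-1}V + Z - WU^{-1}V] = [W \;\; Z]$.

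The next step is to take determinants using multiplicativity, $\det(XY)=\det(X)\det(Y)$, which holds over any field, in particular $\fq$. The first factor is block lower triangular with identity diagonal blocks, so its determinant is $1$. The second factor is block upper triangular, so its determinant is $\det(U)\det(Z-WU^{-1}V)$.

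The only ingredient that is not pure computation is the fact that a block triangular matrix has determinant equal to the product of the determinants of its diagonal blocks. I would justify this by Laplace expansion along the first columns, or equivalently by noting that in the Leibniz expansion any permutation with a nonzero contribution must map the index set of one diagonal block into itself. Neither step presents a real obstacle; the statement is a classical Schur complement identity, valid over any commutative field.
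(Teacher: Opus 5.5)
Your proof is correct: the block factorization multiplies out as claimed, and the block-triangular determinant fact is properly justified via the Leibniz expansion. The paper gives no proof of its own and simply cites Horn and Johnson, where this Schur-complement factorization is exactly the standard argument, so your approach matches the intended one.
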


\begin{definition}
If $B$ is a square matrix over $\fq$
that becomes singular when its bottom-right element is replaced by $y\in \fq$,
then $y$ is called a \textit{bad value} of $B$.
\end{definition}

The following lemma is used in the proofs of 
Lemma~\ref{lem:two-bad-values-strong2} and
Lemma~\ref{lem:strong-inf}.

\begin{lemma}
If $B$ is a square matrix
with corner decomposition $(\bar{B}, u, v)$
and $\bar{B}$ is invertible,
then the quantity
$X(B) = u^\transpose\bar{B}^{-1} v$ is the unique bad value of $B$.
\label{lem:bad_value_condition}
\end{lemma}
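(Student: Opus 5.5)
The plan is to apply Lemma~\ref{lem:blockDeterminant} directly to $B$ in its corner decomposition. Write $B$ in block form with $U=\bar{B}$, which is $(k-1)\times(k-1)$ and invertible by hypothesis, $V=v$, $W=u^\transpose$, and $Z=b$, where $b$ is the bottom-right entry of $B$. Then
\begin{align*}
\det B = \det(\bar{B})\,\bigl(b - u^\transpose \bar{B}^{-1} v\bigr).
\end{align*}

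Next, replace the bottom-right entry $b$ by an arbitrary $y\in\fq$, and call the resulting matrix $B_y$. This changes neither $\bar{B}$, $u$, nor $v$, so the same identity gives $\det B_y = \det(\bar{B})\,(y - X(B))$. Since $\det\bar{B}\neq 0$ and $\fq$ is a field, $B_y$ is singular if and only if $y - X(B) = 0$, that is, $y = X(B)$.

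It follows that $X(B)$ is a bad value, and that it is the only one. The only point requiring care is noting that the block decomposition is legitimate, with $m=k-1\ge 1$ and $n=1$; there is no real obstacle here. If $k=1$ the corner decomposition is not defined, so the lemma implicitly assumes $k\ge 2$.
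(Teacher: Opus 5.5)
Your proposal is correct and matches the paper's proof. Both apply Lemma~\ref{lem:blockDeterminant} with $U=\bar{B}$, $V=v$, $W=u^\transpose$, and $Z$ the bottom-right entry to get $\det B = \det(\bar{B})\,(B_{k,k}-X(B))$, and conclude that singularity occurs exactly when the corner entry equals $X(B)$. Your explicit note that replacing the corner entry by $y$ leaves $\bar{B}$, $u$, $v$ unchanged makes the uniqueness claim a bit more transparent than the paper's wording.
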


\begin{proof}
If $B$ is $k\times k$, then applying
Lemma~\ref{lem:blockDeterminant}
with $m = k - 1$, $n = 1$, $U = \bar{B}$,
$V = v$, $W = u^\transpose$, and $Z = B_{k,k}$ gives
\begin{align*}
\det B = \left( B_{k,k} - X(B) \right)\cdot \det \bar{B}
\end{align*}
so $B$ is singular if and only if
its bottom-right element equals its bad value, i.e.,
$B_{k,k} = X(B)$.
\end{proof}

Note that in the trivial case of a $1\times 1$ matrix,
the matrix is also singular if and only if
its bottom-right element equals its bad value (i.e. $0$).
The uniqueness of the bottom-right element guaranteed 
in Lemma~\ref{lem:bad_value_condition}
disappears if the submatrix invertibility requirement is relaxed.
Specifically, it can be seen using a cofactor determinant calculation 
that if the top-left $(k-1)\times (k-1)$ submatrix of 
a $k\times k$ matrix over $\fq$ is singular,
then the determinant of the $k\times k$ matrix does not depend on its bottom-right entry,
and therefore
there are either no bad values or else every field element is a bad value.

% ---------------------------------------------------------------------------

%-------------------------------

For any location $(i,j)$ in a random matrix over $\fq$,
let $\theta_{i,j}$ be the event that
every square contiguous submatrix anchored at $(i,j)$ is nonsingular,
and let $\Gamma_{i,j}$ be the event that
$\theta_{u,v}$ occurs for all $(u,v)$ preceeding $(i,j)$ in raster-scan order.
Also let $\Gamma_{1,1}$ be the sure event.

If the random values of a square matrix
are selected in raster-scan order
and at some point all contiguous submatrices already created are invertible,
then it turns out that the next random value added to the matrix is unlikely to
be the bad value of more than one new submatrix.
The following lemma makes this precise by
giving an upper bound on the probability that two matrices
anchored at the same location have the same bad value,
given that all previous contiguous square submatrices 
in raster-scan order are nonsingular.
The lemma is used in the proof of Lemma~\ref{lem:Cauchy2}.

%---------------------------------------------------------------------------
\begin{lemma}
If $k^3/q \To \lambda\in [0,\infty)$ as $k,q\To\infty$,
then there exist sufficiently large $k$ and $q$ such that
for any random $k \times k$ matrix $A$ over $\fq$,
and any location $(i,j)$ in $A$,
the conditional probability
given $\Gamma_{i,j}$
that the bad values of all 
square contiguous submatrices of $A$ anchored at $(i,j)$
are distinct
is at least $1 - \frac{2(\minij)^2}{q}$.
\label{lem:two-bad-values-strong2}
\end{lemma}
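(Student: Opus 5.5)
Fix the location $(i,j)$ and let $m=\minij$. The square contiguous submatrices anchored at $(i,j)$ are $A_{i,j}(1),\dots,A_{i,j}(m)$. The plan is to bound the probability that some pair $A_{i,j}(s)$, $A_{i,j}(t)$ with $1\le s<t\le m$ shares a bad value, conditioned on $\Gamma_{i,j}$, and then take a union bound over the $\binom{m}{2}\le m^2/2$ pairs. It therefore suffices to show that each pair collides with conditional probability at most $4/q$ (or anything giving a total of at most $2m^2/q$).

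\textbf{Bad values are well defined.} On $\Gamma_{i,j}$, every square contiguous submatrix anchored at a location preceding $(i,j)$ in raster order is nonsingular. In particular, for each $t\ge 2$ the matrix $\bar{A}_{i,j}(t)=A_{i-1,j-1}(t-1)$ is nonsingular, so by Lemma~\ref{lem:bad_value_condition} each $A_{i,j}(t)$ has a unique bad value $X_t=u_t^\transpose \bar{A}_{i,j}(t)^{-1}v_t$. For $t=1$ the bad value is $X_1=0$. None of the $X_t$ depends on the entry $a_{i,j}$; they depend only on entries preceding $(i,j)$.

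\textbf{Bounding a single collision.} To bound $P(X_s=X_t\mid\Gamma_{i,j})$ for $s<t$, I would isolate one entry of the larger submatrix that does not lie in the smaller one and on which $X_t$ depends affinely. The natural choice is the top-left entry $a=a_{i-t+1,j-t+1}$ of $A_{i,j}(t)$. It lies outside $A_{i,j}(s)$, so $X_s$ does not depend on it. Expanding $\det A_{i,j}(t)$ along the first row and column shows that $X_t$ is a ratio of two functions, each affine in $a$. Concretely, $X_t=a_{i,j}-\det A_{i,j}(t)/\det \bar A_{i,j}(t)$, where both determinants are affine in $a$.

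Conditioning on all other entries, the equation $X_t=X_s$ clears denominators to a polynomial equation of degree at most $1$ in $a$. This equation has at most one root unless it is identically satisfied. The degenerate case is handled by the lower-right minors: the coefficients of $a$ are $(t-1)\times(t-1)$ and $(t-2)\times(t-2)$ contiguous minors anchored at $(i,j)$ or $(i-1,j-1)$. The latter are nonsingular on $\Gamma_{i,j}$, which rules out the identically-zero case except possibly via $\det A_{i,j}(t-1)$. Unconditionally, $a$ is uniform, so a collision has probability at most $1/q$.

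\textbf{Main obstacle: removing the conditioning on $\Gamma_{i,j}$.} Conditioning on $\Gamma_{i,j}$ destroys the uniformity of $a$. I would write
\[
P(E\mid\Gamma_{i,j})\le \frac{P(E)}{P(\Gamma_{i,j})},
\]
where $E$ is the event $\{X_s=X_t\}\cap\{\bar A\text{'s nonsingular}\}$ and $P(E)\le 1/q+(\text{degenerate probability }O(1/q))$. For the denominator, Theorem~\ref{thm:main-contiguous}-style reasoning, or simply a union bound using Lemma~\ref{lem:fullColRankProb}, gives
\[
P(\Gamma_{i,j})\ge 1-\sum \frac{1}{q-1}\ge 1-\frac{k^3}{q-1}.
\]
This is where the hypothesis $k^3/q\to\lambda<\infty$ enters. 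That lower bound is bounded away from zero only when $\lambda<1$, however, so for general $\lambda$ I would instead use the conditional uniformity of $a$ given everything except the constraints involving $a$. This is the step I expect to need the most care: for the finitely many submatrices containing $a$ that precede $(i,j)$, each forbids at most one value of $a$, by the same affine argument. Hence, given the other entries, $a$ is uniform on a set of size at least $q-O(k^2)$ — anchors whose submatrices contain $a$ number at most $O(t^2)$ — rather than $q-1$.

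Combining these, the conditional collision probability for a pair is at most $\frac{2}{q-O(k^2)}\le \frac{4}{q}$ once $q$ is large relative to $k^2$, which holds since $k^3/q$ is bounded. Summing over the $\binom{m}{2}$ pairs then yields the bound $1-2m^2/q$.
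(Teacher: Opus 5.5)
Your union bound over pairs is a sound framework, but the per-pair step fails because of the entry you isolate. Write $X_t=a_{i,j}-\det A_{i,j}(t)/\det\bar A_{i,j}(t)$, as you do, and apply Lemma~\ref{lem:bad_value_condition} to $A_{i,j}(t-1)$, i.e.\ $\det A_{i,j}(t-1)=(a_{i,j}-X_{t-1})\det A_{i-1,j-1}(t-2)$. Then the coefficient of the top-left entry $a$ in $(a_{i,j}-X_s)\det\bar A_{i,j}(t)-\det A_{i,j}(t)$ is exactly $(X_{t-1}-X_s)\det A_{i-1,j-1}(t-2)$, taking the $0\times 0$ determinant to be $1$. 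So for every adjacent pair $s=t-1$ the collision equation does not involve $a$ at all. For $s<t-1$ it degenerates exactly on the event $X_s=X_{t-1}$, which is another collision you have not bounded. Your claim that the nonsingular minors rule out the identically-zero case, and hence that $P(E)\le 1/q+O(1/q)$, is false. Adjacent pairs do in fact never collide on $\Gamma_{i,j}$: by the Desnanot--Jacobi identity, $X_t-X_{t-1}=\det A_{i,j-1}(t-1)\det A_{i-1,j}(t-1)/(\det A_{i-1,j-1}(t-1)\det A_{i-1,j-1}(t-2))$. But that is a separate argument your proposal does not contain.

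The conditioning step also fails. The entry $(i-t+1,j-t+1)$ comes early in raster order. The number of contiguous square submatrices containing it that $\Gamma_{i,j}$ forces to be nonsingular is not $O(t^2)$, nor even $O(k^2)$. For $(i,j)=(k,k)$ and $t=k/2$, the entry $(k/2+1,k/2+1)$ lies in about $\sum_{\ell\le k/2}\ell^2\approx k^3/24$ of them. Since $k^3/q\to\lambda$, your counting bound on the set where $a$ is conditionally uniform is only about $q(1-\lambda/24)$. That does not give $4/q$ per pair for large $\lambda$, and it is vacuous once $\lambda\ge 24$.

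The missing idea is to draw the randomness from the bottom row, which comes last in raster order. The bottom-left entry $(i,j-t+1)$ of $A_{i,j}(t)$ enters $X_t$ linearly. Its coefficient is the first entry of $\bar A_{i,j}(t)^{-1}v_t$, which is nonzero on $\Gamma_{i,j}$ because $A_{i-1,j}(t-1)$ is nonsingular. $X_s$ does not involve this entry. Among the constraints in $\Gamma_{i,j}$ it appears only in submatrices anchored at $(i,c)$ with $j-t+1\le c<j$, fewer than $k^2$ of them. Since $k^2/q\to 0$, the per-pair bound $1/(q-k^2)$ follows and your union bound goes through.

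The paper does this for all pairs at once. Given $\Gamma_{i,j'}$, the segment $u_m$ (with $m=\minij$) is uniform and independent of everything above row $i$. Also $(X(B_2),\dots,X(B_m))^t=Mu_m$ with $M$ anti-triangular and invertible, so the bad values are iid uniform. The leftover conditioning on $\Gamma_C$ costs only the factor $P(\Gamma_C\mid\Gamma_{i,j'})\ge(1-k/q)^k\ge 1/2$.
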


\begin{proof}
First note that if $\minij=1$,
then there is only one square contiguous submatrix anchored at $(i,j)$,
so the lemma is trivially true.
Thus, we will now assume $\minij\ge 2$.

Let $k$ and $q$ be large enough so that 
\begin{align}
\left( 1 - \frac{k}{q} \right)^k \ge \frac{1}{2}. \label{eq:804}
\end{align}
This can be done since 
$\frac{k^3}{q}\To\lambda<\infty$
as
$k,q\To\infty$,
and therefore
$\left( 1 - \frac{k}{q} \right)^k \To 1$
by Lemma~\ref{lem:convergence}.

\newcommand{\newj}{j'}
\newcommand{\bottomcorner}{{i,\newj}}
For each $m=1, \dots, \minij$,
let $B_m$ be the unique $m\times m$ submatrix anchored at $(i,j)$
(see Figure~\ref{Fig:Submatrices}).
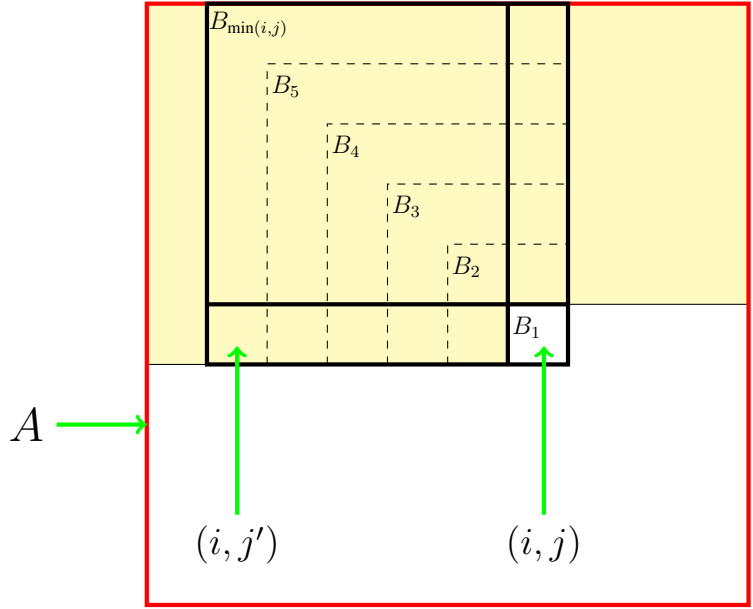
\begin{figure}[ht]
\begin{center}

\begin{tikzpicture}[scale=0.8]
\draw[fill=yellow!30] (-1,6)--(9,6)--(9,1)--(5,1)--(5,0)--(-1,0)--cycle;
\draw[red, ultra thick] (-1,6)--(9,6)--(9,-4)--(-1,-4)--cycle;
\draw[-, ultra thick] (0,0)--(6,0)--(6,6)--(0,6)--cycle;
\draw[-, ultra thick] (0,1)--(6,1);
\draw[-, ultra thick] (5,0)--(5,6);

\draw[dashed, thin] (4,0)--(4,2)--(6,2);
\draw[dashed, thin] (3,0)--(3,3)--(6,3);
\draw[dashed, thin] (2,0)--(2,4)--(6,4);
\draw[dashed, thin] (1,0)--(1,5)--(6,5);

\node[black, above left, scale=0.8] at (5.7,0.3) {$B_1$};
\node[black, above left, scale=0.8] at (4.7,1.3) {$B_2$};
\node[black, above left, scale=0.8] at (3.7,2.3) {$B_3$};
\node[black, above left, scale=0.8] at (2.7,3.3) {$B_4$};
\node[black, above left, scale=0.8] at (1.7,4.3) {$B_5$};
\node[black, above left, scale=0.8] at (1.5,5.3) {$B_{\minij}$};

\node[black, scale=1.2] at (5.6,-3.0) {$(i,j)$};
\draw[->, green, ultra thick] (5.6,-2.5)--(5.6,0.3);

\node[black, scale=1.2] at (0.5,-3.0) {$(i,\newj)$};
\draw[->, green, ultra thick] (0.5,-2.5)--(0.5,0.3);

\node[black, scale=1.5] at (-3.0,-1.0) {$A$};
\draw[->, green, ultra thick] (-2.5,-1.0)--(-1.0,-1.0);

\node[black, scale=0.0] at (11,-1.0) {$.$}; % Recenters the square by countering the $A$
\end{tikzpicture}
\end{center}
\caption{
Square contiguous submatrices anchored at $(i,j)$ 
shown when $\minij=6$.
The event $\Gamma_{i,j}$ is a function of the matrix elements in
the yellow region.
Hence these yellow entries are no longer iid when conditioned on $\Gamma_{i,j}$.
Each contiguous square submatrix anchored at $(i,j)$ is labeled inside its
top-left corner.
}
\label{Fig:Submatrices}
\end{figure}
Let $\newj =j+1-\minij$ and note that $(i,\newj)$ is the 
location in $A$ of the bottom-left corner of $B_{\minij}$.
Let $C$ be the collection of the 
left-most $\minij{-}1$ points in the bottom row of $B_{\minij}$.
Define 
$\Gamma_C = \displaystyle\bigcap_{(s,t) \in C} \theta_{s,t}$ 
to be the event that every square contiguous submatrix
anchored at a point in $C$ is nonsingular.
Then $\Gamma_{i,j} = \Gamma_C \cap \Gamma_\bottomcorner$.
If $\Gamma_{i,j}$ occurs,
then so does $\theta_{i{-}1,j{-}1}$,
so every square contiguous submatrix anchored at $(i-1,j-1)$ is nonsingular.
Then 
Lemma~\ref{lem:bad_value_condition} implies 
that the submatrices $B_1, \dots, B_{\minij}$ each have exactly 
one bad value,
namely $X(B_1), \dots, X(B_{\minij})$, respectively.

Let $U$ be the event that
the bad values of
all square contiguous submatrices of $A$ anchored at $(i,j)$ are distinct.
Since $B_1$ is a $1\times 1$ matrix, $X(B_1)=0$,
which means that
the event $U$ is equivalent to the event that
$X(B_2), \dots, X(B_{\minij})$ are all distinct and non-zero.

Then,
\begin{align}
P(U^c \mid \Gamma_{i,j})
&= P(U^c \mid \Gamma_C \cap \Gamma_\bottomcorner ) 
= \frac{P(U^c \cap \Gamma_C \mid \Gamma_\bottomcorner )}
        {P(\Gamma_C \mid \Gamma_\bottomcorner )} 
\le \frac{P( U^c \mid \Gamma_\bottomcorner )}{P(\Gamma_C \mid \Gamma_\bottomcorner )}.
 \label{eq:800}
\end{align}
We will first lower bound the denominator and then upper bound the numerator.

Conditioned on the event $\Gamma_{s,t}$ occurring,
$\theta_{s,t}$ is the event that 
$A_{s,t}$ is not a bad value for any square submatrix anchored at $(s,t)$.
Since there are at most $\min(s,t)$ bad values at $(s,t)$
and the value of the element at position $(s,t)$ is 
uniform on $\fq$ and independent of the event $\Gamma_{s,t}$,
the probability of $\theta_{s,t}^c$ given $\Gamma_{s,t}$ is
upper bounded as
\begin{align}
P(\theta_{s,t}^c \mid \Gamma_{s,t}) \le \frac{\min (s,t)}{q} \le \frac{k}{q}.
\label{eq:103}
\end{align}
Then,
\begin{align}
P(\Gamma_C \mid \Gamma_\bottomcorner )
&= \prod_{h=j'}^{j-1} 
   P\left(\theta_{i,h} \Big| \Gamma_\bottomcorner \cap 
           \bigcap_{j'\le r < h} \theta_{i,r}
    \right)\notag\\
&= \prod_{(s,t) \in C} P(\theta_{s,t} \mid \Gamma_{s,t})\label{eq:104}\\
&\ge \prod_{(s,t) \in C} \left( 1-\frac{k}{q} \right) \label{eq:101}\\
&\ge \left( 1-\frac{k}{q} \right)^k \label{eq:102} \\
&\ge \frac{1}{2} \label{eq:801}
\end{align}
where
\eqref{eq:104} follows since 
  $\Gamma_{i,j'} \subseteq \theta_{i,r}$ whenever $j' \le r < h$;
\eqref{eq:101} follows from \eqref{eq:103};
\eqref{eq:102} follows since $|C| \le k$;
and
\eqref{eq:801} follows from \eqref{eq:804}.

For each $m=2, \dots, \minij$,
let $(\bar{B}_m,u_m,v_m)$ be the corner decomposition of $B_m$.
Given $\Gamma_\bottomcorner $, 
the submatrices $\bar{B}_2, \dots, \bar{B}_{\minij}$ are nonsingular,
since their mutual bottom-right corner $(i-1,j-1)$
preceeds $(i,j')$ in raster-scan order.
Let $M$ be a $(\minij {-}1) \times (\minij {-}1)$ matrix
whose $m$th row is $\minij{-}m{-}1$ zeros
followed by $(\bar{B}_{m+1}^{-1}v_{m+1})^t$
(see Figure~\ref{Fig:M-matrix}).

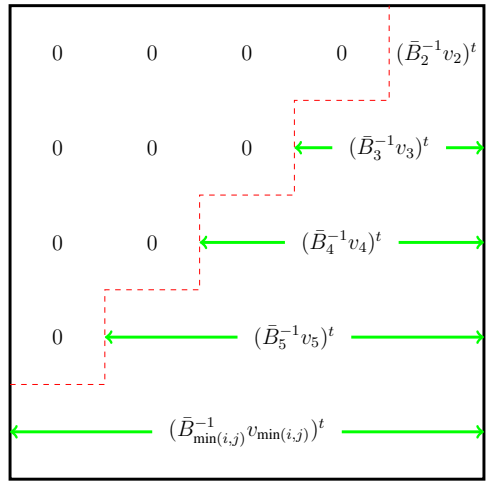
\begin{figure}[ht]
\begin{center}
\scalebox{.7}{
\begin{tikzpicture}[scale=1.8]

\draw[-, ultra thick] (0,0)--(5,0)--(5,5)--(0,5)--cycle;
\draw[dashed, red, thin] (0,1)--(1,1)--(1,2)--(2,2)--(2,3)--(3,3)--(3,4)--(4,4)--(4,5);

\node[black, scale=1] at (2.5,0.5) {$(\bar{B}_{\minij}^{-1} v_{\minij})^t$};
\node[black, scale=1] at (3.0,1.5) {$(\bar{B}_5^{-1} v_5)^t$};
\node[black, scale=1] at (3.5,2.5) {$(\bar{B}_4^{-1} v_4)^t$};
\node[black, scale=1] at (4.0,3.5) {$(\bar{B}_3^{-1} v_3)^t$};
\node[black, scale=1] at (4.5,4.5) {$(\bar{B}_2^{-1} v_2)^t$};

\draw[->, green, ultra thick] (1.5, 0.5)--(0.0,0.5);
\draw[->, green, ultra thick] (3.5, 0.5)--(5.0,0.5);

\draw[->, green, ultra thick] (2.4, 1.5)--(1.0,1.5);
\draw[->, green, ultra thick] (3.6, 1.5)--(5.0,1.5);

\draw[->, green, ultra thick] (2.9, 2.5)--(2.0,2.5);
\draw[->, green, ultra thick] (4.1, 2.5)--(5.0,2.5);

\draw[->, green, ultra thick] (3.4, 3.5)--(3.0,3.5);
\draw[->, green, ultra thick] (4.6, 3.5)--(5.0,3.5);

\node[black, scale=1] at (0.5,1.5) {$0$};

\node[black, scale=1] at (0.5,2.5) {$0$};
\node[black, scale=1] at (1.5,2.5) {$0$};

\node[black, scale=1] at (0.5,3.5) {$0$};
\node[black, scale=1] at (1.5,3.5) {$0$};
\node[black, scale=1] at (2.5,3.5) {$0$};

\node[black, scale=1] at (0.5,4.5) {$0$};
\node[black, scale=1] at (1.5,4.5) {$0$};
\node[black, scale=1] at (2.5,4.5) {$0$};
\node[black, scale=1] at (3.5,4.5) {$0$};
\end{tikzpicture}
}
\end{center}
\caption{
The matrix $M$ shown when $\minij=6$.
}
\label{Fig:M-matrix}
\end{figure}

For each $m=2, \dots, \minij$, since
$v_m = \bar{B}_m  (\bar{B}_m^{-1} v_m)$,
if the first entry of $\bar{B}_m^{-1} v_m$ is $0$, then
$v_m$ is a linear combination of the right-most $m-2$ columns of $\bar{B}_m$
(or $v_m=0$ in the case $m=2$).
But this would imply the $(m-1)\times (m-1)$ matrix anchored at $(i-1,j)$ is singular,
contradicting the occurrence of $\Gamma_\bottomcorner$.
Thus, all of the anti-diagonal elements of the matrix $M$ are non-zero,
implying that $M$ has non-zero determinant and thus has full row-rank.

Note that for each $m=2, \dots, \minij$,
the bad value $X(B_m)$ is equal to the $(m-1)$th row of $M$ multiplied by $u_{\minij}$.
In matrix form, this means
\begin{align}
M u_{\minij} &=
[ X(B_2)\ \dots\ X(B_{\minij}) ]^t
\label{eq:105}
\end{align}
and the entries of $u_{\minij}$ 
(which are the values at the positions in $C$)
are iid uniform on $\mathbb{F}_q$ 
(since only $\Gamma_\bottomcorner$ is given).
Thus,
the probability of these $\minij -1$
bad values all being distinct and non-zero given $\Gamma_\bottomcorner$ is 
\begin{align}
P( U \mid \Gamma_\bottomcorner ) 
&= \prod_{m=1}^{\minij-1} \frac{q-m}{q}\label{eq:808}\\
&\ge  \left( 1 - \frac{\minij}{q} \right)^{\minij}\notag\\
&\ge  1 - \frac{(\minij)^2}{q} \label{eq:806}
\end{align}
where
\eqref{eq:808} follows since $X(B_2), \dots, X(B_{\minij})$ are iid uniform on $\mathbb{F}_q$
from \eqref{eq:105} and Lemma~\ref{lem:uniform};
and
\eqref{eq:806} follows from Lemma~\ref{lem:Delta}
   since $\frac{\minij}{q} \le \frac{k}{q} < 1$
   for sufficiently large $q$ and $k$.

Therefore, combining 
\eqref{eq:800}, \eqref{eq:801}, and \eqref{eq:806}
gives
\begin{align*}
P(U \mid \Gamma_{i,j})
= 1 - P(U^c \mid \Gamma_{i,j})
\ge 1 - \frac{P( U^c \mid \Gamma_\bottomcorner )}{P(\Gamma_C \mid \Gamma_\bottomcorner )}
\ge 1 - \frac{2(\minij)^2}{q} .
\end{align*}
\end{proof}
% 
%---------------------------------------------------------------------------

The following lemma proves one direction of
Theorem~\ref{thm:main-contiguous}.

\begin{lemma}
For any $\lambda\in [0,\infty]$,
the probability that a random $k\times k$ matrix 
over $\fq$ is contiguous super-regular is lower bounded as
\begin{align*}
\liminf_{\substack{k,q \To \infty \\ \frac{1}{3}k^3 / q \To \lambda }}
\ProbCSR{q}{k} 
&\ge e^{-\lambda}. 
\end{align*}
\label{lem:strong-inf}
\end{lemma}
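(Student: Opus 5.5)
We plan to expose the matrix entries one at a time in raster-scan order and write the probability of contiguous super-regularity as a telescoping product of conditional probabilities:
\begin{align*}
\ProbCSR{q}{k} = P\Big(\bigcap_{(i,j)} \theta_{i,j}\Big) = \prod_{i=1}^{k}\prod_{j=1}^{k} P(\theta_{i,j} \mid \Gamma_{i,j}),
\end{align*}
where the factor for $(i,j)$ conditions on all earlier anchors being fine. This works because $\Gamma_{i,j}\cap\theta_{i,j}$ is exactly the event $\Gamma$ at the next position in raster-scan order, and the product ends with the event that every anchor is fine, which is contiguous super-regularity.

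The key observation, already used in the proof of Lemma~\ref{lem:two-bad-values-strong2}, concerns what happens once $\Gamma_{i,j}$ has occurred. Then every $\bar{B}_m$ for a submatrix anchored at $(i,j)$ is a contiguous square submatrix anchored at $(i-1,j-1)$, hence nonsingular. By Lemma~\ref{lem:bad_value_condition}, each of the $\minij$ square contiguous submatrices anchored at $(i,j)$ therefore has exactly one bad value. These bad values are functions of entries preceding $(i,j)$, while the entry $A_{i,j}$ is uniform on $\fq$ and independent of $\Gamma_{i,j}$ and of those bad values. Hence $\theta_{i,j}$ fails exactly when $A_{i,j}$ hits one of at most $\minij$ field elements, which gives
\begin{align*}
P(\theta_{i,j}\mid \Gamma_{i,j}) \ge 1 - \frac{\minij}{q}.
\end{align*}
We do not need the distinctness estimate of Lemma~\ref{lem:two-bad-values-strong2} here; that is only needed for the matching upper bound.

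Multiplying these bounds gives $\ProbCSR{q}{k} \ge \prod_{i,j}\big(1-\minij/q\big)$. To finish, we will use $\log(1-x)\ge -x - x^2$ for $x\le 1/2$. This applies when $k/q\to 0$, which holds for $\lambda<\infty$ because $k^3/q$ is bounded and $k\to\infty$. We then need the sum $S_k=\sum_{i,j=1}^k \minij$. Counting, for each $m$, the pairs with $\minij \ge m$ gives
\begin{align*}
S_k=\sum_{m=1}^k (k-m+1)^2=\sum_{m=1}^k m^2 \sim k^3/3.
\end{align*}
So the first-order term is $S_k/q\to\lambda$. The second-order term is at most $\sum_{i,j}\minij^2/q^2\le k^4/q^2 = (k^3/q)\cdot(k/q)\to 0$. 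Therefore $\liminf \ProbCSR{q}{k}\ge e^{-\lambda}$.

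The case $\lambda=\infty$ is trivial, since $e^{-\infty}=0$ and probabilities are nonnegative. The main point requiring care is the independence claim in the second step. The bad values at $(i,j)$ depend only on entries strictly before $(i,j)$ in raster order, since $u$ and $v$ exclude the corner. The event $\Gamma_{i,j}$ also depends only on those entries. So conditioning on $\Gamma_{i,j}$ leaves $A_{i,j}$ uniform and independent of the bad values, and a union bound over at most $\minij$ values gives the factor. Everything else is elementary asymptotics.
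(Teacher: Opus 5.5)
Your proposal is correct and follows the paper's proof essentially step for step. Both arguments use the raster-scan chain rule. Both use the observation that, given $\Gamma_{i,j}$, each $\bar{B}_m$ is anchored at $(i-1,j-1)$ and is therefore invertible, so Lemma~\ref{lem:bad_value_condition} leaves at most $\min(i,j)$ bad values. Both then conclude $P(\theta_{i,j}\mid\Gamma_{i,j})\ge 1-\min(i,j)/q$. The only difference is cosmetic: the paper passes to the exponential via $1-x\ge e^{-x/(1-x)}$ and obtains $\exp\{-(\frac{1}{3}k^3+\frac{1}{2}k^2+\frac{1}{6}k)/(q-k)\}$, whereas you use $\log(1-x)\ge -x-x^2$ and kill the second-order term with $k^4/q^2\to 0$; both routes are valid.
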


\begin{proof}
When $\lambda=\infty$ the result is trivial, so assume $\lambda<\infty$.
Let $A$ be a $k\times k$ random matrix over field $\fq$.
For each $i,j \in \{1,\dots,k\}$
and for each $m\le \minij$,
let $A_{i,j}(m)$ denote the contiguous $m\times m$ 
submatrix of $A$ anchored at $(i,j)$.
For all $i,j,u,v \in \{1, \dots, k\}$ 
we write $(u,v) \le (i,j)$ to mean that
the location $(u,v)$ comes at or before the location $(i,j)$
of the matrix $A$ in the usual raster-scan order
from left to right and top to bottom,
i.e., $(u-1)k + v \le (i-1)k + j$.

First note that $A_{i,j}(1)$ has a unique bad value of $0$ for any $(i,j)$.
Now consider $A_{i,j}(m)$ when $m \ge 2$ (and therefore $i,j \ge 2$).
Then given $\Gamma_{i,j}$ 
(which implies the event $\Gamma_{i-1,j-1}$ occurred, since $\Gamma_{i-1,j-1} \supseteq \Gamma_{i,j}$),
the upper left $(m-1) \times (m-1)$ submatrix of $A_{i,j}(m)$ is nonsingular,
and so Lemma~\ref{lem:bad_value_condition} implies that $A_{i,j}(m)$ has a unique bad value.
Therefore, 
given $\Gamma_{i,j}$,
there are at most $\minij$ bad values for position $(i,j)$ 
and so 
$P\left(\theta_{i,j}^c \mid \Gamma_{i,j}\right) \le \frac{\minij}{q}$.

The inequality $e^y \ge 1+y$ 
with $y=\frac{x}{1-x}$ implies
$1-x \ge - e^{-x/(1-x)}$ for all $x\in (0,1)$,
so whenever $k\le \minij < q$ we have:
\begin{align}
P\left(\theta_{i,j} \mid \Gamma_{i,j}\right)
\ge 1-\frac{\minij}{q}
\ge e^{-\frac{\minij/q}{1-\minij/q}}
= e^{- \frac{\minij}{q-\minij}}
\ge e^{- \frac{\minij}{q-k}}.
\label{eq:502}
\end{align}

The probability that the matrix $A$ 
is contiguous super-regular is then lower bounded as follows:
\begin{align}
\ProbCSR{q}{k} 
&= P\left(\bigcap_{(1,1) \le (i,j) \le (k,k)} \theta_{i,j}\right) \notag\\
&= \prod_{(1,1) \le (i,j) \le (k,k)} P\left(\theta_{i,j} \middle\vert 
   \bigcap_{(1,1) \le (u,v) < (i,j)} \theta_{u,v}\right) \label{eq:500}\\
&= \prod_{(1,1) \le (i,j) \le (k,k)} P\left(\theta_{i,j} \mid \Gamma_{i,j}\right) \notag\\
&\ge \prod_{(1,1) \le (i,j) \le (k,k)} e^{- \frac{\minij}{q-k}} \label{eq:501}\\
&= \exp\left\{ {- \frac{1}{q-k} \sum_{1\le i,j \le k} \minij} \right\}  \notag\\
&= e^{- \frac{\frac{1}{3} k^3 + \frac{1}{2}k^2 + \frac{1}{6}k}{q-k}} \label{eq:1001}
\end{align}
where
\eqref{eq:500} follows from the chain rule;
\eqref{eq:501} follows from \eqref{eq:502};
and
\eqref{eq:1001} follows from Lemma~\ref{lem:num_connected}.
Thus, for any $\lambda\in [0,\infty)$,
\begin{align*}
\liminf_{\substack{q,k \To \infty \\ \frac{1}{3}k^3 / q \To \lambda }} 
\ProbCSR{q}{k}
&\ge \liminf_{\substack{q,k \To \infty \\ \frac{1}{3}k^3 / q \To \lambda }} 
  e^{- \frac{\frac{1}{3} k^3 + \frac{1}{2}k^2 + \frac{1}{6}k}{q-k}}
= e^{-\lambda}.
\end{align*}
\end{proof}

%-----------------------
\begin{lemma}
For any $\lambda\in [0,\infty)$,
the probability that a random $k\times k$ matrix
over $\fq$ is contiguous super-regular is upper bounded as
\begin{align*}
\limsup_{ \substack{q,k \To \infty \\ \frac{1}{3}k^3/q \To \lambda}} \, \ProbCSR{q}{k} 
&\le e^{-\lambda}.
\end{align*}
\label{lem:Cauchy2}
\end{lemma}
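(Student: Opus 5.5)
We will mirror the proof of Lemma~\ref{lem:strong-inf}, but bound each conditional factor from above instead of from below. By the chain rule as in \eqref{eq:500},
\begin{align*}
\ProbCSR{q}{k} = \prod_{(1,1)\le (i,j)\le (k,k)} P\left(\theta_{i,j} \mid \Gamma_{i,j}\right).
\end{align*}
So it suffices to show that each factor is at most roughly $1-\frac{\minij}{q}$, with an error whose total contribution over all $k^2$ positions vanishes.

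The key observation is about position $(i,j)$. Given $\Gamma_{i,j}$, every square contiguous submatrix anchored at $(i,j)$ has exactly one bad value (Lemma~\ref{lem:bad_value_condition}). These bad values are determined by entries strictly preceding $(i,j)$ in raster-scan order. The entry $A_{i,j}$ is uniform on $\fq$ and independent of $\Gamma_{i,j}$ and of all those earlier entries. Let $U_{i,j}$ be the event that the $\minij$ bad values are distinct. Then
\begin{align*}
P(\theta_{i,j}^c \mid \Gamma_{i,j}) \ge \frac{\minij}{q}\, P(U_{i,j}\mid\Gamma_{i,j}) \ge \frac{\minij}{q}\left(1-\frac{2(\minij)^2}{q}\right),
\end{align*}
where the last step uses Lemma~\ref{lem:two-bad-values-strong2}. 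That lemma applies for large $k,q$, since $\lambda<\infty$ gives $k^3/q\To 3\lambda<\infty$. Hence
\begin{align*}
P(\theta_{i,j}\mid\Gamma_{i,j}) \le 1-\frac{\minij}{q}+\frac{2(\minij)^3}{q^2} \le \exp\left(-\frac{\minij}{q}+\frac{2k^3}{q^2}\right).
\end{align*}

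Multiplying over all $(i,j)$ and applying Lemma~\ref{lem:num_connected} to $\sum_{i,j}\minij$ gives
\begin{align*}
\ProbCSR{q}{k} \le \exp\left(-\frac{\frac13 k^3+\frac12 k^2+\frac16 k}{q} + \frac{2k^5}{q^2}\right).
\end{align*}
The first term in the exponent tends to $-\lambda$. The main obstacle is the error term $2k^5/q^2$. Since $q\asymp k^3$, it behaves like $k^5/k^6 = 1/k \To 0$, so it is harmless. When $\lambda=0$ the claim is trivial, because the limsup of a probability is at most $1$. If $\lambda>0$ but $q$ grows faster than $k^3$, the same bound still gives $k^5/q^2\le k^{-1}$ eventually.

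One remaining point to check is that Lemma~\ref{lem:two-bad-values-strong2} gives its bound for all positions uniformly once $k,q$ are large enough. Its proof only needs \eqref{eq:804}, so this holds. Taking the limsup then yields $e^{-\lambda}$.
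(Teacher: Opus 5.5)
Your proposal is correct and follows essentially the same route as the paper: the chain rule over raster-scan positions, the bound $P(\theta_{i,j}^c\mid\Gamma_{i,j})\ge \frac{\minij}{q}\bigl(1-\frac{2(\minij)^2}{q}\bigr)$ obtained by conditioning on the bad values at $(i,j)$ being distinct via Lemma~\ref{lem:two-bad-values-strong2}, and the summation from Lemma~\ref{lem:num_connected}. The only difference is cosmetic: you finish with $1+x\le e^x$, which is slightly cleaner than the paper's use of Lemma~\ref{lem:Delta} and Lemma~\ref{lem:convergence}, since it avoids needing $\minij\bigl(1-2(\minij)^2/q\bigr)\ge 1$ and discarding the first row and column, and your error term $2k^5/q^2=2(k^3/q)^2/k\to 0$ is handled correctly.
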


\begin{proof}
Let $A$ be a random $k\times k$ matrix over $\fq$.
For each $i,j$ let $\beta_{i,j}$ be the event that there
are $\minij$ bad values at $(i,j)$.
Then $\beta_{i,j}$ occurs precisely when all possible submatrices
anchored at $(i,j)$ have unique bad values.
That is, no two such submatrices share the same bad value.

Lemma~\ref{lem:two-bad-values-strong2} implies that
for sufficiently large $k$ and $q$ and for all $i,j$,
the probability that all the bad values at $(i,j)$ are distinct,
given that all matrices anchored at locations prior 
to $(i,j)$ in raster-scan order are
are nonsingular, is lower bounded as
\begin{align}
P(\beta_{i,j} | \Gamma_{i,j})
&\ge 1 - \frac{2(\minij)^2}{q} .
\label{eq:301}
\end{align}

Since the matrix element $A_{i,j}$ is chosen uniformly at random from $\fq$,
the probability that $A_{i,j}$ is a bad value of at least one contiguous
submatrix anchored at $(i,j)$,
given that $\beta_{i,j}$ occurs,
is
\begin{align*}
P(\theta_{i,j}^c \,| \, \beta_{i,j} ) 
&= \frac{\minij}{q}.
\end{align*}
If we also know that $\Gamma_{i,j}$ occurred,
then this does not change the fact that $A_{i,j}$ is chosen uniformly
at random from $\fq$, so the same probability still holds, i.e., 
\begin{align}
P(\theta_{i,j}^c \,| \, \beta_{i,j}, \Gamma_{i,j} ) &= \frac{\minij}{q}.  \label{eq:245}
\end{align}
Thus, for all sufficiently large $k$ and $q$ and all $i,j\le k$,
\begin{align}
P(\theta_{i,j} \,| \, \Gamma_{i,j} )
&= P(\theta_{i,j} \,| \, \Gamma_{i,j}, \beta_{i,j} ) P(\beta_{i,j} | \Gamma_{i,j})
 + P(\theta_{i,j} \,| \, \Gamma_{i,j}, \beta_{i,j}^c ) P(\beta_{i,j}^c | \Gamma_{i,j})\notag\\
&= \left( 1 - \frac{\minij}{q}\right) P(\beta_{i,j} | \Gamma_{i,j})
 + P(\theta_{i,j} \,| \, \Gamma_{i,j}, \beta_{i,j}^c ) P(\beta_{i,j}^c | \Gamma_{i,j})
    \label{eq:246}\\
&\le \left( 1 - \frac{\minij}{q}\right) P(\beta_{i,j} | \Gamma_{i,j})
 + 1 - P(\beta_{i,j} | \Gamma_{i,j})\\
&=  1 - \frac{\minij}{q} P(\beta_{i,j} | \Gamma_{i,j})\notag\\
&\le 1 - \frac{\minij}{q}\left( 1 - \frac{2(\minij)^2}{q}\right) \label{eq:300}
\end{align}
where
\eqref{eq:246} follows from \eqref{eq:245};
and
\eqref{eq:300} follows from \eqref{eq:301}.
Note that 
$\frac{(\minij)^2}{q} \le \frac{k^2}{q}\To 0$ as 
$k,q\To \infty$ and $\frac{k^3}{q}\To \lambda$.
Thus, if $k$ and $q$ are sufficiently large
and $2\le i,j\le k$, then
\begin{align*}
\minij \left( 1 - \frac{2(\minij)^2}{q} \right) \ge 1
\end{align*}
in which case
\eqref{eq:300} and Lemma~\ref{lem:Delta} imply
\begin{align}
P(\theta_{i,j} \,| \, \Gamma_{i,j} )
&\le \left( 1 - \frac{1}{q}\right)^{ \minij - \frac{2(\minij)^3}{q} } 
 \le \left( 1 - \frac{1}{q}\right)^{ \minij - \frac{2k^3}{q} } .
\label{eq:238}
\end{align}
The probability that $A$ is contiguous super-regular 
is then upper bounded for sufficiently large $k$ and $q$ using
\begin{align}
\ProbCSR{q}{k}
&= P\left( \bigcap_{i,j=1}^k \theta_{i,j}\right)
 = \prod_{i,j=1}^k P(\theta_{i,j} | \Gamma_{i,j}) \notag\\
&\le \prod_{i,j=2}^k P(\theta_{i,j} | \Gamma_{i,j}) \notag\\
&\le \prod_{i,j =2}^k 
 \left(1-\frac{1}{q}
   \right)^{\minij - \frac{2k^3}{q} }
     \label{eq:236b}\\
&\le \left(1 - \frac{1}{q}\right)^{ - \frac{2k^5}{q} + \dsum_{i,j=2}^k \minij }\notag\\
&= \left(1 - \frac{1}{q}
    \right)^{ - \frac{2k^5}{q} 
          + \left( \frac{k^3}{3} + \frac{k^2}{2} + \frac{k}{6} \right)  
             -(2k-1) } \label{eq:238b}\\
&= \left(1 - \frac{1}{q}
    \right)^{q\left( - \frac{2k^5}{q^2} 
                    +  \frac{k^3}{3q} + \frac{k^2}{2q} - \frac{11k}{6q} + \frac{1}{q}
                 \right) } \notag\\
&\To e^{-\lambda}
\ \ \ \ \text{as}\ k,q \To\infty\ \text{and}\ 
                 \frac{1}{3}k^3/q \To \lambda \label{eq:243}
\end{align}
where
\eqref{eq:236b} follows from \eqref{eq:238};
\eqref{eq:238b} follows from Lemma~\ref{lem:num_connected};
and
\eqref{eq:243} follows from Lemma~\ref{lem:convergence}.
Therefore,
\begin{align*}
\limsup_{ \substack{q,k \To \infty \\ \frac{1}{3}k^3/q \To \lambda }} \, \ProbCSR{q}{k} 
&\le e^{-\lambda}.
\end{align*}
\end{proof}
%---------------------

%-----------------------
\begin{theorem} 
Let $A$ be a random $k \times k$ matrix over a field of size $q$,
where $q, k \To \infty$.\\
If $\frac{k^3/3}{q} \To \lambda \in [0,\infty]$, 
then 
$P(A \text{\ is contiguous super-regular}) 
   \To e^{-\lambda}$.
\label{thm:main-contiguous}
\end{theorem}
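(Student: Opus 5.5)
The plan is to combine Lemma~\ref{lem:strong-inf} and Lemma~\ref{lem:Cauchy2}, treating the cases $\lambda<\infty$ and $\lambda=\infty$ separately. Throughout, $\ProbCSR{q}{k}$ is the probability in question.

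\emph{Case $\lambda\in[0,\infty)$.} Lemma~\ref{lem:strong-inf} gives
\[
\liminf \ProbCSR{q}{k}\ge e^{-\lambda}
\]
along any sequence with $k,q\To\infty$ and $\frac{1}{3}k^3/q\To\lambda$. Lemma~\ref{lem:Cauchy2} gives
\[
\limsup \ProbCSR{q}{k}\le e^{-\lambda}
\]
along the same sequences. Together these yield $\ProbCSR{q}{k}\To e^{-\lambda}$. Both lemmas are stated for arbitrary sequences with the given limiting ratio, so nothing more is needed here.

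\emph{Case $\lambda=\infty$.} Lemma~\ref{lem:Cauchy2} requires $\lambda<\infty$, because Lemma~\ref{lem:two-bad-values-strong2} uses $k^3/q$ bounded. So this case needs a separate monotonicity argument. For a fixed $q$, contiguous super-regularity of a $k\times k$ matrix implies contiguous super-regularity of its top-left $k'\times k'$ submatrix for every $k'\le k$. Since the entries are iid, this gives
\[
\ProbCSR{q}{k}\le \ProbCSR{q}{k'} \quad\text{for all } k'\le k.
\]
Given $\frac{1}{3}k^3/q\To\infty$, fix any $M>0$ and choose $k'=k'(q)\le k$ with $k'\To\infty$ and $\frac{1}{3}k'^3/q\To M$. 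Concretely, take $k'=\lfloor (3Mq)^{1/3}\rfloor$, which is eventually at most $k$. Applying the finite case of Lemma~\ref{lem:Cauchy2} along this subsequence gives
\[
\limsup \ProbCSR{q}{k}\le \limsup \ProbCSR{q}{k'}\le e^{-M}.
\]
Since $M$ is arbitrary, $\ProbCSR{q}{k}\To 0=e^{-\infty}$.

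\emph{Main obstacle.} The only subtle point is the $\lambda=\infty$ case. The monotonicity reduction has to be legitimate, which it is because contiguous submatrices of the top-left block are contiguous submatrices of $A$. Also, Lemma~\ref{lem:Cauchy2} must be applied to the auxiliary sequence $(q,k')$ with $k',q\To\infty$; this holds since $q\To\infty$ forces $k'\sim(3Mq)^{1/3}\To\infty$. Everything else is direct citation.
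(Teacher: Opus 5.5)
Your proposal is correct and follows the paper's proof. For finite $\lambda$ you sandwich the limit between Lemma~\ref{lem:strong-inf} and Lemma~\ref{lem:Cauchy2}, and for $\lambda=\infty$ you combine monotonicity of $\ProbCSR{q}{k}$ in $k$ with Lemma~\ref{lem:Cauchy2}, exactly as the paper does. Your explicit choice $k'=\lfloor (3Mq)^{1/3}\rfloor$ spells out the reduction that the paper's step ``$\frac{1}{3}k^3/q>\lambda'$, and thus $\ProbCSR{q}{k}\le e^{-\lambda'}+\epsilon$'' leaves implicit.
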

\begin{proof}
For all $\lambda\in [0,\infty)$,  
Lemma~\ref{lem:strong-inf} and
Lemma~\ref{lem:Cauchy2} imply
\begin{align}
\limsup_{ \substack{q,k \To \infty \\ \frac{1}{3}k^3/q \To \lambda}} \, \ProbCSR{q}{k} 
&\le e^{-\lambda} 
\le \liminf_{ \substack{q,k \To \infty \\ \frac{1}{3}k^3/q \To \lambda}} \ \ProbCSR{q}{k} 
\label{eq:final-strong-result}
\end{align}
so $\ProbCSR{q}{k} \To e^{-\lambda}$
as $q,k \To \infty$ and $\frac{1}{3}k^3/q \To \lambda$.

When $\lambda=\infty$,
Lemma~\ref{lem:strong-inf} also implies the
right-hand bound in \eqref{eq:final-strong-result}.

Now assume $\lambda=\infty$
and let us show the left-hand bound in \eqref{eq:final-strong-result}.
Note that $\ProbCSR{q}{k}$ is monotone decreasing in $k$ 
since any square matrix larger than
$k\times k$ contains multiple $k\times k$ contiguous submatrices.
For all positive real numbers 
$\lambda'$ and $\epsilon>0$,
and for sufficiently large $k$ and $q$,
we have $\frac{1}{3}k^3/q > \lambda'$,
and thus $\ProbCSR{q}{k} \le e^{-\lambda'} + \epsilon$
by Lemma~\ref{lem:Cauchy2}.
Taking $\lambda'$ arbitrarily large and $\epsilon$ arbitrarily small
implies $\ProbCSR{q}{k} \To 0$.
\end{proof}

We note that Theorem~\ref{thm:main-contiguous} still holds even when
$k$ is bounded 
(i.e., when $\lambda=0$),
because Lemma~\ref{lem:strong-inf} can be seen to hold 
without requiring $k\To\infty$.

%================================================
\clearpage
\section{Enumeration of small super-regular matrices}
\label{sec:small-matrices}

The numbers of $1\times 1$ and $2\times 2$ super-regular and
contiguous super-regular matrices are easily seen to be
the polynomials
$\CountSR{q}{1}=\CountCSR{q}{1}=q-1$
and
$\CountSR{q}{2}=\CountCSR{q}{2}=(q-1)^3(q-2)$.

In this section we focus on how many 
$3\times 3$ and $4\times 4$
matrices are super-regular or contiguous super-regular
and we include some computational results.

% ---------------------------------------------------------------------------
We say a matrix over a field is in 
\textit{normal form} if its first column and first row are all $1$s.
For each $k\ge 1$, 
define a binary relation $\sim$ on $k\times k$ matrices
over $\fq\setminus \{0\}$
such that for any matrices $A$ and $B$,
we write $A\sim B$ if $B$ can be obtained by multiplying some 
(possibly empty) 
sequence of rows and columns of $A$
by non-zero field elements.
One can verify that  $\sim$
is an equivalence relation and 
each equivalence class has exactly one matrix in normal form.

The following lemma shows that in order to count the number of 
super-regular 
(or contiguous super-regular) 
square matrices over a given field
it suffices to assume each matrix is in normal form,
thus reducing the computational complexity.

\begin{lemma}
The number 
$\CountSR{q}{k}$ 
(respectively, $\CountCSR{q}{k}$) 
of $k\times k$ 
super-regular 
(respectively, contiguous super-regular) 
matrices over $\fq$
is $(q-1)^{2k-1}$ times the number 
of $k\times k$ 
super-regular 
(respectively, contiguous super-regular) 
matrices over $\fq$ in normal form.
\label{ones-in-top-row-left-col}
\end{lemma}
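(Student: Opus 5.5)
Every super-regular matrix, and likewise every contiguous super-regular matrix, has all entries nonzero, because every $1\times 1$ submatrix is contiguous. So both counts only involve matrices over $\fq\setminus\{0\}$, which is where the relation $\sim$ is defined. The plan is to show three things. First, super-regularity and contiguous super-regularity are invariant under $\sim$. Second, each equivalence class has exactly $(q-1)^{2k-1}$ elements. Third, each class contains exactly one normal-form matrix, as the paper already notes. Summing over classes then gives the claim.

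For invariance, write $B = D_1 A D_2$, where $D_1=\mathrm{diag}(r_1,\dots,r_k)$ and $D_2=\mathrm{diag}(c_1,\dots,c_k)$ have nonzero diagonal entries. Applying any finite sequence of row and column scalings produces a matrix of exactly this form. Take any square submatrix of $A$ with row index set $R$ and column index set $S$, contiguous or not. The corresponding submatrix of $B$ is $D_1[R]\,A[R,S]\,D_2[S]$, so its determinant is $\prod_{i\in R} r_i\prod_{j\in S} c_j\cdot\det A[R,S]$. This is nonzero exactly when $\det A[R,S]\neq 0$. Since the index sets are unchanged, contiguity of the submatrix is also preserved. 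Hence $A$ is (contiguous) super-regular if and only if $B$ is.

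For the class size, consider the map $(r,c)\mapsto D_1AD_2$ from $(\fq^*)^k\times(\fq^*)^k$ onto the class of $A$. Two pairs give the same matrix exactly when $r_ic_j=r_i'c_j'$ for all $i,j$, because every entry of $A$ is nonzero. This forces $r'=tr$ and $c'=t^{-1}c$ for a single scalar $t\in\fq^*$. To see this, set $t=r_1'/r_1$; then the $(1,j)$ equations give $c_j'=c_j/t$, and the $(i,1)$ equations then give $r_i'=tr_i$. Every fiber therefore has size $q-1$, so the class has $(q-1)^{2k}/(q-1)=(q-1)^{2k-1}$ elements.

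The only mildly delicate step is this fiber computation, together with the uniqueness of the normal form in each class. I would verify uniqueness directly. Scale each column $j$ by $A_{1j}^{-1}$, which makes the first row all $1$s. Then scale each row $i\ge 2$ by the inverse of its (new) first entry, which makes the first column all $1$s without changing the first row. This produces a normal-form representative. If two normal-form matrices satisfy $N'=D_1ND_2$, then the first row and first column give $r_1c_j=1$ and $r_ic_1=1$ for all $i,j$, so $r_ic_j=1$ for all $i,j$ and hence $N'=N$.

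Combining the three facts, the set of (contiguous) super-regular matrices is a disjoint union of equivalence classes, each of size $(q-1)^{2k-1}$, indexed by the (contiguous) super-regular matrices in normal form. This gives the stated formula.
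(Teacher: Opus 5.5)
Your proof is correct and follows the paper's argument: (contiguous) super-regularity is invariant under scaling rows and columns by nonzero elements, and each $\sim$-class has exactly $(q-1)^{2k-1}$ elements with a unique normal-form representative. The paper gets the class size by fixing the first column's multiplier, so that $k$ row multipliers and $k-1$ column multipliers determine the matrix uniquely; you instead divide $(q-1)^{2k}$ by the fiber size $q-1$ of the map $(r,c)\mapsto D_1AD_2$. You also spell out the determinant-scaling identity and the uniqueness of the normal form, which the paper leaves implicit.
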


\begin{proof}
Each matrix in an equivalence class under $\sim$ can be obtained from 
the class's unique normal form matrix by
first multiplying the matrix's $k$ rows by nonzero field elements
and then multiplying the matrix's $k-1$ right-most columns by nonzero field elements.
The $2k-1$ multipliers used in this process uniquely determine the resulting matrix.
Thus each equivalence class contains exactly $(q-1)^{2k-1}$ matrices.
Multiplying a row or column of a matrix by a nonzero constant does not change whether
or not the matrix is super-regular or contiguous super-regular, 
so for each equivalence class,
either all or none of its matrices are super-regular.
\end{proof}

% ---------------------------------------------------------------------------
% ---------------------------------------------------------------------------
\begin{lemma}
If $k \ge 3$, 
then there are no $k\times k$ 
super-regular nor contiguous super-regular matrices
over $\field{2}$ or $\field{3}$,
i.e.,
$\CountSR{2}{k}=\CountCSR{2}{k}=\CountSR{3}{k}=\CountCSR{3}{k}=0$.
\label{lem:field-sizes-2-3}
\end{lemma}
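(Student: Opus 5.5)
Since every super-regular matrix is contiguous super-regular, it suffices to show $\CountCSR{2}{k}=\CountCSR{3}{k}=0$ for $k\ge 3$. Moreover, the top-left $3\times 3$ submatrix of a $k\times k$ contiguous super-regular matrix is itself contiguous super-regular, since its contiguous square submatrices are contiguous in the big matrix. So the whole statement reduces to the case $k=3$, namely $\CountCSR{2}{3}=\CountCSR{3}{3}=0$.

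For $q=2$ the argument is immediate. Every entry must be nonzero, so every entry is $1$. Then every $2\times 2$ submatrix is the all-ones matrix, which is singular. In fact this already fails for $k=2$.

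For $q=3$ I will use Lemma~\ref{ones-in-top-row-left-col}. It says a contiguous super-regular matrix exists if and only if one exists in normal form, so I may assume the first row and first column are all $1$s. The other entries lie in $\{1,2\}$, since all entries must be nonzero.

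1. The top-left contiguous $2\times 2$ block is $\begin{bmatrix}1&1\\1&a_{22}\end{bmatrix}$. It is nonsingular only if $a_{22}\ne 1$, so $a_{22}=2$.
2. The block in rows $1,2$ and columns $2,3$ is $\begin{bmatrix}1&1\\2&a_{23}\end{bmatrix}$. It forces $a_{23}\ne 2$, so $a_{23}=1$.
3. Symmetrically, the block in rows $2,3$ and columns $1,2$ forces $a_{32}=1$.
4. The block in rows $2,3$ and columns $2,3$ is $\begin{bmatrix}2&1\\1&a_{33}\end{bmatrix}$. Its determinant is $2a_{33}-1$, which vanishes when $a_{33}=2$, so $a_{33}=1$.

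These steps determine the matrix completely:
\begin{align*}
\begin{bmatrix}1&1&1\\1&2&1\\1&1&1\end{bmatrix}.
\end{align*}
Its first and third rows are equal, so it is singular. Hence no $3\times 3$ contiguous super-regular matrix exists over $\field{3}$.

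No step is a real obstacle. The only care needed is the observation that a contiguous submatrix of a contiguous submatrix is contiguous in the original matrix; this is what justifies both the reduction from general $k$ to $k=3$ and the passage from super-regular to contiguous super-regular.
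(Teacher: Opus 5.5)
Your proof is correct and follows essentially the same route as the paper. For $q=2$ all entries are forced to be $1$; for $q=3$, normal form forces the second row to begin $1\,2\,1$ and the third row to begin $1\,1\,1$, so the top-left $3\times 3$ block is singular. The only cosmetic difference is that you first reduce explicitly to $k=3$ via the top-left $3\times 3$ block, whereas the paper puts the full $k\times k$ matrix in normal form and inspects that block directly.
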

% ---------------------------------------------------------------------------

\begin{proof}
If $q=2$, then any contiguous super-regular matrix would consist of all $1$s, 
which itself is singular, a contradiction.
So suppose $q=3$ and assume a square contiguous super-regular matrix is in normal form. 
To avoid contiguous singular $2\times 2$ submatrices,
the second row must start $121$ and therefore the third row must start $111$
which means the top-left $3\times 3$ contiguous submatrix is singular since its first
and third rows are both $111$.
Thus, the matrix is in fact not contiguous super-regular.
Since every super-regular matrix is also contiguous super-regular,
there are no super-regular matrices 
when $q=2$ and $k\ge 2$,
or 
else when $q=3$ and $k\ge 3$.
\end{proof}

\begin{lemma}
For any $k$,
the probability $\ProbSR{q}{k}$ 
(respectively, $\ProbCSR{q}{k}$)
that a random $k\times k$ matrix over $\fq$ is 
(respectively, contiguous) super-regular
tends to $1$ as $q\To\infty$.
\label{lem:super-regular-as-q-grows}
\end{lemma}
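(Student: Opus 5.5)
The plan is a union bound over all square submatrices, with $k$ held fixed while $q\To\infty$. A $k\times k$ matrix has exactly $\sum_{i=1}^k \binom{k}{i}^2 = \binom{2k}{k}-1$ square submatrices. This number depends only on $k$, not on $q$.

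For each $i\times i$ submatrix, its entries are iid uniform on $\fq$. By Lemma~\ref{lem:fullColRankProb} with $d=m=i$ and $j=0$, it is singular with probability at most $\frac{1}{q-1}$.

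A union bound then gives
\begin{align*}
1-\ProbSR{q}{k} \le \frac{\binom{2k}{k}-1}{q-1}.
\end{align*}
The right-hand side tends to $0$ as $q\To\infty$ because $k$ is fixed, so $\ProbSR{q}{k}\To 1$. Alternatively, this is the elementary argument of Theorem~\ref{thm:main-MDS}(a) specialized to $n=2k$ with $k$ bounded, combined with Lemma~\ref{lem:MDS-conditions}. I would write the union bound directly to avoid invoking Lemma~\ref{lem:multiple-defs-of-code}.

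For the contiguous case, note that every super-regular matrix is contiguous super-regular. Hence $\ProbCSR{q}{k}\ge \ProbSR{q}{k}\To 1$. For an explicit rate one could instead sum over only the $\frac{1}{6}k(k+1)(2k+1)$ contiguous square submatrices.

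There is no real obstacle here. The only point needing care is that the union bound is applied to a fixed, $q$-independent number of events, each of probability $O(1/q)$.
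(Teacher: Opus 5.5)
Your proof is correct and follows essentially the same route as the paper: a union bound over the $\binom{2k}{k}-1$ square submatrices with $k$ fixed, plus the observation that every super-regular matrix is contiguous super-regular. The only cosmetic difference is the per-submatrix singularity bound. You use $1/(q-1)$ from Lemma~\ref{lem:fullColRankProb}, while the paper uses the exact product formula and bounds it by $1-(1-q^{-1})^k$.
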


\begin{proof}
Let $M$ be a random $k\times k$ matrix over $\fq$ and for each $j$
let $S_j$ denote the set of all $j\times j$ submatrices of $M$.
Then
\begin{align*}
%P(A\text{ is super-regular})
\ProbCSR{q}{k}
&\ge \ProbSR{q}{k}\\
&= 
P\left( \bigcap_{j=1}^k \bigcap_{U\in S_j}   \{ U \text{ is nonsingular} \} \right)\\
&\ge 1 - \sum_{j=1}^k \sum_{U\in S_j}  P \{ U \text{ is singular} \} \\
&= 1 - \sum_{j=1}^k \binom{k}{j}^2 \left( 1 - \prod_{i=1}^j (1-q^{-i}) \right)\\
&\ge 1 - \binom{2k}{k}\left( 1 - (1-q^{-1})^k  \right) \\
&\To 1 \ \ \ \text{as}\ q\To\infty .
\end{align*}
\end{proof}

\subsection{Counting $3\times 3$ super-regular matrices}
\label{sec:3x3}

This section analyzes the counts
$\CountSR{q}{3}$ and $\CountCSR{q}{3}$.

\begin{theorem}
The number of $\ 3 \times 3$ matrices over a finite field $\fq$
that are
  \begin{itemize}
    \item[(a)] super-regular is $\CountSR{q}{3} = (q-1)^5 (q-2)(q-3)(q^2-9q+21)$, and
    \item[(b)] contiguous super-regular is $\CountCSR{q}{3} = (q-1)^5 (q-2)(q-3)(q^2-4q+5)$.
  \end{itemize}
  \label{thm:3x3}
\end{theorem}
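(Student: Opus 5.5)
By Lemma~\ref{ones-in-top-row-left-col}, it suffices to count $3\times 3$ matrices in normal form and multiply by $(q-1)^5$. A normal-form matrix is
\begin{align*}
N=\begin{bmatrix} 1&1&1\\ 1&a&b\\ 1&c&d\end{bmatrix},
\end{align*}
and the goal is to show that the number of such $N$ is $(q-2)(q-3)(q^2-4q+5)$ in the contiguous case and $(q-2)(q-3)(q^2-9q+21)$ in the non-contiguous case. I will choose $a,b,c$ first and then count the admissible values of $d$. Expanding along the first row gives
\begin{align*}
\det N = (a-1)d - bc + b + c - a .
\end{align*}
Whenever $a\neq 1$, this means the full matrix is singular exactly when $d=D$, where $D=(bc-b-c+a)/(a-1)$.

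\emph{Contiguous case.} The constraints are that all entries are nonzero, the four contiguous $2\times2$ minors are nonzero, and $\det N\neq 0$. Since $N$ is in normal form, these reduce to $a\notin\{0,1\}$, $b\notin\{0,a\}$, $c\notin\{0,a\}$, and $d\notin\{0,\,bc/a,\,D\}$. This gives $(q-2)^3$ choices of $(a,b,c)$. I then determine the coincidences among the three forbidden $d$-values.
\begin{itemize}
\item $bc/a\neq 0$ always.
\item $D=bc/a$ is equivalent to $a(bc-b-c+a)=(a-1)bc$, i.e.\ $(b-a)(c-a)=0$, which is excluded.
\item $D=0$ is equivalent to $(b-1)(c-1)=1-a$.
\end{itemize}
So the number of choices for $d$ is $q-3+\mathbf{1}[(b-1)(c-1)=1-a]$. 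Next I count the triples satisfying $(b-1)(c-1)=1-a$. Since $a\neq 1$, we need $b\neq 1$, so $b\notin\{0,1,a\}$, and then $c$ is forced. One checks that $c=0$ would force $b=a$ and $c=a$ would force $b=0$, so the forced $c$ is always admissible. This yields $(q-2)(q-3)$ such triples. The total is
\begin{align*}
(q-2)^3(q-3)+(q-2)(q-3)=(q-2)(q-3)(q^2-4q+5),
\end{align*}
which proves part (b).

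\emph{Super-regular case.} All nine $2\times 2$ minors are now constrained, which gives $a\notin\{0,1\}$, $b,c\notin\{0,1,a\}$, and the forbidden set
\begin{align*}
d\notin\{0,\ 1,\ b,\ c,\ bc/a,\ D\}.
\end{align*}
There are $(q-2)(q-3)^2$ triples $(a,b,c)$.
\begin{itemize}
\item Among the forbidden values, $0,1,b$ are distinct, and $0,1,c$ are distinct.
\item $bc/a\notin\{0,b,c\}$, but $bc/a=1$ exactly when $bc=a$.
\item $D\neq bc/a$ as before, and $D=0$ exactly when $(b-1)(c-1)=1-a$.
\item $D=1$ exactly when $bc-b-c+a=a-1$, i.e.\ $(b-1)(c-1)=0$, which is excluded.
\item $D=b$ exactly when $(b-1)(c-a)=0$, which is excluded; by symmetry $D\neq c$.
\end{itemize}
Hence the number of admissible $d$ equals
\begin{align*}
q-6+\mathbf{1}[b=c]+\mathbf{1}[bc=a]+\mathbf{1}[(b-1)(c-1)=1-a].
\end{align*}
Summing over triples, I count each indicator event separately.
\begin{itemize}
\item $b=c$: there are $(q-2)(q-3)$ triples.
\item $bc=a$: for each admissible $b$, the value $c=a/b$ is admissible exactly when $b\notin\{0,1,a\}$ and $a/b\neq 1$; I will check that this also gives $(q-2)(q-3)$ triples.
\item $(b-1)(c-1)=1-a$: the forced $c$ must additionally satisfy $c\neq 1$, which holds automatically; this gives $(q-2)(q-3)$ triples.
\end{itemize}
The main obstacle is that these indicator events are not disjoint, so I will handle their pairwise and triple overlaps by inclusion--exclusion. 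Each overlap reduces to a small polynomial system, such as $b^2=a$ or $b=c$ with $(b-1)^2=1-a$. The most delicate point is that overlap counts like the number of nonzero squares can depend on the parity of $q$. I expect these terms to either cancel or contribute uniformly, and I will check this against the target polynomial. After the overlaps are resolved, the total should be
\begin{align*}
(q-2)(q-3)^2(q-6)+3(q-2)(q-3)=(q-2)(q-3)(q^2-9q+21),
\end{align*}
which matches part (a) and the known formula of \cite{Skorobogatov-1992}.
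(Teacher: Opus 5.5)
Your part (b) is essentially the paper's argument: normal form, $(q-2)^3$ choices of $(a,b,c)$, and $d$ avoiding $0$, $bc/a$ and the determinant's bad value. The only possible coincidence is $D=0\iff c(b-1)=b-a$. Your indicator bookkeeping replaces the paper's case split on $b=1$ versus $b\neq 1$, but it gives the same count.

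For part (a) the paper gives no argument and simply cites Skorobogatov. Your direct count is therefore a genuinely different, self-contained route, and it has the advantage of deriving both formulas from one template. Its ingredients are correct. The forbidden list is $\{0,1,b,c,bc/a,D\}$, the pairwise distinctness checks hold, and $|E_1|=|E_2|=|E_3|=(q-2)(q-3)$, where $E_1=\{b=c\}$, $E_2=\{bc=a\}$, $E_3=\{(b-1)(c-1)=1-a\}$. For $E_2$ this is immediate, since $a/b\neq 0$, $a/b=1\iff b=a$, and $a/b=a\iff b=1$.

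However, the ``main obstacle'' you announce does not exist, and acting on it would break the proof. Your per-triple count $q-6+\mathbf{1}_{E_1}+\mathbf{1}_{E_2}+\mathbf{1}_{E_3}$ is already exact when several coincidences hold at once. The three coincidences merge the pairwise disjoint pairs $\{b,c\}$, $\{bc/a,1\}$ and $\{D,0\}$, and you have shown every other pair of forbidden values is always distinct. Summing over triples is then purely linear: the total is $(q-6)(q-2)(q-3)^2+|E_1|+|E_2|+|E_3|$, with no overlap terms.

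Inclusion--exclusion would compute $|E_1\cup E_2\cup E_3|$, which is not the quantity you need. Subtracting overlaps would give a wrong, non-polynomial answer. For instance, $|E_1\cap E_2|$ (that is, $c=b$, $a=b^2$) equals $q-3$ for odd $q$ and $q-2$ for even $q$, and in characteristic $2$ this whole set lies inside $E_3$.

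So delete that paragraph, along with the plan to ``check against the target polynomial,'' which is not an argument. Your final displayed expression $(q-2)(q-3)^2(q-6)+3(q-2)(q-3)=(q-2)(q-3)(q^2-9q+21)$ is exactly the linear sum. Multiplying by $(q-1)^5$ then completes (a).
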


\begin{proof}
Part (a) is due to
Skorobogatov~\cite[Proposition 3.2]{Skorobogatov-1992}.

For Part (b),
let $M$ be a $3 \times 3$ matrix over $\fq$ in normal form, such that
\begin{align*}
  M &= \left[\begin{array}{ccc}
              1 & 1 & 1 \\
              1 & A & B \\
              1 & C & D 
        \end{array} \right] .
\end{align*}
It can easily be verified that $M$ is contiguous super-regular if and only if
\begin{align}
  A,B,C,D  &\ne 0  \label{eq:3s_1} \\
  A - 1    &\ne 0 \label{eq:3s_2} \\
  B - A  &\ne 0 \label{eq:3s_4} \\
  C - A  &\ne 0 \label{eq:3s_6} \\
  A D - B C &\ne 0 \label{eq:3s_10} \\
  (B - A) - C (B - 1) + D (A - 1) &\ne 0 \label{eq:3s_11}.
\end{align}

The element $A$ may be any non-zero element of $\fq$ except
\begin{align}
A &\ne 1 \label{eq:3s_13}
\end{align}
since this choice satisfies \eqref{eq:3s_1} and \eqref{eq:3s_2}
and does not violate any of \eqref{eq:3s_4} -- \eqref{eq:3s_11}.
There are $(q-2)$ such choices of $A$.

Given the value of $A$, 
the element $B$ may be any non-zero element of $\fq$ except
\begin{align}
B &\ne A \label{eq:3s_14}
\end{align}
since this choice satisfies \eqref{eq:3s_1}, \eqref{eq:3s_2}, and \eqref{eq:3s_4}
and does not violate \eqref{eq:3s_6} -- \eqref{eq:3s_11}.
There are $(q-2)$ such choices for $B$.

Given the values of $A, B$, 
the element $C$ may be any non-zero element of $\fq$ except
\begin{align}
C &\ne A  \label{eq:3s_15}
\end{align}
since this choice satisfies \eqref{eq:3s_1} -- \eqref{eq:3s_6}
and does not violate \eqref{eq:3s_10} or \eqref{eq:3s_11}.
There are $(q-2)$ such choices for $C$.

By \eqref{eq:3s_13}, $A - 1 \ne 0$,
so let
$$e = (A - 1)^{-1} (C (B - 1) - (B - A)).$$
Given the values of $A,B,C$, 
the element $D$ may be any non-zero element of $\fq$ except
\begin{align*}
D &\ne A^{-1} B C\\
D &\ne e
\end{align*}
since these choices satisfy \eqref{eq:3s_1} -- \eqref{eq:3s_11}.
In fact, 
these conditions on $A, B, C, D$ are necessary and sufficient 
for $M$ being contiguous super-regular.
However, 
$0$, $e$, and $A^{-1} B C$ 
may not be distinct for all choices of $A, B, C$.

Clearly $A^{-1} B C \ne 0$.
We also have
\begin{align}
e = A^{-1} B C 
& \; \Longleftrightarrow \;
  (C - A) (B - A) = 0 \label{eq:3s_18} \\
  e = 0
& \; \Longleftrightarrow \;
  C (B - 1) = (B - A) \label{eq:3s_19} .
\end{align}

Then, \eqref{eq:3s_14}, \eqref{eq:3s_15}, and \eqref{eq:3s_18}
imply $e \ne A^{-1} B C.$

Now suppose $B = 1$.
We note that \eqref{eq:3s_13} implies $1 \ne A$,
so this is a valid choice of $B$.
Then \eqref{eq:3s_14} and \eqref{eq:3s_19} imply $e \ne 0$.
So when $B = 1$,
there are $(q-3)$ distinct choices for $D$,
namely any field element other than 
$0$, $e$, or $A^{-1}BC$,
and there are $(q-2)$ distinct choices for $C$,
namely any field element other than 
$0$ or $A$.

Now suppose $B$ is any non-zero element of $\fq$ except
$$A \ \ \text{ or } \ \ 1. $$
Then \eqref{eq:3s_19} becomes
$$e= 0 \; \Longleftrightarrow \; C = (B - A) (B - 1)^{-1} .$$
We also have
$$(B - A) (B - 1)^{-1} \ne 0 $$
and 
$$(B - A) (B - 1)^{-1} =  A \; \Longleftrightarrow \; A =  1 $$
which together with \eqref{eq:3s_13} implies
$(B - A) (B - 1)^{-1} $
is a valid choice of $C$.

Hence, when 
$$B \not \in \{0, \; 1, \;  A\}
\; \text{ and } \;
C = (B - A) (B - 1)^{-1} $$
there are $(q-2)$ choices for $D$,
namely any field element other than 
$0$ or $A^{-1}BC$.
Also, when 
$$B \not \in \{0, \; 1, \;  A\}
\; \text{ and } \;
C \not \in \{0, \; A, \; 
  (B - A) (B - 1)^{-1} \}$$
there are $(q-3)$ choices for $D$,
namely any field element other than 
$0$, $e$, or $A^{-1}BC$.

Thus, given a valid choice for $A$,
there are
$$(1) (q-2) (q-3) + (q-3) ((1) (q-2) + (q-3)(q-3) ) = (q-3) (q^2 - 4 q + 5)$$
choices for $B, C, D$,
which along with 
with Lemma~\ref{ones-in-top-row-left-col}
and the fact that there are $(q-2)$ choices for $A$
proves the result.
\end{proof}

%================================================

\subsection{Counting $4\times 4$ super-regular matrices}
\label{sec:4x4}

This section analyzes the counts 
$\CountSR{q}{4}$ and $\CountCSR{q}{4}$.

Section~\ref{sec:3x3}
showed that the number of super-regular and
the number of contiguous super-regular
$3\times 3$ matrices 
are both polynomial functions of the field size $q$.

For $4\times 4$ matrices,
analogous results are not known and do not appear straightforward,
but we do provide some insight.

First, using a computer we computationally determine values of $\CountSR{q}{4}$ 
for field sizes $q$ up to $71$ (and also equal to $83$ and $97$).
Then we use these observations to prove
that the number of 
super-regular $4\times 4$ matrices over $\fq$
is not a quasi-polynomial function with period less than $7$.
A consequence is that $\CountSR{q}{4}$ 
is not a polynomial function of the field size $q$.

Finally, for contiguous super-regular $4\times 4$ matrices
we numerically calculate the exact value of $\CountCSR{q}{4}$ 
for all finite fields with $q\le 67$
and then determine a low-degree polynomial function of the field size
which agrees with all of our computer-calculated matrix counts.
Of the $26$ counts we obtained by computer calculations for $4 \le q \le 67$,
only $7$ are needed to determine such a polynomial,
and yet all $26$ exactly match the values predicted by the polynomial,
thus providing significant confidence as to its correctness.
However, we do not have a proof of its correctness, 
but rather conjecture it based on the extensive experimental evidence.

Lemma~\ref{lem:field-sizes-2-3} showed that there are no
super-regular nor contiguous super-regular matrices over fields of size $2$ or $3$.
The following lemma extends this result to super-regular matrices over
fields of size $4$ or $5$.

\begin{lemma}
There are no $4\times 4$ 
super-regular matrices over fields of size $4$ or $5$,
i.e., $\CountSR{4}{4} = \CountSR{5}{4} = 0$.
\label{lem:field-sizes-4-5}
\end{lemma}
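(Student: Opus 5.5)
The cleanest route goes through Lemma~\ref{lem:MDS-conditions}. A $4\times 4$ super-regular matrix $A$ over $\fq$ is the same thing as an $[8,4]$ MDS code over $\fq$ with systematic generator matrix $[I_4|A]$. So it suffices to show that no $[8,4]$ MDS code exists for $q\in\{4,5\}$.

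Take an $[n,k]$ MDS code with $2\le k\le n-2$. Its dual is an $[n,n-k]$ MDS code. Puncturing an MDS code at one coordinate, or shortening it there, gives an MDS code again, with parameters $[n-1,k]$ and $[n-1,k-1]$ respectively.

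Now start from an $[8,4]$ MDS code over $\fq$ and shorten it twice. This gives a $[6,2]$ MDS code, that is, a $2\times 6$ generator matrix any two of whose columns are linearly independent. Up to scalars these six columns are then six distinct points of the projective line, and there are only $q+1$ of those. Hence $q\ge 5$, which already rules out $q=4$.

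The case $q=5$ is the main obstacle, because the $k=2$ bound alone is not enough. Here I would use the $k=3$ analogue. Shorten the $[8,4]$ code once to get a $[7,3]$ MDS code, whose seven columns form a $7$-arc in $PG(2,q)$, i.e.\ seven points with no three collinear. For odd $q$ the maximum size of an arc is $q+1$, and this bound holds for every $q$ except in the case $q$ even, $k=3$. So a $7$-arc needs $q\ge 6$, which excludes $q=5$.

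If I prefer to stay self-contained rather than cite Bose's bound, I would run the same counting argument through a single point. Take an arc point $P$. Each of the other six points lies on its own line through $P$, since no three points are collinear. But only $q+1=6$ lines pass through $P$, so all six lines are used. This is not yet a contradiction for $q=5$, so I would push the count further. A line meeting the arc meets it in $0$, $1$, or $2$ points. Counting tangent lines and secant lines through each point then yields $|{\rm arc}|\le q+2$, and parity forces equality only for even $q$. For odd $q=5$ this gives at most $6$ points, contradicting the $7$-arc.

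As a final alternative, one can instead check directly that $\CountSR{5}{4}=0$ by exhausting the normal-form matrices of Lemma~\ref{ones-in-top-row-left-col}, i.e.\ all $4^{9}$ choices of the remaining entries. I would keep this only as a sanity check on the argument above.
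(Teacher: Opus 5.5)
Your argument is correct, but it takes a different route from the paper. The paper works directly with a $4\times 4$ matrix in normal form. For $q=4$, the entries $A,B,C$ must be distinct and avoid $0$ and $1$, which leaves only two candidates. For $q=5$, it uses row and column permutations to fix $A=2$, $B=3$, $C=4$, $D=3$, $G=4$, shows that $E=4$ and $F=2$ are forced, and finds $BF-EC=0$.

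Your $q=4$ step is the same pigeonhole in projective language. Six distinct points of $PG(1,4)$ is exactly the requirement that $A,B,C$ be distinct and different from $0$ and $1$. Your shortenings are just the matrix fact that submatrices of super-regular matrices are super-regular. So deleting rows of $A$ directly gives super-regular $2\times 4$ and $3\times 4$ matrices, hence $[6,2]$ and $[7,3]$ MDS codes, with no need for the coding-theory detour.

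The real difference is at $q=5$. There you replace the hand computation by Bose's theorem on arcs in $PG(2,q)$ for odd $q$, or by its tangent-and-parity proof. This buys generality: for odd $q$ and $k\ge 3$, it shows that a $k\times k$ super-regular matrix over $\mathbb{F}_q$ forces $k\le q-2$. The paper's computation is self-contained and in fact proves more. It uses only $1\times 1$ and $2\times 2$ minors, so it shows no $4\times 4$ matrix over $\mathbb{F}_5$ even has all of those nonzero. Your $7$-arc contradiction needs the full $[7,3]$ MDS property, including $3\times 3$ minors.

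Two small points:
\begin{itemize}
\item What you need is precisely Bose's bound: an arc in $PG(2,q)$ with $q$ odd has at most $q+1$ points. The general MDS-conjecture phrasing you wrote is not what you are using.
\item In your self-contained version, spell out the parity step. You showed that a $7$-arc in $PG(2,5)$ has no tangent lines. So every line through a point $Q$ off the arc meets the arc in $0$ or $2$ points, which would partition the $7$ arc points into pairs, a contradiction.
\end{itemize}
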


\begin{proof}
By Lemma~\ref{ones-in-top-row-left-col},
it suffices to show the result
for a generic $4\times 4$ matrix in normal form,
i.e.,
\begin{align*}
M &=
\begin{bmatrix}
1 & 1 & 1 & 1 \\
1 & A & B & C \\
1 & D & E & F \\
1 & G & H & I 
\end{bmatrix} .
\end{align*}

\begin{itemize}
\item $q=4$\\
The nonzero elements $A$, $B$, $C$ must be distinct and not equal to $1$ 
in order to avoid a
singular $2\times 2$ submatrix with the first row.
Since two of the four field elements must be avoided,
$M$ cannot be super-regular and therefore
$\CountSR{4}{4} = 0$.

\item $q=5$\\
Since the top row  of $M$ is all $1$s,
no other row can have $2$ identical elements, 
or else $M$ would have a singular $2\times 2$ submatrix. 
So the rows of the bottom-right
$3 \times 3$ submatrix of $M$ are all permutations of the $3$ elements
of $GF(5)\setminus \{0,1\}$.
Without loss of generality we can take 
$A=2$, $B=3$, $C=4$, $D=3$, and $G=4$.
But then we must have $E=4$ and $F=2$ 
to avoid a singular $2\times 2$ submatrix,
and similarly $H=2$ and $I=3$.
But then the $2\times 2$ submatrix 
$\begin{bmatrix} 
B & C \\
E & F
\end{bmatrix}$
has determinant 
$BF - EC = (3\cdot 2) - (4\cdot 4) = 0 \mod 5$,
so $M$ is not super-regular.
Thus $\CountSR{5}{4} = 0$.
\end{itemize}
\end{proof}

\subsubsection{Computer enumeration}
\label{sec:computer-simulation}

We wrote parallelized C++ software%
\footnote{
The software used is posted at
the repository 
https://github.com/kzeger/Super-Regular-Matrix-Counter.git .
}
to count by exhaustive search
the number of super-regular
and contiguous super-regular $4\times 4$ matrices over $\fq$ in normal form, i.e.,
\begin{align*}
\begin{bmatrix}
1 & 1 & 1 & 1\\
1 & A & B & C\\
1 & D & E & F\\
1 & G & H & I
\end{bmatrix}.
\end{align*}
The code is used
to obtain the data in Table~\ref{tab:data},
which in turn is used in the proofs of
Theorem~\ref{thm:not-quasi} and
Theorem~\ref{thm:4x4-poly},
as well as in formulating
Conjecture~\ref{conj:poly}.

Note that a matrix is contiguous super-regular 
if and only if 
its transpose is contiguous super-regular.
To exploit this fact for computational complexity reduction, 
our code assumes
$B \le D$ and then adds two to the count
$\frac{\CountCSR{q}{4}}{(q-1)^7}$
every time a contiguous super-regular matrix is found with $B<D$,
and adds one to the count if $B=D$.

Also note that a matrix is super-regular and in normal form
if and only if
any permutation of its rows and columns, 
other than the first row and first column,
results in a super-regular matrix in normal form.
To exploit this fact our code assumes
that $A<B<C$ and $D<G$,
thus reducing the computational complexity by a factor of $12$.
In this case, 
the code adds $12$ to the count
$\frac{\CountSR{q}{4}}{(q-1)^7}$
every time a super-regular matrix is found.
Table~\ref{tab:data}
reports the values computed for 
$\CountSR{q}{4}(q-1)^{-7}$ and $\CountCSR{q}{4}(q-1)^{-7}$.

The code was optimized to reduce complexity for
parallel processing with dynamic scheduling.
The maximum time consuming computation was for $q=97$,
whose resulting count $\frac{\CountSR{q}{4}}{(q-1)^7}$ 
was nearly $400$ quadrillion,
and which ran for $56$ hours on
a SDSC Expanse supercomputer using
64 CPUs (AMD EPYC 7742),
each with 64 cores
% (* 64 64 56) 229376
for a total of $229,376$ core-hours.
\begin{table}
\renewcommand{\arraystretch}{1.3} % Add space between rows
\begin{center}
\scalebox{0.9}{
\begin{tabular}{|r|r|r|}
\hline
$q$
& Non-contiguous
& Contiguous\\
& $\CountSR{q}{4}(q-1)^{-7}$ & $\CountCSR{q}{4}(q-1)^{-7}$ \\ \hline\hline
 2 &                       0 &                         0 \\ \hline
 3 &                       0 &                         0 \\ \hline
 4 &                       0 &                        58 \\ \hline
 5 &                       0 &                     4,500 \\ \hline
 7 &                     120 &                   780,640 \\ \hline
 8 &                     720 &                 4,650,030 \\ \hline
 9 &                  36,360 &                20,667,108 \\ \hline
11 &                 626,544 &               228,641,184 \\ \hline
13 &              14,503,440 &             1,525,744,660 \\ \hline
16 &             464,227,344 &            14,631,667,414 \\ \hline
17 &           1,165,633,560 &            27,838,474,020 \\ \hline
19 &           5,801,345,760 &            89,105,123,008 \\ \hline
23 &          74,373,486,840 &           629,431,074,720 \\ \hline
25 &         212,300,100,840 &         1,455,886,562,020 \\ \hline
27 &         543,399,479,280 &         3,135,372,175,200 \\ \hline
29 &       1,272,945,018,960 &         6,357,690,556,308 \\ \hline
31 &       2,770,956,779,304 &        12,245,127,752,704 \\ \hline
32 &       3,992,200,415,280 &        16,706,559,415,590 \\ \hline
37 &      20,421,544,260,000 &        68,460,330,412,180 \\ \hline
41 &      62,690,828,201,064 &       183,909,349,902,564 \\ \hline
43 &     104,655,329,386,320 &       290,219,735,307,040 \\ \hline
47 &     269,248,344,742,440 &       677,931,197,232,960 \\ \hline
49 &     431,866,172,744,112 &     1,007,310,765,667,108 \\ \hline
53 &     943,630,646,029,920 &     2,118,192,838,017,300 \\ \hline
59 &   2,836,518,071,582,160 &     5,822,477,914,000,608 \\ \hline
61 &   3,980,663,747,410,224 &     7,964,743,058,190,484 \\ \hline
64 &   6,468,712,839,349,200 &    12,496,281,215,714,758 \\ \hline
67 &  10,253,864,595,564,480 &    19,189,657,322,417,920 \\ \hline  
71 &  18,308,524,066,455,384 &                           \\ \hline
83 &  85,743,863,225,899,200 &                           \\ \hline
97 & 391,851,063,583,777,560 &                           \\ \hline
%103&                     TBD &                      \\ \hline
%107&                     TBD &                      \\ \hline
\end{tabular}
}
\end{center}
\caption{
The table entries show the number of super-regular 
and contiguous super-regular $4\times 4$ matrices over $\fq$
divided by $(q-1)^7$,
as computed by exhaustive computer search.
}
\label{tab:data}
\end{table}

\subsubsection{Non-contiguous submatrices}

By Lemma~\ref{lem:MDS-conditions},
the number of super-regular $k\times k$ matrices over $\fq$
is equal to the number of $k\times 2k$ MDS matrices over $\fq$,
i.e., $\CountSR{q}{k} = \CountMDS{q}{2k}{k}$.
For $k=3$ this quantity is given in Theorem~\ref{thm:3x3}(a),
but it remains unknown when $k=4$
(e.g., see \cite{Isham-2022}). % See Section 3,3 page 26 of ArXiv version
Our Table~\ref{tab:data}
gives the values of
$\CountSR{q}{k}$
for $k=4$ for all field sizes up to and including $q=71$,
and also for $q=83, 97$.

A function $f$ on $\Z$ is a \textit{quasi-polynomial of period $m$} if
there exist $m$ polynomials $g_0, \dots, g_{m-1}$ satisfying
$f(q) = g_i(q)$ whenever $q = i\mod m$.

It is an open question whether $\CountSR{q}{4}$ is a quasi-polynomial,
but we show in 
Theorem~\ref{thm:not-quasi} 
that it cannot be a quasi-polynomial of period less than $7$,
and therefore it cannot be a polynomial either
(where the domain is restricted to powers of primes).

Lagrange's interpolation formula fits a unique polynomial of degree $n-1$
to $n$ data points in the plane.
Equivalently, a unique monic polynomial of degree $n$ fits the same data,
as stated in the following lemma.
\begin{lemma}
If 
$a_1, \dots, a_n$ 
are distinct reals and
$b_1, \dots, b_n$ 
are reals,
then 
\begin{align*}
f(x) &= x^n + \sum_{i=1}^n 
        (b_i - a_i^n) \prod_{\substack{ 1\le j \le n \\ j\ne i}}
        \frac{x-a_j}{a_i-a_j}
\end{align*}
is the unique monic polynomial of degree $n$ satisfying $f(a_i)=b_i$ for all $i=1, \dots, n$.
\label{lem:Lagrange-interpolation}
\end{lemma}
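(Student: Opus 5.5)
The plan is to reduce the monic interpolation problem to ordinary Lagrange interpolation of degree at most $n-1$ by subtracting off the leading term $x^n$. First I will check that the displayed $f$ is monic of degree $n$ and satisfies the interpolation conditions. Then I will prove uniqueness by a root-counting argument.

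For the first step, write $f(x) = x^n + L(x)$, where
\begin{align*}
L(x) = \sum_{i=1}^n (b_i - a_i^n) \ell_i(x), \qquad
\ell_i(x) = \prod_{\substack{1\le j\le n\\ j\ne i}} \frac{x-a_j}{a_i-a_j}.
\end{align*}
Each $\ell_i$ is well defined because the $a_i$ are distinct, so every denominator $a_i - a_j$ is nonzero. Each $\ell_i$ is a product of $n-1$ linear factors, so it has degree exactly $n-1$, and hence $\deg L \le n-1$. It follows that $f$ has leading term $x^n$, so $f$ is monic of degree $n$. Next, evaluate at a node $a_k$. By construction $\ell_i(a_k) = 1$ if $i=k$, since every factor equals $1$. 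If $i\ne k$, then $\ell_i(a_k)=0$, since the factor with $j=k$ vanishes. Therefore $L(a_k) = b_k - a_k^n$, and so $f(a_k) = a_k^n + b_k - a_k^n = b_k$ for every $k = 1,\dots,n$.

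For uniqueness, suppose $g$ is any monic polynomial of degree $n$ with $g(a_i) = b_i$ for all $i$. Then $h = f - g$ is a difference of two monic degree-$n$ polynomials, so the $x^n$ terms cancel and $\deg h \le n-1$ (or $h$ is the zero polynomial). Moreover $h$ vanishes at the $n$ distinct reals $a_1,\dots,a_n$. A nonzero real polynomial of degree at most $n-1$ has at most $n-1$ roots, so $h \equiv 0$, and therefore $g = f$.

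There is no real obstacle here. The only points that need care are the well-definedness of the denominators, which uses the distinctness of the $a_i$, and the observation that the $x^n$ terms cancel in $f-g$. The cancellation is what makes the interpolation problem for $L = f - x^n$ fall into the classical degree-at-most-$(n-1)$ setting, where $n$ conditions pin the polynomial down uniquely.
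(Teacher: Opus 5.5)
Your proposal is correct and complete. The paper gives no proof of this lemma; it only presents it as the monic reformulation of classical Lagrange interpolation of degree $n-1$, and you carry out exactly that reduction: existence from $\ell_i(a_k)=1$ if $i=k$ and $0$ otherwise, and uniqueness because the difference of two such monic polynomials has degree at most $n-1$ yet vanishes at $n$ distinct points.
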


Note that in Lemma~\ref{lem:Lagrange-interpolation},
if the numbers $a_1, \dots, a_n$ and $b_1, \dots, b_n$ are all integers,
then the function $f$ generally has rational coefficients.

\clearpage

\begin{theorem}
If $m\le 6$, then
$\CountSR{q}{4}$ is not a quasi-polynomial of period $m$.
\label{thm:not-quasi}
\end{theorem}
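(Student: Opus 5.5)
We argue by contradiction, using only the computed values in Table~\ref{tab:data} and interpolation. Suppose $\CountSR{q}{4} = g_i(q)$ whenever $q\equiv i \pmod m$, for some $m\le 6$ and polynomials $g_0,\dots,g_{m-1}$. The first step is to fix the degree. Lemma~\ref{lem:super-regular-as-q-grows} gives $\ProbSR{q}{4}\To 1$, so $\CountSR{q}{4}/q^{16}\To 1$ along every residue class that contains infinitely many prime powers. Hence each relevant $g_i$ is monic of degree $16$. Dividing by the known factor $(q-1)^7$ is harmless: if $\CountSR{q}{4}(q-1)^{-7}=h_i(q)$, then $h_i$ is a rational function that agrees with a polynomial at infinitely many points. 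It is cleaner, though, to work directly with $g_i$, or with $h_i:=g_i/(q-1)^7$ viewed as a monic polynomial of degree $9$ if we additionally assume divisibility. I would state the argument for the monic-degree-$16$ polynomial $g_i$ fitted to the values $\CountSR{q}{4}=(q-1)^7\cdot(\text{table entry})$.

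The second step handles each $m\in\{1,\dots,6\}$ separately. For a fixed $m$, I look for a residue class $i \bmod m$ containing at least $17$ of the $31$ tabulated field sizes ($2,\dots,71,83,97$). By Lemma~\ref{lem:Lagrange-interpolation}, $16$ of these points determine the unique monic degree-$16$ interpolant $f$. The contradiction comes from showing that $f$ disagrees with the table at a $17$th point in the same class, or, more simply, that $f$ takes a non-integer value at some prime power. For $m=1$ all $31$ values lie in one class, so this is clearly feasible.

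For larger $m$ the classes shrink, and this is where I expect the real obstacle. With $31$ data points and $m=6$, the most populated classes ($q\equiv 1$ and $q\equiv 5 \pmod 6$) each contain only about $10$ to $12$ field sizes, which is fewer than the $17$ needed. I would recover the shortfall by lowering the number of unknowns:
\begin{itemize}
\item Use the exact factor $(q-1)^7$, so the cofactor is monic of degree $9$.
\item Use the known zeros $\CountSR{q}{4}=0$ at $q=2,3,4,5$ from Lemmas~\ref{lem:field-sizes-2-3} and~\ref{lem:field-sizes-4-5}, but only within the residue classes where those $q$ actually land.
\item If needed, exploit the lower-order asymptotic coefficient $-(\binom{8}{4}-1)=-69$ of $q^{-1}$ from the Ghorpade--Lachaud expansion. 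This holds for every polynomial branch, so it fixes the second coefficient of $g_i$.
\end{itemize}
With the cofactor monic of degree $9$ and its next coefficient pinned down, $8$ unknown coefficients remain, so $9$ points per class already yield an overdetermined system. I would then verify computationally that some class is inconsistent.

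Concretely, for each $m\le 6$ I would pick one residue class with at least $9$ data points and interpolate through $8$ of them. I would then exhibit a ninth point where the prediction differs from the table, or a prediction that is not an integer. The detailed verification is an exact rational arithmetic computation, and I would present it as a table of interpolants and mismatching values. The only delicate part is justifying that the $(q-1)^7$ and $-69q^{-1}$ normalizations hold for each branch $g_i$ and not just globally; I would derive these from the fixed-$k$ asymptotic expansion restricted to each residue class.
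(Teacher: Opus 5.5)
There is a genuine gap at exactly the step you flag as delicate: that each branch polynomial $g_i$ is divisible by $(q-1)^7$. Everything after your first bullet depends on it. That includes the monic cofactor of degree $9$, and hence the $9$ unknowns, or $8$ after pinning. Yet you only assume this divisibility, and the justification you propose cannot supply it. The expansion $\CountSR{q}{4}=q^{16}(1-69q^{-1}+O(q^{-2}))$ fixes the top two coefficients of $g_i$. It carries no information about the order of vanishing of $g_i$ at $q=1$.

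Likewise, the fact that the integer $\CountSR{q}{4}$ is divisible by $(q-1)^7$ for each prime power $q$ does not pass to divisibility of polynomials without an argument. Your remark that $g_i/(q-1)^7$ ``agrees with a polynomial at infinitely many points'' presupposes the conclusion. Without divisibility you are left with a monic degree-$16$ interpolant with at least $15$ free coefficients. The table, however, has at most $14$ field sizes in any class mod $3$, $13$ mod $4$, $9$ mod $5$ and $11$ mod $6$. So your argument breaks down for every $m\ge 3$. Using the zeros at $q=2,3,4,5$ gains nothing, since those are already table entries.

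The missing idea is an integrality-plus-decay argument, which is how the paper proceeds.
\begin{itemize}
\item Since $g_i$ takes integer values at infinitely many integers, Lagrange interpolation gives $g_i\in\Q[q]$.
\item Divide to get $g_i=(q-1)^7 s_i+r_i$ with $s_i,r_i\in\Q[q]$ and $\deg r_i\le 6$. Pick an integer $A$ with $As_i\in\Z[q]$.
\item For every prime power $q\equiv i \pmod m$, Lemma~\ref{ones-in-top-row-left-col} makes $\CountSR{q}{4}/(q-1)^7$ an integer, so $A\,r_i(q)/(q-1)^7\in\Z$.
\item This quantity tends to $0$, and the class contains infinitely many primes when $\gcd(i,m)=1$. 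So it vanishes infinitely often, which forces $r_i=0$.
\end{itemize}
Then $s_i$ is monic of degree $9$ and integer-valued on the prime powers of its class, so $9$ table values determine it. A contradiction then comes in one of two ways. One is a tenth table value; the paper uses $q=71$ for the class $5 \bmod 6$. The other is a non-integer value at another prime of the class: $q=59$ for $3 \bmod 4$, and $q=107$ for $2 \bmod 5$.

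Moreover, a quasi-polynomial of period $m$ is also one of every multiple of $m$. So only $m\in\{4,5,6\}$ need checking, not each $m\le 6$ separately.

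With this in place your $-69$ normalization is not needed. It is still a legitimate alternative, via the Ghorpade--Lachaud expansion, which holds along each residue class. It would make the tight class $2 \bmod 5$ overdetermined without evaluating at a new prime.

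Finally, the contradiction itself is an exact rational computation that your proposal leaves undone. The proof is only complete once the mismatching or non-integer values are actually exhibited.
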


\begin{proof}
If a function is a quasi-polynomial of some period,
then it is also a quasi-polynomial of any integer multiple of that period,
so it suffices to show that 
$\CountSR{q}{4}$ is not a quasi-polynomial of any period in $\{4,5,6\}$.

Suppose to the contrary that $\CountSR{q}{4}$ is a quasi-polynomial 
of some period $m\in \{4,5,6\}$.
Then for each $i\in \{0, \dots, m-1\}$
there exist polynomials $h_{m,i}$ such that
$\CountSR{q}{4} = h_{m,i}(q)$ whenever $q=i\mod m$.
There are at most $(q-1)^{16}$ matrices of size $4\times 4$ over $\fq\backslash\{0\}$,
so $\CountSR{q}{4}\le (q-1)^{16}$ and therefore $\text{deg}(h_{m,i}) \le 16$ for each $i$.
Lemma~\ref{lem:super-regular-as-q-grows}
implies that $\displaystyle\lim_{q\To\infty} \CountSR{q}{4} q^{-16} = 1$
so each $h_{m,i}$ is monic and of degree $16$.
Furthermore,
since each $h_{m,i}$ maps integers to integers,
Lemma~\ref{lem:Lagrange-interpolation} implies that
$h_{m,i}\in\Q[q]$.

By Euclidean division, there exist polynomials 
$s_{m,i}, r_{m,i}\in \Q[q]$ such that
\begin{align*}
h_{m,i}(q) = (q-1)^7 s_{m,i}(q) + r_{m,i}(q)
\end{align*}
and $\text{deg}(r_{m,i})\le 6$.
There exists a large enough integer $A_{m,i}$ such that $A_{m,i}s_{m,i}\in \Z[q]$.
Thus, $A_{m,i}s_{m,i}(q)\in \Z$ for all $q\in\Z$.
If $q=i\mod m$, then
\begin{align*}
\frac{\CountSR{q}{4}}{(q-1)^7}
&= \frac{h_{m,i}(q)}{(q-1)^7} 
= s_{m,i}(q) + \frac{r_{m,i}(q)}{(q-1)^7}
\end{align*}
which is guaranteed by
Lemma~\ref{ones-in-top-row-left-col} to be an integer
(i.e. the number of $4\times 4$ super-regular matrices on $\fq$ in normal form).
This implies that $A_{m,i}\cdot\frac{r_{m,i}(q)}{(q-1)^7}\in\Z$ for all $q\in\Z$.
But as $q\To\infty$, we have $\frac{r_{m,i}(q)}{(q-1)^7}\To 0$ so eventually
$\left| A_{m,i}\cdot \frac{r_{m,i}(q)}{(q-1)^7} \right|<1$,
and the only integer value this quantity can be is zero.
Therefore $r_{m,i}(q)=0$ for all $q$,
which implies 
$h_{m,i}(q) = (q-1)^7 s_{m,i}(q)$
and in fact $s_{m,i}(q)\in \Z$ for all $q$.

We will next show that the data computed in Table~\ref{tab:data} 
contradicts this conclusion for each of the three cases:
$(m,i) \in \{ (4,3), (5,2), (6,5)\}$.
The field sizes in Table~\ref{tab:data} have the following properties:
\begin{itemize} 
\item $3,7,11,19,23,27,31,43,47 = 3\mod 4$ % Also 59 but we don't need it.
\item $2,7,17,27,32,37,47,67,97 = 2\mod 5$ 
\item $5,11,17,23,29,41,47,53,59,71 = 5\mod 6$ .
\end{itemize}

Each of the three polynomials
$s_{4,3}(q)=\frac{h_{4,3}(q)}{(q-1)^7}$,
$s_{5,2}(q)=\frac{h_{5,2}(q)}{(q-1)^7}$, and
$s_{6,5}(q)=\frac{h_{6,5}(q)}{(q-1)^7}$
is monic, has degree $9$, and is integer valued.
We use Lemma~\ref{lem:Lagrange-interpolation}
to construct the unique polynomial of degree $9$ for the above three cases
using the lowest $9$ field sizes of the correct residue modulo the period
and then derive a contradiction.
For $m=6$ the contradiction is obtained by evaluating the polynomial at a $10$th
data value in the table (i.e., $q=71$),
whereas for $m\in \{4,5\}$ we only computed $9$ values, and then used
$59 = 3\mod 4$ and $107 = 2\mod 5$, respectively, as the inputs
and obtain non-integer output values from the polynomials.
\begin{itemize} 

\item $m=4$:
Using the field sizes
$q=3,7,11,19,23,27,31,43,47$,
Lemma~\ref{lem:Lagrange-interpolation}
gives
\begin{align*}
\frac{h_{4,3}(q)}{(q-1)^7}
&= q^9 
- \frac{2681744467}{43253760} q^8 
+ \frac{9245759107}{5406720} q^7 
- \frac{297184155209}{10813440} q^6 \\
&\ \ \ + \frac{1529366777377}{5406720} q^5 
- \frac{41672290017121}{21626880} q^4 
+ \frac{46784697629773}{5406720} q^3 \\
&\ \ \ - \frac{24187428071411}{983040} q^2 
+ \frac{72190423190261}{1802240} q 
- \frac{37139179301667}{1310720} \\
\therefore 
\frac{h_{4,3}(59)}{(59-1)^7} &=  \frac{31201695993215664}{11}
\not\in\Z.
\end{align*}
\item $m=5$: 
Using the field sizes
$q=2,7,17,27,32,37,47,67,97$,
Lemma~\ref{lem:Lagrange-interpolation}
gives
\begin{align*}
\frac{h_{5,2}(q)}{(q-1)^7}
&= q^9 
- \frac{6029902971623}{97256250000} q^8 
+ \frac{166312835715263}{97256250000} q^7 
- \frac{55688739894322}{2026171875} q^6 \\
&\ \ \ + \frac{352934580939503}{1246875000} q^5 
- \frac{965086540399873}{498750000} q^4 
+ \frac{22007165789434769}{2493750000} q^3 \\
&\ \ \ - \frac{1282531884642030203}{48628125000}  q^2 
+ \frac{578672659199202089}{12157031250} q 
- \frac{3554877260305056}{96484375} \\
\therefore 
\frac{h_{5,2}(107)}{(107-1)^7} &=  \frac{249533501634221249520}{247}
\not\in\Z.
\end{align*}
\item $m=6$: 
Using the field sizes
$q=5,11,17,23,29,41,47,53,59$,
Lemma~\ref{lem:Lagrange-interpolation}
gives
\begin{align*}
\frac{h_{6,5}(q)}{(q-1)^7}
&= q^9 
- \frac{1874451481}{30233088} q^8 
+ \frac{6462323435}{3779136} q^7 
- \frac{207693950953}{7558272} q^6 \\
&\ \ \ + \frac{1068386076455}{3779136} q^5 
- \frac{29071357368383}{15116544} q^4 
+ \frac{32508639629681}{3779136} q^3 \\
&\ \ \ - \frac{183054971721325}{7558272} q^2 
+ \frac{145884846567485}{3779136} q 
- \frac{793913123691025}{30233088} \\
\therefore 
\frac{h_{6,5}(71)}{(71-1)^7} &= 18308524066623384
\ne \frac{\CountSR{71}{4}}{(71-1)^7} = 18308524066455384 .
% These differ by exactly 168,000 ... weird
\end{align*}
\end{itemize} 
Thus, we have shown that 
$\frac{h_{4,3}(q)}{(q-1)^7}$,
$\frac{h_{5,2}(q)}{(q-1)^7}$, and
$\frac{h_{6,5}(q)}{(q-1)^7}$
are not the required integer-valued polynomials, a contradiction.
\end{proof}

The following corollary follows immediately from
Theorem~\ref{thm:not-quasi}
using a period of $m=1$.
This corollary could be proven directly in a simpler way
by noting that if $\CountSR{q}{4}$ were a polynomial then
it would be divisble by $(q-2)(q-3)(q-4)(q-5)$
since
$\CountSR{2}{4} = \CountSR{3}{4} = \CountSR{4}{4} = \CountSR{5}{4} = 0$
and therefore a polynomial of only degree $5$ would need to be constructed.

\begin{corollary}
The number $\CountSR{q}{4}$ of $4\times 4$ 
super-regular matrices over $\fq$ is
not a polynomial.
\label{cor:not-a-polynomial}
\end{corollary}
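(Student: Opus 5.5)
The plan is to derive Corollary~\ref{cor:not-a-polynomial} directly from Theorem~\ref{thm:not-quasi}. A polynomial is a quasi-polynomial of period $m$ for every $m$: one simply takes $g_0=\dots=g_{m-1}$ equal to that polynomial. So suppose $\CountSR{q}{4}=h(q)$ for some polynomial $h$ and all prime powers $q$. Then $\CountSR{q}{4}$ would be a quasi-polynomial of period $1$, and hence also of period $m$ for each $m\in\{4,5,6\}$. This contradicts Theorem~\ref{thm:not-quasi}, which excludes every period $m\le 6$. That finishes the main argument; there is essentially no obstacle beyond observing that the domain is restricted to prime powers. Theorem~\ref{thm:not-quasi} was proved under exactly that restriction, since only the values in Table~\ref{tab:data} were used.

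I would also record the independent, more elementary route mentioned before the statement, as a cross-check. Suppose $\CountSR{q}{4}=h(q)$ with $h$ a polynomial.
\begin{itemize}
\item The argument in the proof of Theorem~\ref{thm:not-quasi} (the bound $\CountSR{q}{4}\le (q-1)^{16}$ together with Lemma~\ref{lem:super-regular-as-q-grows}) forces $h$ to be monic of degree $16$ with rational coefficients.
\item The Euclidean-division argument there shows that $(q-1)^7\mid h$.
\item Lemmas~\ref{lem:field-sizes-2-3} and~\ref{lem:field-sizes-4-5} give $h(2)=h(3)=h(4)=h(5)=0$. Since $1\notin\{2,3,4,5\}$, these roots are coprime to the factor $(q-1)^7$, so
\begin{align*}
h(q)=(q-1)^7(q-2)(q-3)(q-4)(q-5)\,g(q)
\end{align*}
with $g$ monic of degree $5$.
\end{itemize}

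It then remains to determine $g$. Five data points from Table~\ref{tab:data} (say $q=7,8,9,11,13$) fix $g$ via Lemma~\ref{lem:Lagrange-interpolation}. Evaluating at one more tabulated field size, for example $q=16$, should disagree with the computed count.

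The only real work in this second route is the arithmetic of that interpolation and check. This is the step I expect to be the obstacle, though it is purely computational. The first route avoids it entirely, so I would present the first route as the proof and mention the second as a remark.
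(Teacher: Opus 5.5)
Your proposal is correct and follows the paper. The corollary is the period-$1$ case of Theorem~\ref{thm:not-quasi}, and your second route is the simpler direct argument the paper itself mentions: force the factor $(q-2)(q-3)(q-4)(q-5)$ and fit a monic degree-$5$ cofactor. The check you anticipated does go through at $q=16$: interpolating at $q=7,8,9,11,13$ predicts $\CountSR{16}{4}/15^7 = 554{,}349{,}744$, whereas Table~\ref{tab:data} gives $464{,}227{,}344$.
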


%================================================

\subsubsection{Contiguous submatrices}

Define the polynomial:
\begin{align*}
f(q) &= q^7 - 18q^6 + 143q^5 - 654q^4 + 1874q^3 - 3400q^2 + 3671q - 1855.
\end{align*}

\begin{theorem}
If the number $\CountCSR{q}{4}$
of contiguous $4\times 4$ 
super-regular matrices over $\fq$ is a polynomial,
then 
\begin{align*}
\CountCSR{q}{4}
&= (q-1)^7 (q-2) (q-3) f(q).
\end{align*}
\label{thm:4x4-poly}
\end{theorem}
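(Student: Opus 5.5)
Assume $\CountCSR{q}{4}=h(q)$ for some polynomial $h$ and all prime powers $q$. The plan is to force the factorization structurally and then pin down the remaining factor by interpolation against Table~\ref{tab:data}, following the argument of Theorem~\ref{thm:not-quasi}.

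First, the degree and leading coefficient. Every contiguous super-regular matrix has all entries nonzero, so $\CountCSR{q}{4}\le (q-1)^{16}$ and $\deg h\le 16$. Lemma~\ref{lem:super-regular-as-q-grows} gives $\CountCSR{q}{4}q^{-16}\To 1$, so $h$ is monic of degree $16$. Since $h$ takes integer values at infinitely many integers, Lemma~\ref{lem:Lagrange-interpolation} (applied at any $17$ such points) shows $h\in\Q[q]$.

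Second, the factor $(q-1)^7$. I would use the Euclidean division argument verbatim from the proof of Theorem~\ref{thm:not-quasi}. Write $h(q)=(q-1)^7s(q)+r(q)$ with $\deg r\le 6$. Lemma~\ref{ones-in-top-row-left-col} says $h(q)/(q-1)^7$ is an integer for every prime power $q$. Clearing denominators of $s$ by some integer $A$, the quantity $A\,r(q)/(q-1)^7$ is an integer at infinitely many $q$ and tends to $0$, so it is eventually $0$; hence $r\equiv 0$. Thus $s$ is a monic degree-$9$ polynomial in $\Q[q]$ with $s(q)=\CountCSR{q}{4}/(q-1)^7$, the second column of Table~\ref{tab:data}.

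Third, the factors $(q-2)(q-3)$. Lemma~\ref{lem:field-sizes-2-3} gives $s(2)=s(3)=0$, so $s(q)=(q-2)(q-3)g(q)$ with $g$ monic of degree $7$. Such a $g$ is determined by $7$ values. By Lemma~\ref{lem:Lagrange-interpolation} applied to the $7$ points $q=4,5,7,8,9,11,13$, using $g(q)=s(q)/((q-2)(q-3))$ from the table (e.g.\ $g(4)=58/2=29$ and $g(5)=4500/6=750$), $g$ is the unique monic interpolant through these data. One then verifies that $f$ takes exactly these seven values; by uniqueness $g=f$. As a sanity check, $f$ should also reproduce the remaining table entries.

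The only real obstacle is this final verification: checking $f(q)=s(q)/((q-2)(q-3))$ at seven points. For instance, $f(4)=16384-73728+146432-167424+119936-54400+14684-1855=29$ agrees. The check is routine arithmetic but must be done carefully; everything else is a direct reuse of the argument of Theorem~\ref{thm:not-quasi}.
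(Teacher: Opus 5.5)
Your proposal is correct and follows the paper's proof: you show the polynomial is monic of degree $16$, factor out $(q-1)^7(q-2)(q-3)$, and determine the monic degree-$7$ cofactor by interpolating at $q=4,5,7,8,9,11,13$. The one difference is that you justify the polynomial divisibility by $(q-1)^7$ using the Euclidean-division argument from Theorem~\ref{thm:not-quasi}. The paper leaves this step implicit: Lemma~\ref{ones-in-top-row-left-col} only gives integer divisibility at each $q$, and your argument upgrades that to divisibility as polynomials.
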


\begin{proof}
Suppose $\CountCSR{q}{4}$ is a polynomial.
We know that $(q-1)^7$ divides $\CountCSR{q}{4}$
and Lemma~\ref{lem:field-sizes-2-3} implies
$\CountCSR{2}{4} = \CountCSR{3}{4} = 0$ 
so $q-2$ and $q-3$ also divide the polynomial.
Lemma~\ref{lem:super-regular-as-q-grows}
implies that $\CountCSR{q}{4} q^{-16} \To 1$ as $q\To\infty$,
so $\CountCSR{q}{4}$ must be a monic polynomial of degree $16$.
Thus, $\CountCSR{q}{4}$ equals $(q-1)^7(q-2)(q-3)$ times
a monic polynomial of degree $7$.

By Lemma~\ref{lem:Lagrange-interpolation},
any $7$ points on this monic polynomial of degree $7$ uniquely determine it.
We used the computer-generated values of
$\frac{\CountCSR{q}{4}}{(q-1)^7}$
from Table~\ref{tab:data}
for $q=4,5,7,8,9,11,13$ to determine that the 
$7$th-degree polynomial is indeed $f(q)$.
\end{proof}

We also verified that each of the other $19$ 
values of $\frac{\CountCSR{q}{4}}{(q-1)^7}$ in Table~\ref{tab:data}
(i.e., when $16 \le q \le 67$)
equals $(q-2)(q-3)f(q)$,
which would seem rather unlikely unless 
$\CountCSR{q}{4}$ is indeed a polynomial.
This motivates the following conjecture.

\begin{conjecture}
The number $\CountCSR{q}{4}$ of contiguous $4\times 4$ 
super-regular matrices over $\fq$ is 
\begin{align*}
\CountCSR{q}{4}
&= (q-1)^7 (q-2) (q-3) f(q).
\end{align*}
\label{conj:poly}
\end{conjecture}

% ---------------------------------------------------------------------------
\clearpage

\section{Experimental evidence and asymptotic convergence rates}
\label{sec:Conjecture}

Theorem~\ref{thm:main-contiguous} demonstrates that 
for random $k\times k$ matrices over $\fq$,
if $\frac{1}{3}k^3q^{-1}\To \lambda\in [0,\infty]$,
then the probability that all contiguous submatrices are nonsingular
converges as $\ProbCSR{q}{k} \To e^{-\lambda}$.
Corollary~\ref{cor:main-superregular} demonstrates that
for the non-contiguous case,
if $4^k k^{-1/2} q^{-1} \To \lambda\in \{0,\infty\}$,
then the probability that all submatrices are nonsingular
converges as  $\ProbSR{q}{k} \To e^{-\lambda}$.

Other than the special cases 
when $\lambda$ equals either $0$ or $\infty$,
we do not presently know if 
$\ProbSR{q}{k} \To e^{-\lambda}$ for any $\lambda\in (0,\infty)$,
but it seems plausible to us,
and we conjecture it to be true for all such $\lambda$.
More generally, 
we conjecture Theorem~\ref{thm:main-MDS}(c) to be true even without 
the requirement $k/n\to 0$.
\begin{conjecture}
Let $C$ be a linear $[n,k]$ code chosen uniformly at random
over a field of size $q$,
where $n,q \To\infty$.
If $\frac{1}{q}\binom{n}{k}\To \lambda\in (0,\infty)$,
then $P(C \text{ is MDS}) \To e^{-\lambda}$.
\label{conj:converge}
\end{conjecture}

We do not see a straightforward way to prove this using the 
Chen-Stein method that we exploited in the 
present proof of Theorem~\ref{thm:main-MDS}(c),
nor using the technique in the proof of Theorem~\ref{thm:main-contiguous}.
We thus leave this conjecture as an open question,
but we do demonstrate some numerical results (when $n=2k$)
which support the conjecture
and demonstrate convergence rates.

Figure~\ref{fig:conjecture} plots 
(as blue dots)
numerically computed 
(from $1000$ randomly chosen matrices)
estimates of 
$\ProbSR{q}{10}$ and $\ProbCSR{q}{10}$
versus 
$\lambda = \binom{2k}{k}q^{-1}$ (for super-regular)
and versus
$\lambda = \frac{1}{3}k^3 q^{-1}$ (for contiguous super-regular)
for various values of $q$.
These are compared to the analytically known values (green curves) of 
$\ProbSR{q}{3}$ and $\ProbCSR{q}{3}$,
and also to the limiting curve $e^{-\lambda}$ (red curves).

For the contiguous case, it can be seen that when $k=10$, 
the experimentally computed probabilities are very close to the limiting
probability $e^{-\lambda}$,
so the convergence appears quite fast.
For the non-contiguous case,
the experimentally computed probabilities for $k=10$ are even closer to the 
conjectured limiting probability $e^{-\lambda}$,
and even the 
experimentally computed probabilities for $k=3$ are very close to the limiting curve,
thus supporting our conjecture.

\clearpage

% -------------------------
\newcommand{\SRplot}{
\begin{tikzpicture}[scale=2.8]

\node[black, scale=1] at (2.5,1.7) {\Large{Super-regular}};

% boxes
\draw[->, ultra thick]  (0,0) -- (5.1,0) node[right=2mm] {$\lambda$};
\draw[->, ultra thick]  (0,0) -- (0,\Inflate * 1.05);

\draw[red, ultra thick] plot[smooth,samples=100,domain=0:5] (\x, {\Inflate * exp(-\x)}); 

\draw[green, ultra thick] plot[smooth,samples=100,domain=4:100] 
  (20 / \x, {\Inflate * (1 -(1/\x))^2 * (1 -(1/\x))^2 * (1 -(1/\x)) * (1 - (2/\x)) * (1 - (3/\x)) * (1 - (9/\x) + (21/\x^2)) }  ); 

\draw[->, dashed]  (1.2, 0.20) -- (1.7, 0.28);  % green for k=3
\draw[->, dashed]  (2.5, 0.7) -- (2.5, 0.18); % red

\node[black, scale=1] at (1.0,0.2) {$k=3$};
\node[black, scale=1] at (2.5,0.8) {$e^{-\lambda}$};

% Data generated by Spencer 08-03-25...
% k = 10  trials = 1000
% lambda      prob      e^{-lambda}  Ken's prob   q
%
% 1/3         0.706     0.717          0.717   554417
\draw[fill=blue](1/3, .717 * \Inflate) circle (0.04cm);
% 1/2         0.629     0.607          0.605   369637
\draw[fill=blue](1/2, .605 * \Inflate) circle (0.04cm);
% 1           0.365     0.368          0.365   184823
\draw[fill=blue](1,   .365 * \Inflate) circle (0.04cm);
% 2           0.125     0.135          0.127    92401
\draw[fill=blue](2,   .127 * \Inflate) circle (0.04cm);
% 3           0.057     0.0498         0.043    61603
\draw[fill=blue](3,   .043 * \Inflate) circle (0.04cm);
%

%\node[black, below, scale=1] at (2, \Inflate * .3) {$e^{-\lambda}$};

\node[black, left, scale=1] at (0,\Inflate) {$1$};

\node[black, below, scale=1] at (0,0) {$0$};
\node[black, below, scale=1] at (0,0) {$0$};
\node[black, below, scale=1] at (1,0) {$1$};
\node[black, below, scale=1] at (2,0) {$2$};
\node[black, below, scale=1] at (3,0) {$3$};
\node[black, below, scale=1] at (4,0) {$4$};
\node[black, below, scale=1] at (5,0) {$5$};
\end{tikzpicture}
}

% ---------------------------------------------------------------------------
\newcommand{\CSRplot}{
\begin{tikzpicture}[scale=2.8]
\node[black, scale=1] at (2.5,1.7) {\Large{Contiguous super-regular}};

% boxes
\draw[->, ultra thick]  (0,0) -- (5.1,0) node[right=2mm] {$\lambda$};
\draw[->, ultra thick]  (0,0) -- (0,\Inflate * 1.05);

\draw[red, ultra thick] plot[smooth,samples=100,domain=0:5] (\x, {\Inflate * exp(-\x)}); 

% k=3, 8/3 = 2.66...  3/8 = 0.375
\draw[green, ultra thick] plot[smooth,samples=100,domain=2:100] 
  (9 / \x, {\Inflate * (1 -(1/\x))^2 * (1 -(1/\x))^2 * (1 -(1/\x)) * (1 - (2/\x)) * (1 - (3/\x)) * (1 - (4/\x) + (5/\x^2)) }  ); 

% Data generated by Ken 08-09-25...
% k = 10  trials = 1000 lambda = k^3/3q = 1000/3q
% lambda      prob      e^{-lambda}      q
%
% 1/3         0.635     0.7273          997
\draw[fill=blue](0.334336, .662 * \Inflate) circle (0.04cm);
% 1/2         0.49      0.6089          673
\draw[fill=blue](0.495295, .556 * \Inflate) circle (0.04cm);
% 1           0.238     0.3955          331
\draw[fill=blue](1.00705,  .332 * \Inflate) circle (0.04cm);
% 2           0.044     0.1438          167
\draw[fill=blue](1.99601,  .099 * \Inflate) circle (0.04cm);
% 3           0.005     0.04747         113
\draw[fill=blue](2.94985,  .043 * \Inflate) circle (0.04cm);

\node[black, left, scale=1] at (0,\Inflate) {$1$};

\draw[->, dashed]  (0.5,0.3) -- (0.75, 0.55);  % green
\draw[->, dashed]  (2.5,0.8) -- (2.5, 0.19); % red

\node[black, scale=1] at (0.3,0.2) {$k=3$};
\node[black, scale=1] at (2.5,0.95) {$e^{-\lambda}$};

\node[black, below, scale=1] at (0,0) {$0$};
\node[black, below, scale=1] at (0,0) {$0$};
\node[black, below, scale=1] at (1,0) {$1$};
\node[black, below, scale=1] at (2,0) {$2$};
\node[black, below, scale=1] at (3,0) {$3$};
\node[black, below, scale=1] at (4,0) {$4$};
\node[black, below, scale=1] at (5,0) {$5$};
\end{tikzpicture}
}
% ---------------------------------------------------------------------------

\renewcommand{\Inflate}{2.0}
\begin{figure}[ht]
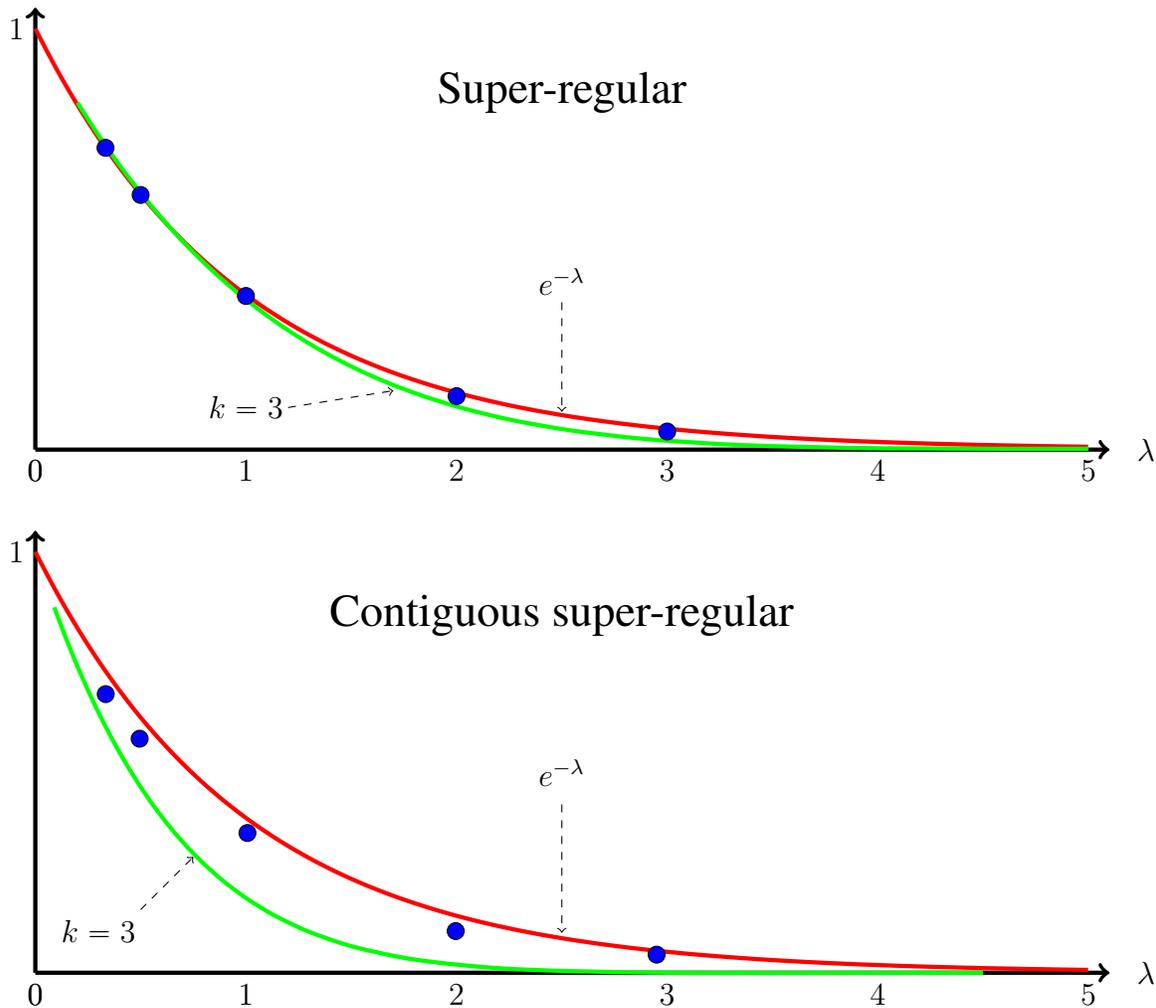


\begin{center}
\SRplot
\end{center}

\begin{center}
\CSRplot
\end{center}

\caption{
Plots of the probability of a random $k\times k$ matrix 
over $\fq$ being super-regular or contiguous super-regular.
The red curves are $e^{-\lambda}$.
The green curves are counts of $3\times 3$ matrices based on 
Theorem~\ref{thm:3x3}(a) for super-regular and (b) for contiguous super-regular.
Each blue dot is the fraction of $1000$ randomly selected $k\times k$ 
matrices over $\fq$ that were 
super-regular (where $\lambda = \frac{1}{q}\binom{2k}{k}$)
or 
contiguous super-regular (where $\lambda = \frac{1}{3} k^3q^{-1}$),
with $k=10$.
}
\label{fig:conjecture}
\end{figure}

% ---------------------------------------------------------------------------
\clearpage

\section{Appendix - Some lemmas}
\label{sec:Appendix}

The lemmas in this section are used
to prove our main results.

We examine the probability that a (uniformly) randomly chosen linear code is MDS.
In principle, 
one could choose a linear code randomly from an exhaustive list of all linear codes,
but this would have impractical complexity.
A number of alternative ways exist,
such as randomly choosing a generator matrix or parity check matrix,
optionally in systematic form.
Lemma~\ref{lem:multiple-defs-of-code} 
demonstrates that these variations are asymptotically equivalent
in terms of the probabilities of the resulting codes being MDS.
First, we give Lemma~\ref{lem:MDS-multiply} 
whose proof follows from basic linear algebra and is omitted.
This lemma is used in the proof of Lemma~\ref{lem:multiple-defs-of-code}.

\begin{lemma} 
Let $A$ be an invertible $k\times k$ matrix over $\fq$.
If $G$ is a $k\times n$ generator matrix for an MDS code over $\fq$, 
then so is $AG$ for the same code.
\label{lem:MDS-multiply}
\end{lemma}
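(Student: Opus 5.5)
The plan is to prove the statement using the characterization of MDS generator matrices in Lemma~\ref{lem:MDS-conditions}(b): a $k\times n$ matrix of full row rank $k$ generates an MDS code exactly when every $k$ of its columns are linearly independent, i.e., every $k\times k$ submatrix formed by choosing $k$ columns is nonsingular. We must show two things: $AG$ is again a generator matrix of the same code, and $AG$ is MDS.

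For the first claim, note that the row space of $AG$ is contained in the row space of $G$, since each row of $AG$ is a linear combination of the rows of $G$ with coefficients taken from a row of $A$. Conversely, because $A$ is invertible we have $G = A^{-1}(AG)$, so the row space of $G$ is contained in the row space of $AG$. Hence both matrices have the same row span, namely the code $C$. In addition, $\mathrm{rank}(AG)=\mathrm{rank}(G)=k$, so $AG$ has full row rank and is a legitimate generator matrix for $C$.

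For the MDS property, there are two routes. The simpler one is that being MDS is a property of the code, defined through its minimum distance. Since $AG$ generates the same code $C$, and $C$ is MDS because $G$ is an MDS generator matrix, $AG$ is MDS by definition.

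Alternatively, to argue at the matrix level, fix any set $S$ of $k$ column indices and let $G_S$ denote the corresponding $k\times k$ submatrix of $G$. The corresponding submatrix of $AG$ is $(AG)_S = AG_S$, because selecting columns commutes with left multiplication. Then $\det(AG_S)=\det(A)\det(G_S)\neq 0$, since $A$ is invertible and $G_S$ is nonsingular by Lemma~\ref{lem:MDS-conditions}(b). Applying Lemma~\ref{lem:MDS-conditions}(b) again shows that $AG$ is MDS.

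There is no real obstacle in this argument. The only point needing care is making sure the equivalence in Lemma~\ref{lem:MDS-conditions} is applied to a genuine generator matrix, and that is guaranteed by the rank argument above.
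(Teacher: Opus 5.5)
Your proof is correct and is essentially the argument the paper has in mind. The paper omits the proof, saying only that it follows from basic linear algebra; your observation that $G=A^{-1}(AG)$ forces $G$ and $AG$ to have the same row space, so that the code and hence its MDS property are unchanged, is that routine argument, and the check $\det(AG_S)=\det(A)\det(G_S)\neq 0$ via Lemma~\ref{lem:MDS-conditions}(b) is a valid but unnecessary second confirmation.
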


%---------------------------
The following lemma
shows that the probability 
a random $n\times n$ matrix is non-singular is asymptotically one,
no matter how
the field size $q$ and the matrix dimension $n$ tend to infinity.

\begin{lemma}
If $q$ and $k$ are positive integers, then
\begin{align*}
\lim_{q,k\to\infty} 
\prod_{i=1}^k \left( 1 - q^{-i} \right) = 1.
\end{align*}
\label{lem:nonsingular-prob}
\end{lemma}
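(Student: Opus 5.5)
We must show that $\prod_{i=1}^k (1-q^{-i}) \to 1$ as $q,k\to\infty$ jointly, with no assumed relation between the rates. The plan is to sandwich the product between an explicit lower bound depending only on $q$ and the trivial upper bound $1$. The upper bound is immediate, since every factor lies in $(0,1]$ for $q\ge 2$.

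For the lower bound, apply the union-bound inequality already used in the proof of Lemma~\ref{lem:fullColRankProb}, namely $\prod_i (1-x_i) \ge 1-\sum_i x_i$ for $x_i\in[0,1]$ (Lemma~\ref{lem:inequality}), with $x_i = q^{-i}$. This gives
\begin{align*}
\prod_{i=1}^k \left(1-q^{-i}\right) \ge 1 - \sum_{i=1}^k q^{-i} \ge 1 - \sum_{i=1}^\infty q^{-i} = 1 - \frac{1}{q-1}.
\end{align*}
The lower bound is independent of $k$. Hence for every $\epsilon>0$ there is $q_0$ such that for all $q\ge q_0$ and all $k\ge 1$,
\begin{align*}
1-\epsilon \le \prod_{i=1}^k \left(1-q^{-i}\right) \le 1 .
\end{align*}
This uniformity in $k$ yields the joint limit, regardless of how $q$ and $k$ grow together. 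It even yields the stronger statement that the limit is $1$ as $q\to\infty$ with $k$ arbitrary.

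There is essentially no obstacle. The only point requiring care is making the bound uniform in $k$, which is exactly why we extend the finite sum to the infinite geometric series rather than keeping a $k$-dependent term. If one wishes to avoid citing Lemma~\ref{lem:inequality}, the inequality $\prod(1-x_i)\ge 1-\sum x_i$ follows by a one-line induction on $k$: $(1-x_{k+1})\prod_{i\le k}(1-x_i) \ge (1-x_{k+1})(1-\sum_{i\le k}x_i) \ge 1-\sum_{i\le k+1}x_i$, where the first step is valid because $1-x_{k+1}\ge 0$.
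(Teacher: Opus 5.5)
Your proof is correct, but it takes a different route from the paper. The paper takes logarithms and expands $-\ln(1-x)$ as a power series:
\[
-\ln\prod_{i=1}^k\left(1-q^{-i}\right)=\sum_{j\ge 1}\frac{1}{j}\cdot\frac{1-q^{-jk}}{q^j-1}\le\sum_{j\ge 1}\frac{2}{q^j}=\frac{2}{q-1},
\]
so the product is at least $e^{-2/(q-1)}$, uniformly in $k$.

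You instead apply the Weierstrass product inequality (Lemma~\ref{lem:inequality}) directly. This gives the additive bound $1-\frac{1}{q-1}$, which is also uniform in $k$. Your route is more elementary: it needs no logarithmic series and no interchange of the double sum. It also gives a slightly better constant. It shows that the lemma is just the case $d=m=k$, $j=0$ of the bound $1-\prod_{i}(1-q^{i-d})\le\frac{1}{q-1}$ already proved in Lemma~\ref{lem:fullColRankProb}, so it avoids repeating that work.

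The paper's logarithmic approach does produce an exact series for $-\ln\prod$ before any bounding, which could be useful for finer asymptotics. For the limit as stated, though, the two arguments are equally complete. In both, the key point is the one you identified: the lower bound does not depend on $k$, and that is what makes the joint limit automatic.
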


\begin{proof}
\begin{align}
-\ln \prod_{i=1}^k \left( 1 - q^{-i} \right) 
 &= -\sum_{i=1}^k \ln \left( 1 - q^{-i} \right)
 =  \sum_{i=1}^k \sum_{j=1}^\infty \frac{q^{-ij}}{j} 
 =  \sum_{j=1}^\infty \frac{1}{j} \sum_{i=1}^k q^{-ij}
  = \sum_{j=1}^\infty \frac{1}{j} \cdot  \frac{1-q^{-jk}}{q^j-1} \notag\\
&\le \sum_{j=1}^\infty \frac{1}{q^j-1} 
 \le \sum_{j=1}^\infty \frac{2}{q^j} 
= \frac{2}{q-1}
\to 0\ \ \ \text{as}\ q\to\infty .
\end{align}
Therefore, for any fixed $k$ 
$$\lim_{q \to \infty} \pi (q,k) = 1$$
and also
$$\lim_{q,k \to \infty} \pi (q,k) = 1.$$
\end{proof}

\begin{lemma}
\label{lem:multiple-defs-of-code}
Define the following probabilities:
\begin{itemize}
\setlength\itemsep{-0.15cm}
\item $p_1 = $ probability a random $[n,k]$ linear code over $\fq$ is MDS.
\item $p_2 = $ probability a random $k{\times} n$ matrix over $\fq$ is an MDS generator matrix.
\item $p_3 = $ probability a random $k{\times} n$ matrix over $\fq$ of rank $k$ is an MDS generator matrix.
\item $p_4 = $ probability a random $k{\times} n$ systematic matrix over $\fq$ is an MDS generator matrix.
\item $p_5 = $ probability a random $(n{-}k){\times} n$ matrix over $\fq$ is an MDS parity check matrix.
\item $p_6 = $ probability a random $(n{-}k){\times} n$ matrix over $\fq$ of rank $n{-}k$ is an MDS parity check matrix.
\item $p_7 = $ probability a random $(n{-}k){\times} n$ systematic matrix over $\fq$ is an MDS parity check matrix.
\end{itemize}
Then, as $q,n,k\To\infty$,
if any of the limits of 
$p_1, p_2, p_3, p_4, p_5, p_6, p_7$ 
exist, then they all exist and are equal.
\end{lemma}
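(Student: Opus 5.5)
The plan is to express every $p_i$ in terms of $p_4$ and a factor that tends to $1$. The main tool is Lemma~\ref{lem:nonsingular-prob}: a random $k\times k$ matrix over $\fq$ is nonsingular with probability $\pi(q,k)\To 1$ as $q,k\To\infty$. The same holds for any $k\times n$ matrix with $k\le n$, since having full row rank is implied by a fixed $k\times k$ block being nonsingular. The argument rests on three exact identities.

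\emph{First, $p_1=p_4$.} Each $[n,k]$ code has a unique row-reduced echelon generator matrix. The systematic codes, i.e.\ those whose first $k$ coordinates form an information set, correspond bijectively to $k\times(n-k)$ matrices $A$ via $[I|A]$. An MDS code has every $k$-set of coordinates as an information set, so every MDS code is systematic. Hence the number of MDS codes is $\CountMDS{q}{n}{k}=p_4\,q^{k(n-k)}$, and $p_1=p_4\,q^{k(n-k)}/N$, where $N$ is the Gaussian binomial count of all $[n,k]$ codes. The ratio $q^{k(n-k)}/N$ is the fraction of codes that are systematic. This fraction equals the probability that a uniformly random rank-$k$ $k\times n$ matrix has a nonsingular left $k\times k$ block, which is at least $\pi(q,k)\To 1$. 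So $p_1/p_4\To 1$.

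\emph{Second, $p_2$ and $p_3$.} A rank-$k$ matrix $G$ is an MDS generator matrix iff all its maximal minors are nonzero. Such a $G$ is exactly $M[I|A]$ with $M$ invertible and $A$ super-regular, using Lemma~\ref{lem:MDS-multiply} together with Lemma~\ref{lem:MDS-conditions}. The map $(M,A)\mapsto M[I|A]$ is injective, so the number of MDS generator matrices is $|GL_k(\fq)|\cdot p_4q^{k(n-k)}$. Therefore
\begin{align*}
p_2 = \pi(q,k)\,p_4, \qquad p_3 = \frac{\pi(q,k)}{P(\text{rank } k)}\,p_4 ,
\end{align*}
and $P(\text{rank } k)\in[\pi(q,k),1]$, so both ratios $p_2/p_4$ and $p_3/p_4$ tend to $1$.

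\emph{Third, the parity-check side.} A code is MDS iff its dual is MDS, and the dual of an $[n,k]$ code is an $[n,n-k]$ code. Running the same counting with $n-k$ in place of $k$ shows that $p_5$, $p_6$, $p_7$ are asymptotically equal to one another. The systematic parity check matrix $[-A^t|I_{n-k}]$ is MDS exactly when $[I|A]$ is, since Lemma~\ref{lem:MDS-conditions}(c) applies and super-regularity is preserved under transpose and negation. So $p_7=p_4$ exactly.

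The main obstacle is that $n-k$ need not tend to infinity, while Lemma~\ref{lem:nonsingular-prob} asks for both parameters large. This is handled by the bound in that proof: $-\ln\pi(q,m)\le 2/(q-1)$ uniformly in $m$, so $\pi(q,m)\To1$ whenever $q\To\infty$, whatever $m$ does. The remaining care is the case where the limits are $0$. Because each relation is of the form $p_i=c_i\,p_4$ with $c_i\To1$, existence and equality of one limit transfer to all the others, including the limit $0$.
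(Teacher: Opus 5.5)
Your proof is correct. It shares the paper's key identity, the $GL_k$-orbit decomposition $|M_1|=|M_2|\cdot|GL_k(\fq)|$, which gives $p_2=\pi(q,k)\,p_4$. Where it differs is in how the remaining probabilities are tied together.

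\textbf{How the paper does it.} It gets $p_1=p_3$ and $p_1=p_6$ exactly, because every $[n,k]$ code has the same number of generator matrices ($|GL_k|$) and of parity-check matrices ($|GL_{n-k}|$). It also gets $p_4=p_7$ exactly. It then closes the gaps $p_2-p_3\to 0$ and $p_5-p_6\to 0$ additively, using the rank-deficiency bounds $P(\mathrm{rank}<k)\le 2q^{-(n-k)-1}$ and $P(\mathrm{rank}<n-k)\le 2q^{-k-1}$ from Lemma~\ref{lem:fullColRankProb}.

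\textbf{How you do it.} You write every $p_i$ as an exact multiple $c_i\,p_4$ with explicit $c_i$. For example, $p_3=\pi(q,k)\,p_4/P(\mathrm{rank}\ k)$ and $p_5=\pi(q,n-k)\,p_4$. You reach $p_1$ by counting systematic codes against the Gaussian binomial, and you reach the parity-check side through duality.

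\textbf{Trade-offs.} Your route gives closed-form relations, and it makes the transfer of limits in both directions transparent, including the limit $0$. However, it creates the obstacle you flag: controlling $\pi(q,n-k)$ when $n-k$ may stay bounded. You resolve this correctly with the uniform bound $-\ln\pi(q,m)\le 2/(q-1)$. The paper's route never needs $\pi(q,n-k)$ at all, because it links $p_6$ directly to $p_1$ by fiber counting and only needs $q\to\infty$ on the parity side. Your argument also uses the fact that $C$ is MDS if and only if $C^\perp$ is MDS. The paper does not state this, but it follows from Lemma~\ref{lem:MDS-conditions}(b),(d), since a parity-check matrix of $C$ is a generator matrix of $C^\perp$.
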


\begin{proof}
\ 

\begin{itemize}
\item
Every $k\times n$ matrix over $\fq$ of rank $k$ generates exactly one
$[n,k]$ linear code, namely the row space of the matrix.
In fact,
any $k\times k$ invertible matrix multiplied by the $k\times n$ matrix yields
another $k\times n$ matrix of rank $k$ generating the same code.
So each $[n,k]$ linear code has a collection of the same number of generator matrices,
namely the number of invertible $k\times k$ matrices over $\fq$.
Thus, the probability that a random $[n,k]$ linear code over $\fq$ is MDS
is the same as the probability that a random $k\times n$ matrix over $\fq$ of rank $k$ is an
MDS generator matrix. 
That is, $p_1=p_3$ for all $q$.
A similar argument shows $p_1 = p_6$ for all $q$.

\item
A $k\times n$ matrix $[I_k | A]$ is a systematic generator matrix for an $[n,k]$ linear code
if and only if the $(n-k)\times n$ matrix $[-A^t | I_{n-k}]$ is a systematic parity check matrix 
for the same code.
Thus, $p_4 = p_7$.

\item
% $\lim p_2 = \lim p_3$.
Let $A$ be a random $k\times n$ matrix over $\fq$.
Note that the rank of $A$ equals $k$ 
if and only if the $k$ rows of $A$ are linearly independent.
Then Lemma~\ref{lem:fullColRankProb} 
(taking $d=n$, $m=k$, and $j=0$)
and the fact that $n\ge k$ 
imply
\begin{align*}
P(\text{rank}(A)<k)
&\le 2\cdot q^{-n+k-1} 
\To 0.
\end{align*}
Thus,
\begin{align*}
p_2 &=
P(A\ \text{is MDS})\\
&=        P(A\ \text{is MDS} \mid \text{rank}(A)=k)\underbrace{P(\text{rank}(A)=k)}_{\To 1}\\
&\ \ \  + P(A\ \text{is MDS} \mid \text{rank}(A)<k)\underbrace{P(\text{rank}(A)<k)}_{\To 0}\\
&\therefore p_2 - p_3 \To 0.
\end{align*}
This means $\lim p_2 = \lim p_3$ if either limit exists.
A similar argument for parity check matrices implies $\lim p_5 = \lim p_6$
if either of those two limits exists.

\item
% ---------------------------------------------------------------------------
Let $M_1$ denote the set of all $k\times n$ generator matrices
of MDS codes over $\fq$,
let $M_2$ be the set of all $k\times (n-k)$ super-regular matrices over $\fq$, and
let $M'$ denote the set of all $k\times k$ invertible matrices over $\fq$.

For each $B\in M_2$, 
define $B^* = \{ [A|AB]: A \in M'\}$
and note that its elements are all generator matrices in $M_1$ for the same MDS code
(by Lemma~\ref{lem:MDS-conditions} %implies that $A$ is invertible 
and Lemma~\ref{lem:MDS-multiply}). % implies $[I|A^{-1}B]\in M_2$.
Furthermore, the sets $B^*$ (where $B\in M_2$)
partition $M_1$ into sets each of size $|M'|$, so
$|M_1|=|M_2|\cdot |M'|$.
Therefore,
\begin{align}
p_2 = \frac{|M_1|}{q^{kn}} 
&= \frac{|M_2|}{q^{k(n-k)}} \cdot \frac{|M'|}{q^{k^2}}
= p_4\cdot \frac{|M'|}{q^{k^2}}\notag\\
\frac{p_2}{p_4} &= \frac{|M'|}{q^{k^2}} \To 1 \label{eq:3000}
\end{align}
where 
\eqref{eq:3000} follows 
from Lemma~\ref{lem:nonsingular-prob}.
Thus,
$\lim p_2 = \lim p_4$
if either of these limits exists.
\end{itemize}
\end{proof}

% ---------------------------------------------------------------------------

Lemma~\ref{lem:calc} follows from elementary calculus, and its proof is omitted.
It is used in the proof of Lemma~\ref{lem:convergence}.
%
%----------------------------------
\begin{lemma}\label{lem:calc}\ 
\begin{enumerate}[(a)]
\item For all $z\in(0,\infty)$, we have $\ln z \le z-1$.
\item
For all $\epsilon > 0$ there exists $w \in (0,1)$ such that for all $z \in (w,1)$,
we have $(1+\epsilon) (z-1) \le \ln z$.
\end{enumerate}
\end{lemma}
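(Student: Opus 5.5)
Part (a) is immediate from the standard inequality $\ln z \le z-1$. Define $\phi(z) = z - 1 - \ln z$ on $(0,\infty)$. Then $\phi(1)=0$ and $\phi'(z) = 1 - 1/z$. This derivative is negative on $(0,1)$ and positive on $(1,\infty)$, so $\phi$ attains its global minimum $0$ at $z=1$. Hence $\phi \ge 0$ everywhere.

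For part (b), fix $\epsilon>0$. We need $\ln z \ge (1+\epsilon)(z-1)$ for all $z$ in some interval $(w,1)$. Set
\begin{align*}
\psi(z) = \ln z - (1+\epsilon)(z-1).
\end{align*}
Then $\psi(1) = 0$ and $\psi'(z) = 1/z - (1+\epsilon)$.

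The plan is to show that $\psi$ is decreasing on an interval just to the left of $1$. Since $\psi(1)=0$, this gives $\psi(z) \ge 0$ on that interval. The sign of $\psi'$ is easy to read off: $\psi'(z) < 0$ exactly when $z > 1/(1+\epsilon)$.

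So take $w = 1/(1+\epsilon)$, which lies in $(0,1)$. On $(w,1)$ the function $\psi$ is strictly decreasing. Therefore, for every $z \in (w,1)$ we get $\psi(z) > \psi(1) = 0$, which is the desired inequality. The same conclusion can also be obtained from the mean value theorem. For $z\in(w,1)$ we have $\ln z = (z-1)/\xi$ for some $\xi \in (z,1)$. Since $z-1<0$ and $1/\xi < 1/w = 1+\epsilon$, this gives $\ln z \ge (1+\epsilon)(z-1)$.

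There is no real obstacle here. The only points needing care are the direction of the inequality, since $z-1<0$, and checking that the chosen $w$ is strictly between $0$ and $1$. Both hold because $\epsilon>0$.
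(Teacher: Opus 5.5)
Your proof is correct: part (a) follows from the sign of $\phi'(z) = 1 - 1/z$, and in part (b) your explicit choice $w = 1/(1+\epsilon) \in (0,1)$ works because $\psi'(z) < 0$ on $(w,1)$ and $\psi(1) = 0$. Equivalently, in your mean value theorem version, $1/\xi < 1+\epsilon$ for $\xi \in (w,1)$ and $z-1<0$. The paper omits this proof, saying only that the lemma follows from elementary calculus; your argument is that elementary calculus argument, and it also gives $w$ explicitly.
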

%----------------------------------

The following lemma is well known
and is used in the proofs of 
Lemma~\ref{lem:two-bad-values-strong2} 
and
Lemma~\ref{lem:Cauchy2}.

\begin{lemma} \label{lem:Delta}
If $\Delta \in [0,1)$ and $s \ge 1$, then
$1-s\Delta \le (1-\Delta)^s$.
\end{lemma}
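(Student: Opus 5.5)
The cleanest route is the standard one: compare the function $g(\Delta) = (1-\Delta)^s - (1 - s\Delta)$ on $[0,1)$ with its value at the left endpoint. Since $g(0) = 1 - 1 = 0$, it suffices to show that $g$ is nondecreasing on $[0,1)$.

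First, differentiate:
\begin{align*}
g'(\Delta) = s\left(1 - (1-\Delta)^{s-1}\right).
\end{align*}
Second, use the hypotheses $s \ge 1$ and $0 \le 1-\Delta \le 1$. Together they give $(1-\Delta)^{s-1} \le 1$, so $g'(\Delta) \ge 0$ on $[0,1)$. (When $s=1$ the two sides are identically equal, so that case is trivial.) Third, the mean value theorem, or simply integrating $g'$ from $0$ to $\Delta$, yields $g(\Delta) \ge g(0) = 0$, which is exactly the claimed inequality $1 - s\Delta \le (1-\Delta)^s$.

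An alternative argument is convexity. The map $x \mapsto x^s$ is convex on $[0,\infty)$ for $s \ge 1$, so it lies above its tangent line at $x=1$, namely $1 + s(x-1)$. Substituting $x = 1-\Delta$ gives the inequality directly. This is Bernoulli's inequality for real exponents $s \ge 1$ and base $1-\Delta \in (0,1]$.

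The only point requiring care is that $s$ is real rather than an integer, so the usual binomial-expansion or induction proof does not apply. The derivative argument above avoids this issue. One also needs $1-\Delta > 0$ so that $(1-\Delta)^{s-1}$ is well defined for non-integer $s$; the hypothesis $\Delta < 1$ guarantees this.
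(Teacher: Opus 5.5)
Your proof is correct. You have $g(0)=0$, and $g'(\Delta)=s\bigl(1-(1-\Delta)^{s-1}\bigr)\ge 0$ on $[0,1)$ because the base $1-\Delta$ lies in $(0,1]$ and the exponent $s-1$ is nonnegative. Your tangent-line argument via convexity of $x\mapsto x^s$ is an equally valid one-line alternative. The paper gives no proof of its own; it just calls the lemma well known (it is Bernoulli's inequality), so there is no argument to compare against.

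Handling real $s$ is necessary here. In the proof of Lemma~\ref{lem:Cauchy2} the paper applies this lemma with $\Delta=1/q$ and exponent $s=\min(i,j)\bigl(1-2\min(i,j)^2/q\bigr)$, which is generally non-integer; this is why the paper first arranges $s\ge 1$. An integer-exponent proof by binomial expansion or induction would therefore not cover that use. Only in Lemma~\ref{lem:two-bad-values-strong2} is $s=\min(i,j)$ an integer.
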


The following lemma is used in the proofs of
Lemma~\ref{lem:two-bad-values-strong2},
Lemma~\ref{lem:strong-inf},
and
Lemma~\ref{lem:Cauchy2}.

%----------------------------------
\begin{lemma} \label{lem:convergence}
Let $\{a_n\}$ and  $\{b_n\}$ be positive real sequences such that
$b_n \To 0$ as $n \To \infty$,
and let $\Delta \ge 0$.
If\ \  $a_n b_n \To \Delta$,
then $(1-b_n)^{a_n} \To e^{-\Delta}$.
\end{lemma}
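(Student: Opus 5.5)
We take logarithms and squeeze $\ln(1-b_n)$ between two multiples of $-b_n$, using Lemma~\ref{lem:calc}. Since $b_n\To 0$, for all sufficiently large $n$ we have $b_n\in(0,1)$. So $z_n = 1-b_n\in(0,1)$ and $(1-b_n)^{a_n} = \exp\{a_n\ln(1-b_n)\}$ is well defined. By continuity of the exponential, it suffices to show that
\begin{align*}
a_n \ln(1-b_n) \To -\Delta .
\end{align*}

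For the upper bound, apply Lemma~\ref{lem:calc}(a) with $z=1-b_n$ to get $\ln(1-b_n)\le -b_n$. Multiplying by $a_n>0$ gives
\begin{align*}
a_n\ln(1-b_n)\le -a_nb_n,
\end{align*}
whose right side tends to $-\Delta$. Hence $\limsup a_n\ln(1-b_n)\le -\Delta$.

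For the lower bound, fix $\epsilon>0$ and let $w\in(0,1)$ be as given by Lemma~\ref{lem:calc}(b). Because $b_n\To 0$, we have $1-b_n\in(w,1)$ for all large $n$, and therefore $\ln(1-b_n)\ge -(1+\epsilon)b_n$. Multiplying by $a_n>0$ gives
\begin{align*}
a_n\ln(1-b_n)\ge -(1+\epsilon)a_nb_n \To -(1+\epsilon)\Delta .
\end{align*}
So $\liminf a_n\ln(1-b_n)\ge -(1+\epsilon)\Delta$ for every $\epsilon>0$. Letting $\epsilon\To 0$ gives $\liminf\ge -\Delta$, which matches the upper bound and yields the limit.

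The only mildly delicate point is the case $\Delta=0$. There the same two bounds apply, since $(1+\epsilon)\cdot 0=0$, so no separate argument is needed. The other point is that the bounds hold only eventually, which is harmless for limits. No step should be a real obstacle; the main thing to get right is choosing $w$ before passing to large $n$, so that the eventual inequality $1-b_n>w$ can be invoked.
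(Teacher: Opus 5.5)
Your proposal is correct and matches the paper's proof: both squeeze $\ln(1-b_n)$ between $-(1+\epsilon)b_n$ and $-b_n$ using Lemma~\ref{lem:calc}(b) and (a), multiply by $a_n$, and let $\epsilon\To 0$. The only cosmetic difference is that you take $\liminf$ and $\limsup$ in the log domain and exponentiate afterward, whereas the paper exponentiates first and bounds $(1-b_n)^{a_n}$ directly between $e^{-(1+\epsilon)a_nb_n}$ and $e^{-a_nb_n}$.
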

\begin{proof}
Let $\epsilon > 0$.
For sufficiently large $n$, 
\begin{align}
-b_n (1+\epsilon) 
&\le \ln(1 - b_n) \label{eq:3}\\ 
&\le -b_n \label{eq:4}\\
e^{-(1+\epsilon) a_n b_n} 
&\le (1-b_n)^{a_n} \label{eq:5}\\
&\le e^{-a_n b_n}  \label{eq:6}
\end{align}
where
\eqref{eq:3} follows from Lemma~\ref{lem:calc}(b);
\eqref{eq:4} follows from Lemma~\ref{lem:calc}(a);
\eqref{eq:5} follows from \eqref{eq:3};
and
\eqref{eq:6} follows from \eqref{eq:4};
and therefore
\begin{align*}
e^{-(1+\epsilon)\Delta} 
\le \liminf_{n \To \infty} (1-b_n)^{a_n} 
\le \limsup_{n \To \infty} (1-b_n)^{a_n}
\le e^{-\Delta}.
\end{align*}
Taking $\epsilon \To 0$ gives
\begin{align*} %\label{Eq:part3_1}
\lim_{n \To \infty} (1-b_n)^{a_n} = e^{-\Delta}.
\end{align*}
\end{proof}

% ---------------------------------------------------------------------------

The following lemma 
counts the number of contiguous square submatrices of an $n\times n$ matrix 
and is used in the proof of 
Lemma~\ref{lem:strong-inf}.

\begin{lemma}
\begin{align*}
\sum_{i=1}^k \sum_{j=1}^k \minij
&= \frac{k^3}{3} + \frac{k^2}{2} + \frac{k}{6} .
\end{align*}
\label{lem:num_connected}
\end{lemma}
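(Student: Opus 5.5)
The plan is to evaluate the double sum by splitting it according to whether $i=j$, $i<j$, or $i>j$. By symmetry of $\minij$ in $i$ and $j$, this gives
\begin{align*}
\sum_{i=1}^k \sum_{j=1}^k \minij
&= \sum_{i=1}^k i + 2\sum_{1\le i<j\le k} i .
\end{align*}
The diagonal contributes $\frac{k(k+1)}{2}$. For the off-diagonal part, we fix the smaller index $i$ and count the values $j\in\{i+1,\dots,k\}$; there are $k-i$ of them. Hence
\begin{align*}
\sum_{1\le i<j\le k} i = \sum_{i=1}^k i(k-i) = k\cdot\frac{k(k+1)}{2} - \frac{k(k+1)(2k+1)}{6}.
\end{align*}
Here we use the standard closed forms for $\sum_{i=1}^k i$ and $\sum_{i=1}^k i^2$.

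Combining the pieces, we simplify
\begin{align*}
\frac{k(k+1)}{2} + k^2(k+1) - \frac{k(k+1)(2k+1)}{3}
&= k(k+1)\left(\frac{1}{2} + k - \frac{2k+1}{3}\right)
= \frac{k(k+1)(2k+1)}{6},
\end{align*}
which expands to $\frac{k^3}{3}+\frac{k^2}{2}+\frac{k}{6}$.

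An alternative check, matching the paper's interpretation: $\minij$ counts the contiguous square submatrices anchored at $(i,j)$. Counting them instead by size $m$ gives $(k-m+1)^2$ submatrices of size $m$, so the sum equals $\sum_{m=1}^k (k-m+1)^2=\sum_{m=1}^k m^2$, the same closed form.

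There is no real obstacle here. The only care needed is the bookkeeping of the diagonal versus off-diagonal terms, and the final algebraic simplification can be verified at $k=1$ (giving $1$) and $k=2$ (giving $5$).
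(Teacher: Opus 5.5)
Your proof is correct and follows essentially the same route as the paper: split off the diagonal, use the symmetry of $\min(i,j)$ to double the $i<j$ part, reduce that part to $\sum i(k-i)$, and apply the standard closed forms for $\sum i$ and $\sum i^2$. Factoring directly to $k(k+1)(2k+1)/6$ is a slightly tidier finish than the paper's, and your count $\sum_{m=1}^k (k-m+1)^2=\sum_{m=1}^k m^2$ of contiguous square submatrices by size is a valid independent check.
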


\begin{proof}
\begin{align*}
\sum_{i=1}^k \sum_{j=1}^k \minij
%
%   (i < j)   +   (j < i)   +  (i = j)   =  2  (i < j) + (i = j)
&= \sum_{i<j} \minij + \sum_{j<i} \minij + \sum_{i=j} \minij\\
&= \left( 2\sum_{j=1}^{k-1} \sum_{i=j+1}^k j \right) + \sum_{i=1}^k i\\
&= 2\sum_{j=1}^{k-1} j(k-j) + \sum_{j=1}^k j\\
&= k + (2k+1)\sum_{j=1}^{k-1} j - 2\sum_{j=1}^{k-1} j^2\\
&= k + (2k+1)\frac{k(k-1)}{2} - \frac{k(k-1)(2k-1)}{3}\\
&=\frac{k^3}{3} + \frac{k^2}{2} + \frac{k}{6}.
\end{align*}
\end{proof}

%---------------------------------------------------------------------------

The following lemma is used in the proof of Lemma~\ref{lem:two-bad-values-strong2}.

\begin{lemma} \label{lem:uniform}
Suppose $B$ is a matrix with entries in $\fq$ and $x$ is a random vector
with length equal to the width of $B$ and
whose entries are iid uniform on $\fq$.
If $B$ has full row-rank,
then the entries of $Bx$ are iid uniform on $\fq$.
\end{lemma}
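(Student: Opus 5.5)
The plan is to show that the map $x \mapsto Bx$ pushes the uniform distribution on $\fq^n$ forward to the uniform distribution on $\fq^m$. Here $B$ is $m\times n$ with full row rank, so $m\le n$. Uniformity of $Bx$ on $\fq^m$ is equivalent to its entries being iid uniform on $\fq$, since the uniform measure on a product of finite sets is the product of the uniform measures.

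First, because $B$ has full row rank, the linear map $x\mapsto Bx$ from $\fq^n$ to $\fq^m$ is surjective: its image is the column space of $B$, whose dimension equals $\rank{B}=m$.

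Second, for any $y\in\fq^m$, the fiber $\{x : Bx = y\}$ is nonempty by surjectivity. It is therefore a coset $x_0 + \ker B$ of the kernel. By rank--nullity, $\dim\ker B = n-m$, so every fiber has exactly $q^{n-m}$ elements. Since $x$ is uniform on $\fq^n$ (its entries are iid uniform), we get
\begin{align*}
P(Bx=y) = \frac{q^{n-m}}{q^n} = q^{-m}
\end{align*}
for every $y$. Thus $Bx$ is uniform on $\fq^m$.

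Finally, a uniform distribution on $\fq^m$ factors as $q^{-m}=\prod_{i=1}^m q^{-1}$ over coordinates. This factorization shows that the marginals are uniform and that the joint law is the product of the marginals, so the entries are iid uniform.

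There is no real obstacle here. The only point needing care is the surjectivity step, which is exactly where the full row-rank hypothesis enters; without it the image would be a proper subspace and the entries would be dependent.
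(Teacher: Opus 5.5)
Your proof is correct, and it differs from the paper's in how it handles a non-square $B$. The paper first reduces to the square case. It extends the $m$ linearly independent rows of $B$ to an invertible $n\times n$ matrix $B'$, and uses that $x\mapsto B'x$ is a bijection of $\fq^n$, so $P(B'x=s)=P(x=(B')^{-1}s)=q^{-n}$. It then notes that the entries of $Bx$ are a sub-collection of the iid uniform entries of $B'x$.

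You instead work with the rectangular matrix directly. Surjectivity plus rank--nullity show that every fiber $\{x : Bx=y\}$ is a coset of $\ker B$ of size $q^{n-m}$.

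Your route avoids the extension step entirely. The paper suggests doing that extension ``e.g.\ using the Gram--Schmidt method,'' which is not really the right tool over $\fq$, since nonzero vectors can be self-orthogonal there; what is actually needed is plain extension of a linearly independent set to a basis. The paper's route, in return, needs only that an invertible map is a bijection, with no fiber counting. Both arguments finish with the same observation: a uniform law on $\fq^m$ is the product of uniform marginals.
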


\begin{proof}
First note that it suffices to restrict attention to
square matrices,
since any rectangular matrix $B$ that is wider than tall and 
with full-row rank can be extended to an invertible 
square matrix $B'$ by adding more rows to $B$ (e.g. using the Gram-Schmidt method).
The resulting square matrix will satisfy the lemma and the entries of $Bx$ are a subset of
the entries of $B'x$
and therefore will be iid uniform over $\fq$.

So without loss of generality suppose $B$ has dimensions $k\times k$,
is invertible,
and let $y=Bx$.
Let $s,t\in\fq^k$ be such that
$t = B^{-1}s$.
Then,
\begin{align*}
&P(y = s)
 = P(x = B^{-1}s)
 = q^{-k}
\end{align*}
so the vector $y$ is uniform on $\fq^k$,
and therefore each component $y_1, \dots , y_k$ 
is uniform on $\fq$.
Also,
\begin{align*}
P( y_1 = s_1, \dots, y_k = s_k)
&= P(x_1 = t_1, \dots, x_k = t_k)
 = P(x_1 = t_1) \cdots P(x_k = t_k)\\
&= q^{-k}
 = \prod_{i=1}^k P(y_i = s_i)
\end{align*}
so $y_1, \dots, y_k$ are independent.
\end{proof}

The next lemma gives us an inequality used in the proof of
Lemma~\ref{lem:fullColRankProb}.

\begin{lemma} 
\label{lem:inequality}
If $a_1, a_2, \dots \in [0,1]$
and $N$ is a positive integer,
then
$\displaystyle\prod_{i=1}^N (1-a_i) + \displaystyle\sum_{i=1}^N a_i \ge 1$.
\end{lemma}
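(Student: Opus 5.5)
We argue by induction on $N$, the claim being
\[
\prod_{i=1}^N (1-a_i) + \sum_{i=1}^N a_i \ge 1 .
\]
For $N=1$ the left side is $(1-a_1)+a_1=1$, so equality holds.

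For the inductive step, write $P_N=\prod_{i=1}^N(1-a_i)$ and $S_N=\sum_{i=1}^N a_i$, and assume $P_N+S_N\ge 1$. Then
\begin{align*}
P_{N+1}+S_{N+1} &= (1-a_{N+1})P_N + S_N + a_{N+1}\\
&= (P_N+S_N) + a_{N+1}(1-P_N).
\end{align*}
Since every $a_i\in[0,1]$, each factor $1-a_i$ lies in $[0,1]$, so $P_N\in[0,1]$. Hence $a_{N+1}(1-P_N)\ge 0$. Combined with the inductive hypothesis, this gives $P_{N+1}+S_{N+1}\ge 1$, completing the induction.

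The only point needing care is that the hypothesis $a_i\in[0,1]$ is what makes both $a_{N+1}\ge 0$ and $P_N\le 1$ hold. This is the Weierstrass product inequality, and the argument has no real obstacle. An alternative route would use the union bound $1-P_N = P\bigl(\bigcup_i E_i\bigr)\le\sum_i P(E_i)$ for independent events $E_i$ with $P(E_i)=a_i$, but the direct induction is self-contained and avoids constructing a probability space.
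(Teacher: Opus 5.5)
Your proof is correct and is essentially the paper's own argument: the same induction from the equality case $N=1$, using that passing from $N$ to $N+1$ changes the left side by exactly $a_{N+1}\bigl(1-\prod_{i=1}^{N}(1-a_i)\bigr)\ge 0$. You also spell out why this increment is nonnegative (each factor $1-a_i$ lies in $[0,1]$ and $a_{N+1}\ge 0$), which the paper leaves implicit.
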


\begin{proof}
If
$S_N = \displaystyle\prod_{i=1}^N (1-a_i) + \displaystyle\sum_{i=1}^N a_i$,
then $S_1 = 1$ and
\begin{align*}
S_N 
%&= (1-a_N)\prod_{i=1}^{N-1} (1-a_i) + a_N + \sum_{i=1}^{N-1} a_i\\
&= S_{N-1} + a_N \left(1 - \prod_{i=1}^{N-1} (1-a_i) \right)
\ge S_{N-1}.
\end{align*}
\end{proof}

% ---------------------------------------------------------------------------

%
\end{document}